\documentclass{article}
\usepackage[utf8]{inputenc}

\usepackage[margin=1in]{geometry}

\usepackage[backref=page]{hyperref}
\usepackage{float,graphicx,verbatim,fullpage,amssymb,amsmath,amsthm,enumerate,multicol, xspace,xcolor,mathtools,thmtools,thm-restate,cleveref,xspace,tikz,caption,subcaption,wasysym,enumitem}
\usepackage{algorithm, cite}
\usepackage[noend]{algpseudocode}
\usepackage{enumitem}
\usepackage{comment}
\usepackage{mdframed}

\usetikzlibrary{calc}

\providecommand{\nnreals}{\mathbb{R}_{\geq 0}}

\definecolor{darkpastelgreen}{rgb}{0.01, 0.75, 0.24}
\definecolor{bleudefrance}{rgb}{0.19, 0.55, 0.91}

\usepackage{mleftright}
\usepackage{hyperref}
    \hypersetup{ colorlinks=true, linkcolor=bleudefrance, filecolor=magenta, urlcolor=bleudefrance, citecolor=darkpastelgreen}

\usetikzlibrary{arrows.meta}
\tikzset{>={Latex[width=1.5mm,length=1.5mm]}}

\newfloat{procedure}{htbp}{loa}
\floatname{procedure}{Procedure}

\def\R{\mathbb{R}}
\def\E{\mathbb{E}}

\newcommand{\opt}{\textsf{OPT}}
\newcommand{\lp}{\textsf{LP}}
\newcommand{\apx}{\textsf{APX}}
\newcommand{\adv}{\textsf{ADV}}

\newtheorem{theorem}{Theorem}[section]

\newtheorem{lemma}[theorem]{Lemma}
\newtheorem{claim}[theorem]{Claim}
\newtheorem{corollary}[theorem]{Corollary}

\theoremstyle{definition}
\newtheorem{definition}[theorem]{Definition}

\usepackage{environ}

\newcommand{\eps}{\varepsilon}

\def\*#1{\mathbf{#1}}
\def\+#1{\mathcal{#1}}

\DeclareMathOperator{\trace}{trace}

\DeclareMathOperator{\argmin}{argmin}
\DeclareMathOperator{\poly}{poly}

\NewEnviron{problem}[1]{
	\begin{center}\fbox{\parbox{6in}{
				{\centering\scshape #1\par}
				\parskip=1ex
				\everypar{\hangindent=1em}
				\BODY
}}\end{center}}

\makeatletter
\newcommand*{\inlineequation}[2][]{
  \begingroup
    \refstepcounter{equation}
    \ifx\\#1\\
    \else
      \label{#1}
    \fi
    \relpenalty=10000 
    \binoppenalty=10000 
    \ensuremath{
      #2
    }
    ~\@eqnnum
  \endgroup
}
\makeatother

\makeatletter
\renewcommand*{\@fnsymbol}[1]{\textcolor{darkpastelgreen}{\ensuremath{\ifcase#1\or *\or \dagger\or \ddagger\or
 \mathsection\or \triangledown\or \mathparagraph\or \|\or **\or \dagger\dagger
   \or \ddagger\ddagger \else\@ctrerr\fi}}}
\makeatother

\def\final{0}  
\def\iflong{\iffalse}
\ifnum\final=0  
\newcommand{\rnote}[1]{{\color{red}[{Raymond: \bf #1}]\marginpar{\color{red}*}}}
\newcommand{\enote}[1]{{\color{green}[{\small Elena: \bf #1}]\marginpar{\color{red}*}}}
\newcommand{\ynote}[1]{{\color{purple}[{Young-San: \bf #1}]\marginpar{\color{red}*}}}
\newcommand{\samson}[1]{{\color{blue}[{Samson: \bf #1}]\marginpar{\color{red}*}}}

\else 
\newcommand{\rnote}[1]{}
\newcommand{\enote}[1]{}
\newcommand{\ynote}[1]{}
\fi

\providecommand{\email}[1]{\href{mailto:#1}{\nolinkurl{#1}\xspace}}

\title{Learning-Augmented Algorithms for Online Linear and Semidefinite Programming}


 \author{\hspace{0.6in}
 Elena Grigorescu\thanks{Purdue University. Supported in part by NSF CCF-1910659, NSF CCF-1910411 and NSF CCF-2228814.
 E-mail: \email{elena-g@purdue.edu}}
 \and
 Young-San Lin\thanks{University of Melbourne. Work done while at Purdue University. Supported in part by NSF CCF-1910659, NSF CCF-1910411 and NSF CCF-2228814.
 E-mail: \email{nilnamuh@gmail.com}}
 \and
 Sandeep Silwal\thanks{Massachusetts Institute of Technology. Supported by an NSF Graduate Research Fellowship under Grant No. 1745302, NSF TRIPODS program (award DMS-2022448), and Simons Investigator Award. 
 E-mail: \email{silwal@mit.edu}}\hspace{0.6in}
 \and
 Maoyuan Song\thanks{Purdue University.  Supported in part by NSF CCF-1910659, NSF CCF-1910411, NSF CCF-2127806, NSF CCF-2228814 and a Ross-Lynn Award. 
 E-mail: \email{song683@purdue.edu}}
 \and
 Samson Zhou\thanks{UC Berkeley and Rice University. Work supported by a Simons Investigator Award and by NSF CCF-1815840, and done in part while at Carnegie Mellon University, USA. 
 E-mail: \email{samsonzhou@gmail.com}}
 }

\date{\today}

\allowdisplaybreaks

\begin{document}

\maketitle

\begin{abstract}

Semidefinite programming (SDP) is a unifying framework that generalizes both linear programming and quadratically-constrained  quadratic programming, while also yielding efficient solvers, both in theory and in practice. However, there exist known impossibility results for approximating the optimal solution when constraints for covering SDPs arrive in an online fashion. In this paper, we study online covering linear and semidefinite programs in which the algorithm is augmented with advice from a possibly erroneous predictor. We show that if the predictor is accurate, we can efficiently bypass these impossibility results and achieve a constant-factor approximation to the optimal solution, i.e., consistency. On the other hand, if the predictor is inaccurate, under some technical conditions, we achieve results that match both the classical optimal upper bounds and the tight lower bounds up to constant factors, i.e., robustness. 

More broadly, we introduce a framework that extends both (1) the online set cover problem augmented with machine-learning predictors, studied by Bamas, Maggiori, and Svensson (NeurIPS 2020), and (2) the online covering SDP problem, initiated by Elad, Kale, and Naor (ICALP 2016).  Specifically, we obtain general online learning-augmented algorithms for covering linear programs with fractional advice and constraints, and initiate the study of learning-augmented algorithms for covering SDP problems. 

Our techniques are based on the primal-dual framework of Buchbinder and Naor (Mathematics of Operations Research, 34, 2009) and can be further adjusted to handle constraints where the variables lie in a bounded region, i.e., box constraints.

\end{abstract}

\section{Introduction}
In the classical online model, an input is iteratively given to an algorithm that must make irrevocable decisions at each point in time, while satisfying a number of changing constraints and optimizing a fixed predetermined objective.
A common metric for evaluating the quality of an online algorithm is the competitive ratio, which is the ratio between the ``cost'' of the algorithm and the best cost in hindsight, i.e., an optimal offline algorithm given the entire input sequence in advance.
In the context of the minimization problems we study in this paper, an online algorithm is $c$-competitive if its cost is at most a multiplicative $c$ factor more than the cost of the optimal solution.
Due to the irrevocable decisions, the changing constraints, or the number of possible different worst-case inputs, many online algorithms have undesirable competitive ratios that are impossible to improve upon without additional assumptions, e.g.,~\cite{alon2009online}.

Due to advances in the predictive ability of machine learning models, a natural approach to overcome these computational barriers is to incorporate models with predictions, e.g. models that predict outcomes based on historical data. 
Unfortunately, due to the lack of provable guarantees on worst-case input, these predictions can be embarrassingly inaccurate
when attempting to generalize to unfamiliar inputs,  as shown in \cite{SzegedyZSBEGF13}, or simply not even satisfy the given constraints~\cite{BamasMS20}. 
Thus, rather than blindly following an erroneous machine learning predictor, recent focus has shifted to studying algorithms that use the output of these models as \emph{advice}, and guarantee good competitive ratios both when the predictions are accurate, i.e., consistency, and when the predictions are poor, i.e., robustness.

Recently, \cite{BamasMS20} studied the learning-augmented online set cover problem and related problems, using their linear programming (LP) formulation to incorporate additional advice through a primal-dual approach. One drawback of their seminal work however, is that they assume both \emph{integral} constraints, as well as integral advice, which restricts the modeling capabilities of the framework; it is natural to ask how an online algorithm can be improved when the advice is given in terms of probability distribution or some other meaningful \emph{fractional} values. For example, for the online set cover problem, fractional advice can indicate how likely a set should be chosen instead of the binary decision of whether a set should be chosen or not; for the ski rental problem, the advice can be presented as a probability distribution over the total number of vacation days; in online network connectivity problems, the advice can indicate how likely an edge should be chosen. 
 
In addition, linear programs cannot handle quadratic constraints and thus often fail to capture important aspects of fundamental optimization problems, which motivates the study of more general programs, such as semidefinite programming (SDP). 
SDP is a unifying framework that generalizes both linear programs and quadratically constrained quadratic programming (QCQP), while also yielding very efficient solvers, both in theory and in practice~\cite{VandenbergheB96}.

\paragraph{Preliminaries.}

For a learning-augmented problem, we are given a confidence parameter $\lambda\in[0,1]$, where lower values of $\lambda$ denote higher confidence in the advice, and higher values denote lower confidence. 
An advice is a suggested solution for the online problem that is given. In the content of optimization problems including linear and semidefinite programming, a solution is a vector consisting of real numbers. We denote $\apx$ as the objective value of the online solution obtained by an online algorithm, and compare it with (1) the objective value of the advice, denoted as $\adv$, and (2) the objective value of the offline optimal solution, denoted as $\opt$.
The consistency and robustness of an online algorithm or solution for a minimization optimization problem are defined as follows.

\begin{definition}
An online solution with objective value $\apx$ is $C(\lambda)$-consistent if $\apx \le C(\lambda) \adv$. An online algorithm is $C(\lambda)$-consistent if it generates a $C(\lambda)$-consistent solution. Here, $C: [0,1] \to \R_{\ge 1}$.
\end{definition}

\begin{definition}
An online solution with objective value $\apx$ is $R(\lambda)$-robust if $\apx \le R(\lambda) \opt$. An online algorithm is $R(\lambda)$-robust if it generates a $R(\lambda)$-robust solution. Here, $R: [0,1] \to \R_{\ge 1}$.
\end{definition}
Learning-augmented algorithms for minimization problems consider the advice while approximately minimizing the objective value when the input arrives online. Intuitively, if an advice is accurate and we trust the advice, then we would like the solution to be close to the optimal, so ideally $C(\lambda)$ should approach $1$ as $\lambda$ approaches $0$. On the other hand, having $\lambda$ being close to $1$ denotes no trust in the advice, so $R(1)$ should be close to the optimal competitive ratio of the best pure online algorithm.

\subsection{Our Contributions}
We give a general paradigm for designing learning-augmented algorithms for online covering linear programming \cite{buchbinder2009online}, which generalizes the set cover problem \cite{BamasMS20}, as well as online covering semidefinite programming \cite{EladKN16}, with possibly non-integral constraints and advice. 
Specifically, we present {\em primal-dual learning-augmented (PDLA)} algorithms for these problems, whose performance is close to the optimal offline solution when the advice is accurate, and also whose performance is  asymptotically close to the optimal oblivious algorithm, if the advice is inaccurate.

Our PDLA algorithms consider the advice while approximately minimizing the objective value when the input arrives online.
Our unifying paradigm applies to both online covering linear programs (LP) and online covering semidefinite programs (SDP) described below.



\paragraph{Online covering linear programs.}

A \emph{covering} LP is defined as follows:
\begin{align}
  \begin{aligned}
    \text{minimize } & c^T x 
    \text{ over } x \in \nnreals^n 
    \text{ subject to } A x \geq \mathbf{1}.
  \end{aligned}
                        \label{opt:covering}
\end{align}
Here, $A \in \R_{\geq 0}^{m \times n}$ consists of $m$ {\em covering}
constraints, $\mathbf{1}$ is a vector of all ones, and $c \in \R_{> 0}^n$ denotes the positive coefficients of the linear cost function. We use $a_{ij}$ to denote the $i$-th row $j$-th column entry of $A$ and $c_j$ to denote the $j$-th coordinate of $c$. 

In the online covering problem, the cost vector $c$ is given offline, and each of these covering constraints (rows) is presented one by one in an online fashion, that is, $m$ can be unknown. The goal is to update $x$ in a non-decreasing manner such that all the covering constraints are satisfied and the objective value $c^T x$ is approximately minimized. 

A classical example captured by the covering LP \eqref{opt:covering} is the set cover problem. In this problem, we have a universe $[m]$ and $n$ sets $S_1, S_2, ..., S_n$ which are subsets of $[m]$. Each set $S_j$ where $j \in [n]$ is associated with a cost $c_j$. The goal is to find a subset of set indices $C \subseteq [n]$, such that (1) the universe is covered by the union of the sets whose corresponding indices are in $C$, i.e., $\cup_{j \in C} S_j = [m]$ and (2) the cost $\sum_{j \in C} c_j$ is minimized. An LP relaxation can be formulated by having $x$ encoding the indicator vector for the set selection, the columns representing the sets, and the rows representing the universe. In the constraint matrix $A$, $a_{ij}=1$ if element $i$ is contained in set $S_j$, otherwise $a_{ij}=0$.

The online set cover problem can be naturally captured by the online covering LP \eqref{opt:covering}. We have the sets given offline and the elements arriving online. Upon the arrival of an element, the information that which sets contain the arriving element is revealed (as a row of $A$), and we have to irrevocably pick the sets to cover the universe. The $O(\log m \log n)$-competitive online algorithm introduced in \cite{alon2009online,BuchbinderN09} solves the online set cover problem by incrementing the indicator vector $x$ in the covering LP and rounding the fractional solution online by a threshold-based approach. The $O(\log m)$ factor is from the integrality gap of the covering LP while the $O(\log n)$ factor is from the competitiveness of the online algorithm.

An $O(\log n)$-competitive algorithm for online covering LPs was presented in \cite{buchbinder2009online}, which simultaneously solves both the primal covering LP \eqref{opt:covering} and the dual \emph{packing} LP \eqref{opt:packing}, defined as follows:
\begin{align}
  \begin{aligned}
    \text{maximize } & \mathbf{1}^T y 
    \text{ over } y \in \nnreals^m 
    \text{ subject to } A^T y \leq c.
  \end{aligned}
                        \label{opt:packing}
\end{align}
The analysis in \cite{buchbinder2009online} crucially uses LP-duality and strong connections between the two solutions to argue that they are both nearly optimal. 
The covering solution $x$ is an exponential function of the packing solution $y$ and
both $x$ and $y$ are monotonically increasing. 
The problem naturally extends to the setting that relies on a \emph{separation oracle} to retrieve an unsatisfied covering constraint where the number of constraints can be unbounded. However, as the framework in \cite{buchbinder2009online} fixes all violating constraints, each arriving constraint might be slightly violated so that each individual fix may require a diminishingly small adjustment. 
Consequently the algorithm may have to address exponentially many constraints. The framework was later modified in \cite{grigorescu2021online} which guarantees that addressing polynomially many constraints suffices.

In the learning-augmented problem, we are given a confidence parameter $\lambda \in [0,1]$ and $x' \in \nnreals^n$ served as a fractional advice for LP \eqref{opt:covering}. 
However, we do not have the guarantees about the advice $x'$. 
More specifically, the objective value of the advice $c^T x'$ could be a horrendous approximation to the optimal objective value of LP \eqref{opt:covering} or $x'$ might not even satisfy the constraints. 

We first show an efficient, consistent, and robust PDLA algorithm for the online covering LP \eqref{opt:covering}. 
We use the condition number $\kappa$ to denote the upper bound for the ratio between the maximum positive entry and the minimum positive entry for each fixed column of $A$.  
For ease of presentation, we assume that $x'$ is \emph{feasible}, i.e., there are no violating constraints caused by the advice $x'$.

\begin{theorem}[Informal] \label{thm:covering-inf}
Given a feasible advice $x' \in \nnreals^n$ for LP \eqref{opt:covering} with confidence parameter $\lambda \in [0,1]$, there exists an $O\left(\frac{1}{1-\lambda}\right)$-consistent and $O\left(\log\frac{\kappa n}{\lambda}\right)$-robust online algorithm for the online covering LP problem that encounters polynomially many violating constraints.
\end{theorem}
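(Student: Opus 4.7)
The plan is to extend the primal-dual framework of Buchbinder and Naor \cite{buchbinder2009online} so that the exponential primal update is biased toward the fractional advice $x'$, in the spirit of the learning-augmented set-cover algorithm of \cite{BamasMS20}. As in the classical algorithm, I would maintain a primal $x$ and a dual $y$, process constraints one at a time, and on the arrival of a violated constraint $i$ continuously raise $y_i$ while updating each coordinate $x_j$ according to a rule depending on $y$, $c_j$, $x'_j$, and $\lambda$. The key design choice is a hybrid update that splits $x_j$ conceptually into two parts: an ``advice'' part which is a scaled copy of $x'_j$, switched on as soon as the corresponding dual becomes positive, and an ``online'' part of the classical form $\tfrac{1}{\kappa n}\bigl(\exp(\sum_i a_{ij} y_i/c_j)-1\bigr)$ that grows as the duals increase. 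One continues raising $y_i$ until $\sum_j a_{ij} x_j \ge 1$.

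First I would verify primal feasibility and approximate dual feasibility, and then invoke weak LP duality to reduce the competitive-ratio analysis to bounding the primal cost in terms of $\mathbf{1}^\top y$ and $\adv$. For the consistency bound, I would exploit the assumption that $x'$ is feasible: the advice part of the update alone already accounts for a $(1-\lambda)$-fraction of each arriving constraint, so only a residual of size $\lambda$ must be closed by the online part. Charging the advice-driven cost directly against an $O(1/(1-\lambda))$-scaled multiple of $\adv$ and absorbing the residual via the standard potential argument should yield $\apx = O\bigl(\tfrac{1}{1-\lambda}\bigr)\cdot\adv$.

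For the robustness bound, I would adapt the Buchbinder--Naor potential analysis to the online part of the update. During any continuous rise of $y_i$, the derivative of primal cost with respect to $y_i$ is controlled by the current left-hand side of constraint $i$, which yields a primal cost at most $O\bigl(\log(\kappa n/\lambda)\bigr)\cdot\mathbf{1}^\top y$: the usual $\log(\kappa n)$ factor comes from the exponential update as in \cite{buchbinder2009online}, and the extra $\log(1/\lambda)$ is paid because the advice-biased initial value of $x_j$ starts at $\Theta(\lambda/(\kappa n))$ before the exponential growth saturates the constraint. Combined with LP duality, this gives $\apx = O\bigl(\log(\kappa n/\lambda)\bigr)\cdot\opt$.

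Finally, the continuous dual raise must be discretized so that the separation oracle is queried only $\poly(n,m)$ times before termination. I would adopt the modification of \cite{grigorescu2021online}, which shows that resolving each constraint to within a small multiplicative slack rather than exactly suffices and costs only a constant factor in both competitive bounds. The hard part will be calibrating the hybrid update --- in particular, the exact coupling between the advice weight, the exponential base, and the initial primal offset --- so that the two competing potential arguments, one charging to $\adv$ and the other to $\opt$ via $y$, simultaneously deliver the claimed $O(1/(1-\lambda))$-consistency and $O(\log(\kappa n/\lambda))$-robustness without the two bounds contaminating each other.
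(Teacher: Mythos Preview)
Your high-level plan is the right one, and your citations to \cite{buchbinder2009online,BamasMS20,grigorescu2021online} point at exactly the right ingredients. But the specific update rule you describe has a genuine gap on the robustness side. If $x_j$ is an additive sum of an ``advice part'' that is a scaled copy of $x'_j$ (say $(1-\lambda)x'_j$, activated once some dual is positive) plus a classical exponential ``online part,'' then the primal cost already contains the term $(1-\lambda)\,c^Tx'$, which can be arbitrarily larger than $\opt$. Nothing in the dual analysis can cancel this: the dual $y$ only pays for the exponential growth, not for a fixed additive offset coming from $x'$. So the $O(\log(\kappa n/\lambda))$-robustness claim would fail as stated. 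You correctly flag the calibration as ``the hard part,'' but the tension is structural, not just a matter of constants.

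The paper resolves this differently: it does \emph{not} add $x'_j$ to $x_j$; instead it biases the \emph{growth rate} of the single exponential update. Concretely, when constraint $i$ is violated, each $x_j$ grows at rate $\tfrac{a_{ij}}{c_j}\bigl(x_j + D_j\bigr)$ where the additive shift $D_j$ is a convex combination $\lambda\cdot(\text{uniform term}) + (1-\lambda)\cdot(\text{term proportional to }x'_j)$, and the advice-weighted term carries an indicator $\mathbf{1}_{x_j<x'_j}$ that \emph{switches off} once $x_j$ reaches $x'_j$. This way the advice accelerates $x_j$ toward $x'_j$ but never pushes it past, so the advice-driven cost is automatically capped by $c^Tx'$ (giving consistency via a rate-splitting argument $\partial P = \partial P_c + \partial P_u$), while the total $x_j$ is still the exponential of the dual load and hence controlled by the standard dual-feasibility bound (giving robustness). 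Because $D_j$ changes across iterations (each time some $x_j$ hits $x'_j$), the dual-feasibility proof needs a telescoping product over iterations together with the elementary inequality $\tfrac{a+c_1}{b+c_1}\le\tfrac{a+c_2}{b+c_2}$ for $a\ge b>0$, $c_1\ge c_2\ge0$. Two further ingredients you omit: a guess-and-double phase scheme (estimate $\alpha(r)\approx\opt$, initialize $x_j^{(r)}=\min\{x'_j,\alpha(r)/(2nc_j)\}$, double $\alpha$ when cost exceeds it), which is what makes the initialization and the $\kappa$-dependence work out for arbitrary coefficients; and satisfying each violated constraint by a factor of $2$ rather than exactly, which is what yields the polynomial bound on the number of violating constraints.
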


The formal version of Theorem~\ref{thm:covering-inf} (Theorem \ref{thm:covering}), which addresses the case when $x'$ is infeasible for LP \eqref{opt:covering}, is presented in Section \ref{sec:covering}. We remark that Theorem~\ref{thm:covering-inf} implies that when $\kappa = \poly(n)$, the algorithm is $\log(n/\lambda)$-robust.  

\paragraph{Online covering semidefinite programs.}

We generalize our approach for  learning-augmented covering LPs to handle a more expressive family of optimization problems, namely, covering semidefinite programs.

First, we introduce some standard notation. A matrix $A \in \mathbb{R}^{d\times d}$ is said to be {\em positive semidefinite} (PSD), i.e., $A \succeq 0$, if $v^T A v \ge 0$ for every vector $v \in \R^d$, or equivalently, all the eigenvalues of $A$ are non-negative. If $A$ is PSD and symmetric, then it is called  {\em symmetric positive semidefinite (SPSD)}. A partial order over SPSD matrices in $\mathbb{R}^{d\times d}$ can be induced such that $A \succeq B$ if and only if $A - B \succeq 0$.

The setting of a covering SDP problem is as follows.
\begin{align}
  \begin{aligned}
    \text{minimize } & c^T x
    \text{ over } x \in \nnreals^n 
    \text{ subject to } \sum_{j=1}^n A_jx_j\succeq B
  \end{aligned}
\end{align}
where $A_1,\ldots,A_n\in\mathbb{R}^{d\times d}$ and $B \in \mathbb{R}^{d\times d}$ are SPSD matrices and $c \in \R_{> 0}^n$.

In the online covering SDP problem introduced in \cite{EladKN16}, we have the matrices $A_1, ..., A_n$ and the cost vector $c$ given offline. In each \emph{round} $i \in [m]$ where $m$ can be unknown, we are given a new SPSD matrix $B^{(i)}$ satisfying $B^{(i)} \succeq B^{(i-1)}$. The goal is to cover $B^{(i)}$ using a linear combination $x_1,\ldots,x_n$ of the matrices $A_1,\ldots,A_n$, so that $\sum_{j=1}^n x_jA_j\succeq B^{(i)}$, while minimizing the cost $c^T x$. 
Moreover, 
we must update $x$ in a non-decreasing manner, so that once some amount of the matrix $A_j$ is used in the covering at a round $i$, then it must be used in all subsequent coverings in later rounds. The online covering SDP problem and its dual in round $i$ are as follows:

\begin{align}
\label{opt:sdp-covering}
\text{minimize } c^T x \text{ over } x \in \nnreals^n \text{ subject to } \sum_{j=1}^n A_jx_j\succeq B^{(i)}\\
\text{maximize } B^{(i)} \otimes Y \text{ over $Y \succeq 0$ subject to } A_j\otimes Y\le c_j \forall j \in[n]\label{opt:sdp-packing}
\end{align}
where we use $A\otimes B$ to denote the Frobenius product of $A$ and $B$, i.e., $A\otimes B=\sum_{i,j}A_{i,j}B_{i,j}=\trace(A^T B)$.

We remark that the formulation of online covering SDP \eqref{opt:sdp-covering} generalizes online covering LP \eqref{opt:covering} when the constraint matrix is known offline but there is no guarantee which covering constraint (row) will arrive. 
In particular, the SDP formulation for online set cover with $n$ sets and $d$ elements all given offline (but without the knowledge of which elements arrive and their order) is the following: we define matrices $A_1,\ldots,A_n \in \{0,1\}^{d \times d}$ where $A_j$ is a diagonal matrix whose diagonal is simply the indicator vector for the $j$-th set across the $d$ elements, i.e., entry $(k,k)$ of $A_j$ is $1$ if and only if set $j$ contains element $k$.
The matrices $B^{(i)}$ encode the variables that must be covered in round $i$, so that $B^{(0)}$ is the all zeros matrix and $B^{(i)}-B^{(i-1)}$ is the all-zeros matrix except with a single one in entry $(k,k)$ for the variable $k$ that must be newly covered in round $i$. 
No online SDP algorithm can achieve competitive ratio $o(\log n)$ because even if fractional sets are allowed,  
no online algorithm can achieve competitive ratio better than $O(\log n)$ for the online set cover problem~\cite{BuchbinderN09}.

An optimal $O(\log n)$-competitive online algorithm was presented in \cite{EladKN16}. Similar to online covering LPs, an important idea in this line of work is to use weak duality and the strong connections between the primal and the dual solutions. Observe that if $x$ and $Y$ are feasible solutions for the primal and the dual, then
\begin{equation} \label{eq:sdp-pd}
c^T x\ge\sum_{j=1}^n(A_j\otimes Y)x_j=\left(\sum_{j=1}^n(A_jx_j)\right)\otimes Y\ge B^{(i)}\otimes Y,
\end{equation}
and hence the primal and the dual satisfy weak duality. 

In the learning-augmented problem, we are given a confidence parameter $\lambda \in [0,1]$ and a vector $x'\in\nnreals^n$ that serves as advice for the linear combination $x_1,\ldots,x_n$ that the algorithm should use for the optimal solution. We have no guarantees about the advice. 
More specifically, the objective value of the advice $c^T x'$ could be a terrible approximation to the optimal objective value of SDP \eqref{opt:sdp-covering} or $x'$ might not even be feasible. 

We use $\kappa$ to denote the ratio of the largest positive eigenvalue to the smallest positive eigenvalue of the matrices $A_1,\ldots,A_n,B^{(1)},\ldots,B^{(m)}$ and achieve the following.
\begin{theorem} [Informal]
\label{thm:sdp-inf}
Given a feasible advice $x' \in \nnreals^n$ for SDP \eqref{opt:sdp-covering} with confidence parameter $\lambda\in[0,1]$, there exists a polynomial time, $O\left(\frac{1}{1-\lambda}\right)$-consistent, and $O\left(\log\frac{\kappa n}{\lambda}\right)$-robust online algorithm for the online covering SDP problem.
\end{theorem}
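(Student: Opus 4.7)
The plan is to lift the primal-dual learning-augmented framework used for Theorem~\ref{thm:covering-inf} to the SDP setting, combining the advice-splitting idea of Bamas-Maggiori-Svensson with the matrix primal-dual algorithm of Elad-Kale-Naor. I would decompose each primal variable into two contributions, $x_j = x_j^{(1)} + x_j^{(2)}$, where $x_j^{(1)}$ is biased toward the advice (weighted by a function of $1-\lambda$) and $x_j^{(2)}$ is a pure online component (weighted by a function of $\lambda$). The pure online component would be an exponential function of the dual matrix $Y$ in the spirit of \cite{EladKN16}, for instance of the form $x_j^{(2)} = \frac{\lambda}{c_j}\bigl(\exp(\alpha \, A_j \otimes Y) - \tfrac{1}{\kappa n}\bigr)$ for an appropriate constant $\alpha = \Theta(\log(\kappa n/\lambda))$, while the advice-biased component would behave like a rescaled copy of $x'_j$ that ``turns on'' once the dual variable witnesses sufficient progress.

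Second, upon receiving a new $B^{(i)} \succeq B^{(i-1)}$ that is not yet covered by the current $\sum_j A_j x_j$, I would invoke a separation oracle (as in \cite{grigorescu2021online}) to find a unit PSD direction $v v^T$ certifying the violation, and then continuously grow $Y$ along $vv^T$ until the constraint is restored. By the chain rule on matrix exponentials, $x_j^{(2)}$ grows multiplicatively, while $x_j^{(1)}$ is capped at a small multiple of $x'_j$. Throughout, weak duality \eqref{eq:sdp-pd} would be used to charge the primal cost against $B^{(i)} \otimes Y$, which is in turn bounded by $\opt$.

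The main obstacle is the robustness analysis, which requires SDP analogs of the scalar derivative inequalities used in the LP case. In particular, the key step ``the cost increment of $x^{(2)}$ is at most $O(\log(\kappa n/\lambda))$ times the dual gain $B^{(i)} \otimes dY$'' needs a matrix-calculus argument because $A_j$ and $Y$ do not commute in general. I would handle this via Lieb's concavity theorem, or equivalently the Golden-Thompson inequality, to relate $\trace(\exp(\alpha A_j \otimes Y) \cdot A_j)$ to the Frobenius product $A_j \otimes \exp(\alpha Y)$ — the standard workhorse used in \cite{EladKN16}. The factor $\kappa$ enters because the smallest nonzero eigenvalue of some $A_j$ may be exponentially smaller than the largest eigenvalue of $B^{(i)}$, which forces the exponential to traverse $\Theta(\log \kappa)$ doublings before any given constraint is satisfied; the additional $\log(n/\lambda)$ factor comes from the initial offset $\tfrac{1}{\kappa n}$ and the $\lambda$ rescaling.

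Consistency is comparatively straightforward: since $x'$ is feasible, any constraint $\sum_j A_j x_j \succeq B^{(i)}$ is already implied once $x \succeq (1-\lambda) x'$ coordinate-wise, so the advice-biased component of the algorithm triggers at most a bounded amount of dual growth before all constraints in the current round are satisfied. Tracking the resulting cost yields $c^T x^{(1)} \le \frac{1}{1-\lambda} c^T x' = \frac{1}{1-\lambda}\adv$, and the remaining $c^T x^{(2)}$ can be absorbed into the same bound (up to constants) by choosing the splitting parameters appropriately. Combining the two bounds gives the $O(1/(1-\lambda))$-consistency and $O(\log(\kappa n/\lambda))$-robustness guarantees, with the polynomial running time following from the polynomial bound on the number of separation-oracle calls inherited from \cite{grigorescu2021online}.
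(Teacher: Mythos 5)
Your proposal has the right high-level shape (primal-dual, separation oracle along a violating direction $vv^T$, weak duality \eqref{eq:sdp-pd}, guess-and-double for polynomial running time), but it contains two concrete gaps.

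\emph{The robustness obstacle you identify is not there.} You worry that the derivative of the primal requires Lieb concavity or Golden--Thompson because ``$A_j$ and $Y$ do not commute,'' and you write $\trace(\exp(\alpha A_j \otimes Y) \cdot A_j)$ as if the exponential were a matrix. But in the paper (and in \cite{EladKN16}) the primal variable is $x_j = B_j \exp\!\left(A_j\otimes Y / c_j\right) - D_j$, where $A_j \otimes Y$ denotes the Frobenius inner product --- a \emph{scalar}. The exponential is therefore an ordinary scalar exponential, and $\partial x_j/\partial \delta$ when $Y \gets Y + V\delta$ is a one-line scalar chain rule: $(A_j \otimes V/c_j)\, B_j \exp(\,\cdot\,)$. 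No operator-theoretic inequality is needed. The $\log(\kappa n / \lambda)$ robustness factor instead comes from a telescoping bound on $\hat Y^{(t)}_j/c_j = \ln\prod_p \frac{x_j^{(p)}+D_j^{(p)}}{x_j^{(p-1)}+D_j^{(p)}}$, using a lower bound $D_j^{\min} \ge \lambda/(\kappa n)$ and an upper bound $x_j \le 2\kappa$ obtained from the factor-2 termination condition. Your proposal does not supply this telescoping step, which is where the $\log$ actually comes from.

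\emph{The consistency argument is incorrect.} You claim that since $x'$ is feasible, the constraint $\sum_j A_j x_j \succeq B^{(i)}$ ``is already implied once $x \succeq (1-\lambda)x'$ coordinate-wise.'' This is false: coordinate-wise $x \ge (1-\lambda)x'$ only gives $\sum_j A_j x_j \succeq (1-\lambda)\sum_j A_j x'_j \succeq (1-\lambda)B^{(i)}$, which is strictly weaker than $B^{(i)}$ for any $\lambda > 0$. You would need $x \ge x'$ coordinate-wise, but that destroys the $O(1/(1-\lambda))$-consistency you are trying to show. The ``absorbing $c^Tx^{(2)}$ into the same bound'' step is likewise unsubstantiated: $x^{(2)}$ obeys an $O(\log(\kappa n/\lambda))\cdot\opt$ bound, which does not collapse to $O(1/(1-\lambda))c^Tx'$ without a different argument. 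The paper instead avoids the explicit decomposition $x = x^{(1)}+x^{(2)}$ entirely; it uses a single primal variable per coordinate and proves consistency by splitting the \emph{rate of change} $\partial P/\partial\delta$ into a part $\partial P_c$ credited to coordinates with $x_j < x'_j$ and a part $\partial P_u$ credited to coordinates with $x_j \ge x'_j$, then uses the normalization in the growth rate (the $(1-\lambda)$ term summing to exactly $1-\lambda$, the accumulated $x_j$ term bounded by $2$ from the factor-2 termination) to show $\partial P_u \le O(1/(1-\lambda))\,\partial P_c$ pointwise. That per-iteration ratio argument is the key step your proposal is missing.
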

The formal version of Theorem~\ref{thm:sdp-inf} (namely, Theorem \ref{thm:sdp}), which addresses the case when $x'$ is infeasible for SDP \eqref{opt:sdp-covering}, is presented in Section \ref{sec:sdp}. We remark that Theorem~\ref{thm:sdp-inf} implies that we can achieve a constant factor approximation to the optimal solution when the advice is accurate ($O(1)$-competitive), which breaks the known $\Omega(\log n)$ competitive ratio obtained by the oblivious online algorithm for covering SDP in \cite{EladKN16}. 
Moreover, for $\kappa=\poly(n)$, we match the optimal approximation ratio of $O(\log n)$ up to constants when the advice is arbitrarily bad. 

\paragraph{Adding box constraints.} 
In both the LP and SDP case, it is natural to have the requirement that the variables must lie in a bounded region. Defining the bounded region can be done using ``box constraints".
The setting for the covering LP with box constraints is in the following form.
\begin{align}
  \begin{aligned}
    \text{minimize } & c^T x 
    \text{ over } x \in [0,1]^n 
    \text{ subject to } A x \geq \mathbf{1}.
  \end{aligned}
                        \label{opt:covering-box}
\end{align}
We note that this problem might not have any feasible solution. The upper bound is set to one without loss of generality. Suppose $x_j \in [0, u_j]$, then we can scale $c_j$ and the entries in column $j$ by dividing $u_j$. Again, in the online problem, the covering constraints arrive one at a time. The goal is to update $x$ in a non-decreasing manner subject to each coordinate of $x$ being capped at 1, and approximately minimize the objective $c^T x$. An $O(\log n)$-competitive algorithm for online covering with box constraints was obtained in \cite{buchbinder2009online}.

For the learning-augmented variant, we assume  
that the advice $x' \in [0,1]^n$. Our bound is in terms of a notion of {\em sparsity} $s$ of matrix $A$, which we define formally in Theorem \ref{thm:covering-box}.


\begin{theorem}[Informal] \label{thm:covering-box-inf}
Given a feasible advice $x' \in [0,1]^n$ for LP \eqref{opt:covering-box} with confidence parameter $\lambda \in [0,1]$, there exists an $O\left(\frac{1}{1-\lambda}\right)$-consistent and $O\left(\log\frac{s }{\lambda}\right)$-robust online algorithm for the online covering LP problem with box constraints that encounters polynomially many violating constraints.
\end{theorem}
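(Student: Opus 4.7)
The plan is to extend the primal-dual framework used to prove Theorem \ref{thm:covering-inf} to accommodate box constraints, following the Buchbinder-Naor approach. The key structural change is to augment the dual with additional variables $z_j$ for the upper-bound constraints $x_j \leq 1$; the dual of \eqref{opt:covering-box} becomes
\[
\text{maximize } \mathbf{1}^T y - \mathbf{1}^T z \text{ over } y \in \nnreals^m,\ z \in \nnreals^n \text{ subject to } A^T y - z \leq c.
\]
The $z_j$ variables will absorb slack whenever $x_j$ saturates at its upper bound, thereby preserving dual feasibility for use in the competitive analysis.

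First, I would write down a PDLA continuous update rule mirroring the one behind Theorem \ref{thm:covering-inf}, but projected onto $[0,1]$: when a violating constraint $i$ arrives, raise $y_i$ continuously and simultaneously update each $x_j$ by a rule of the form $x_j \leftarrow \min\{1,\, f_j(y_i, \lambda, x'_j)\}$, where $f_j$ is exponential in $y_i$ and incorporates the advice coordinate $x'_j$ either as an additive boost or a multiplicative offset, weighted by $\lambda$. Whenever $x_j$ hits $1$, we start increasing $z_j$ at the rate needed to keep the dual constraint $(A^T y)_j - z_j \leq c_j$ tight. The update halts as soon as constraint $i$ is satisfied.

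The analysis then splits. For consistency, since $x'$ is feasible, every arriving row satisfies $\sum_j a_{ij} x'_j \geq 1$, so the advice-dependent term in $f_j$ forces row $i$ to be covered after an additional primal cost charged against the coordinates where $x'_j$ is large; integrating along the update shows this accumulates to $O(1/(1-\lambda))$ times $c^T x'$. For robustness, I would run the standard amortized primal-dual argument on the effective dual objective $\mathbf{1}^T y - \mathbf{1}^T z$: bound the infinitesimal primal cost $c^T\mathrm{d}x$ by $O(\log(s/\lambda))$ times $\mathrm{d}(\mathbf{1}^T y - \mathbf{1}^T z)$, then invoke weak duality to conclude $\apx = O(\log(s/\lambda)) \cdot \opt$. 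Finally, a separation-oracle argument in the style of \cite{grigorescu2021online} ensures that only polynomially many violating constraints need to be addressed.

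The main obstacle is tightening the robustness factor from $O(\log(\kappa n/\lambda))$ in the unbounded case to $O(\log(s/\lambda))$. This improvement hinges essentially on the box constraints: once $x_j$ saturates at $1$ it no longer contributes new primal cost, so only the at-most-$s$ currently unsaturated variables in any given row participate in the amortized accounting. I therefore expect the base of the exponential in $f_j$ to depend on $s$ rather than $n$, and the additive seed in $f_j$ to be of order $\lambda/(s c_j)$ to guarantee that $z_j$ never outruns $(A^T y)_j$ and destroys the dual lower bound. Calibrating $f_j$ so that the integrals driving both the consistency and robustness bounds telescope cleanly under the $\min\{1,\cdot\}$ projection is the delicate technical step; handling infeasible advice $x'$ (the full statement in Theorem \ref{thm:covering-box-inf}'s formal version) should then follow by a perturbation argument analogous to the one used for Theorem \ref{thm:covering-inf}.
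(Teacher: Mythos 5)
Your plan follows the paper's actual route closely: the same dual with $z$-variables to absorb saturation, the same $\min\{1,\cdot\}$ projection, the same high-level consistency/robustness split, and the same reliance on a guess-and-double phasing plus a satisfy-by-factor-two rule for polynomial efficiency. The overall architecture is right.

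Two points, however, are gaps rather than details you could fill in for free. First, you defer ``calibrating $f_j$'' as the delicate step, but the calibration is precisely where the $\log s$ bound is won or lost, and your proposed additive seed of order $\lambda/(s c_j)$ is too static. In the paper both the additive offset $D_j$ \emph{and} the termination condition are scaled by the current \emph{remaining capacity} $1 - \sum_{j\in T} a_{ij}$, where $T$ is the set of saturated coordinates for the current row. This scaling is what makes the amortized bound $\partial P/\partial y_i \leq 3\,\partial D/\partial y_i$ go through, since once $z_j$ is rising for $j\in T$ the dual rate is exactly $\partial D/\partial y_i = 1 - \sum_{j\in T} a_{ij}$; a fixed seed of $\lambda/(sc_j)$ does not couple to this shrinking dual rate and the ratio would not stay bounded. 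The stopping rule is likewise $\sum_{j\notin T} a_{ij} x_j \geq 2\left(1 - \sum_{j\in T} a_{ij}\right)$, not simply $A_i x \geq 2$, for the same reason. Second, your explanation that ``only the at-most-$s$ currently unsaturated variables in any given row participate'' is only correct for $\{0,1\}$ matrices. In the general LP the parameter is the ratio
$s = \max_{i,T}\left\{ \dfrac{\sum_{j\notin T} a_{ij}}{1-\sum_{j\in T} a_{ij}} \;\middle|\; 1-\sum_{j\in T} a_{ij} > 0\right\}$,
which is what the bound $D_j^{\min}\geq\lambda/s$ comes from and which is not a variable count. Without these two ingredients the telescoping product in the dual-feasibility argument does not land at $O(\log(s/\lambda))$.
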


The formal version of Theorem~\ref{thm:covering-box-inf} (namely, Theorem \ref{thm:covering-box}), which addresses the case when $x'$ is infeasible for LP \eqref{opt:covering-box}, is presented in Section \ref{subsec:covering-box}. This result recovers the bound of the learning-augmented algorithm for the more restricted online set cover problem in \cite{BamasMS20}, where $A \in \{0,1\}^{m \times n}$, and the value of $s$ there is the same as row sparsity (i.e., the maximum number of non-zero entries of any row). We emphasize that our algorithm also considers \emph{fractional} advice, even for the online set cover problem. 


Similarly, the setting for online covering SDP with box constraints is in the following form.
\begin{align}
  \begin{aligned}
    \text{minimize } & c^T x
    \text{ over } x \in [0,1]^n 
    \text{ subject to } \sum_{j=1}^n A_jx_j\succeq B^{(i)}.
  \end{aligned}
                        \label{opt:sdp-covering-box}
\end{align}
Again, we assume without loss of generality that the upper bound of $x_j$ is one and our goal is to update $x$ in a non-decreasing manner such that $c^T x$ is approximately minimized. An $O(s)$-competitive algorithm for online covering SDP with box constraints was presented in \cite{EladKN16} where $s$ is the sparsity of the SDP problem depending on the $A_j$'s and $B^{(i)}$'s. The sparsity notion coincides with the maximum row sparsity when the SDP is used for capturing covering LPs. We use the same sparsity notion as \cite{EladKN16} and show the following theorem for the learning-augmented problem when the advice $x' \in [0,1]^n$ is given.

\begin{theorem}[Informal] \label{thm:sdp-box-inf}
Given a feasible advice $x' \in [0,1]^n$ for SDP \eqref{opt:sdp-covering-box} with confidence parameter $\lambda \in [0,1]$, there exists a polynomial time, $O\left(\frac{1}{1-\lambda}\right)$-consistent, and $O\left(\log\frac{s }{\lambda}\right)$-robust online algorithm for the online covering SDP problem with box constraints.
\end{theorem}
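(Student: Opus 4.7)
The plan is to extend the primal-dual learning-augmented framework developed for the unbounded covering SDP (Theorem~\ref{thm:sdp-inf}) so that it respects the additional cap $x \in [0,1]^n$, mirroring how the authors adapt the LP algorithm of \cite{buchbinder2009online} to the box-constrained setting in Theorem~\ref{thm:covering-box-inf}. First, I would write down the Lagrangian dual of SDP~\eqref{opt:sdp-covering-box}: it is a packing SDP whose feasible region now includes, in addition to the PSD matrix $Y$, nonnegative scalars $z_1,\ldots,z_n$ accounting for the box constraints, with dual objective $B^{(i)}\otimes Y-\sum_j z_j$ and dual constraints $A_j\otimes Y\le c_j+z_j$. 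This is the natural SDP analogue of the LP dual used in \cite{buchbinder2009online,EladKN16}, and weak duality extends \eqref{eq:sdp-pd} by the $\sum_j z_j$ slack term.

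Next, I would formulate an augmented primal that absorbs the advice. Following the split used in Theorem~\ref{thm:covering-inf} and \ref{thm:covering-box-inf}, decompose $x_j = x_j^{(1)} + x_j^{(2)}$ with $x_j^{(1)} \in [0, x'_j]$ and $x_j^{(2)} \in [0, 1 - x'_j]$, assign cost $\frac{c_j}{\lambda}$ to the ``trust the advice'' copy and $\frac{c_j}{1-\lambda}$ to the ``ignore the advice'' copy, and run the continuous primal-dual update of \cite{EladKN16} on the combined program. Whenever a newly revealed $B^{(i)}$ violates the current covering constraint, both copies of $x$ grow along the direction given by an extremal eigenvector of $B^{(i)} - \sum_j A_j x_j$ via a multiplicative rule of the form $\dot x_j \propto \bigl(\beta\, x_j + \tfrac{1}{s}(1-x'_j)\bigr)\,(A_j\otimes Y)/c_j$ (and the symmetric rule on the ``trust'' copy initialized at $x'_j$), while $Y$ grows in the same spectral direction and $z_j$ is activated the moment the corresponding $x_j$ hits its cap so that the dual constraint $A_j\otimes Y \le c_j + z_j$ continues to hold.

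For the analysis I would argue two invariants in parallel. Consistency: by assumption $x'$ is feasible for round $i$, so the ``trust'' copy never needs to grow to restore feasibility in round $i$; its cost is at most $\tfrac{1}{\lambda}\cdot \lambda\, c^T x' = c^T x'$ plus the $\tfrac{1}{1-\lambda}$-scaled growth of the residual copy, giving $\apx = O\!\left(\tfrac{1}{1-\lambda}\right)\adv$. Robustness: the exponential-in-$Y$ form of the primal update together with the $\tfrac{1}{s}$ additive term in the step size yields, by the standard primal-dual potential argument of \cite{buchbinder2009online,EladKN16}, that the primal cost increase during each violation is at most $O(\log(s/\lambda))$ times the simultaneous increase in the dual objective $B^{(i)}\otimes Y - \sum_j z_j$; weak duality then gives $\apx = O(\log(s/\lambda))\,\opt$. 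The $\tfrac{1}{\lambda}$ inside the logarithm comes from initializing the residual primal variables at $\tfrac{\lambda}{s}$ so that the logarithmic potential telescopes even when the advice is arbitrarily misleading.

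The main obstacle I expect is the box analysis in the SDP setting. In the LP case, once $x_j$ saturates at $1$ the variable is simply dropped from subsequent updates, but in the SDP case the update direction is an eigenvector that simultaneously touches many coordinates, so when several $x_j$ saturate concurrently one must certify that activating the box duals $z_j$ only pays an $O(\log(s/\lambda))$ factor rather than the $O(s)$ factor of the pure online algorithm of \cite{EladKN16}. Handling this requires a careful accounting of the $z_j$ increments against the drop in the covering residual $B^{(i)} - \sum_j A_j x_j$ along the chosen spectral direction, and a termination argument (analogous to \cite{grigorescu2021online}) showing that the combination of the multiplicative growth and the $\tfrac{1}{s}$ additive term keeps the total number of violation-fixing iterations polynomial in $n$, $s$, and $1/\lambda$.
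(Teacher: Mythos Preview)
Your variable-splitting reduction is not the route the paper takes, and as stated it does not deliver the claimed bounds. The consistency step is the problem: you assert that because $x'$ is feasible ``the `trust' copy never needs to grow,'' but under your own scheme the online algorithm is run on the augmented program with costs $c_j/\lambda$ and $c_j/(1-\lambda)$ and has no way of knowing $x'$ is feasible without actually raising the trust copy. If instead you mean that $x^{(1)}$ is \emph{initialized} at $x'$, then you have already paid $c^T x'$ regardless of how bad the advice is, destroying robustness. Either way, a black-box $O(\log s)$-competitive online solver applied to the cost-scaled instance yields augmented cost at most $O(\log s)\cdot\tfrac{1}{\lambda}\,c^T x'$ (using the augmented-feasible point $x^{(1)}=x',\,x^{(2)}=0$), which only gives $O((\log s)/\lambda)$-consistency, not $O(1/(1-\lambda))$. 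The identity ``$\tfrac{1}{\lambda}\cdot\lambda\,c^T x'=c^T x'$'' in your argument has no algorithmic content.

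The paper does \emph{not} split variables or scale costs. It keeps a single $x_j$ and bakes the advice directly into the continuous growth rate via an indicator $\mathbf{1}_{x_j<x'_j}$: the additive term $D_j$ is a convex combination of a $\lambda$-weighted ``pure online'' piece and a $(1-\lambda)$-weighted ``follow $x'$'' piece, with the latter switched off once $x_j$ reaches $x'_j$. This is wrapped in guess-and-double (maintaining phase estimates $\alpha(r)$ and initializing $x_j^{(r)}=\min\{x'_j,\alpha(r)/(2nc_j)\}$), a tight set $T$ with dual slacks $z_j$ growing at rate $A_j\otimes V$ once $x_j$ saturates, and the ``satisfy by a factor of $2$'' rule against the residual $B^{(i)}\otimes V-\sum_{j\in T}A_j\otimes V$. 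Consistency is then obtained by splitting the \emph{derivative} $\partial P/\partial\delta$ (not the variables) into $P_c$ (indices with $x_j<x'_j$) and $P_u$ (indices with $x_j\ge x'_j$) and showing $\partial P_u\le\frac{2+\lambda}{\lambda/s+1-\lambda}\,\partial P_c$; robustness comes from a telescoping bound on $A_j\otimes Y-z_j$ across iterations yielding $O(\log(s/\lambda))$ approximate dual feasibility. Your sketch omits guess-and-double entirely and, in the last paragraph, quietly shifts to ``initializing residual variables at $\lambda/s$,'' which belongs to this different framework and is incompatible with the cost-scaled splitting you set up.
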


The formal version of Theorem~\ref{thm:sdp-box-inf} (namely, Theorem \ref{thm:sdp-box}), which addresses the case when $x'$ is infeasible for SDP \eqref{opt:sdp-covering-box}, is presented in Section \ref{subsec:sdp-box}. 
In Table \ref{table:pdla}, we summarize the state-of-the-art by comparing the most related results from the literature to our framework. We further refer the reader to the high-level technical approach in  Section~\ref{sec:tech}. 
\begin{table}[!htb]
\begin{center}
\def\arraystretch{1.2}
\footnotesize
\begin{tabular}{|*4{l|}}
\hline
\textbf{Paper} & Problem & Approximation Guarantee & Approach\\
\hline
\cite{BuchbinderN09} & \begin{tabular}{@{}l@{}} online covering LP \end{tabular} & \begin{tabular}{@{}l@{}} with and without \\ box constraints: \\ $O(\log n)$-competitive \end{tabular} & \begin{tabular}{@{}l@{}} continuous \\ guess-and-double \end{tabular} \\
\hline
\cite{EladKN16} & \begin{tabular}{@{}l@{}} online covering SDP \end{tabular} & \begin{tabular}{@{}l@{}} without box constraints: \\ $O(\log n)$-competitive \\ with box constraints: \\ $O(\log s)$-competitive \end{tabular} & \begin{tabular}{@{}l@{}} continuous \\ guess-and-double \\ efficient updating \end{tabular} \\
\hline
\cite{BamasMS20} & \begin{tabular}{@{}l@{}} learning-augmented \\ online set cover \end{tabular} & \begin{tabular}{@{}l@{}} without box constraints: \\ $O(1/(1-\lambda))$-consistent \\ $O(\log (d/\lambda))$-robust \end{tabular} & \begin{tabular}{@{}l@{}} discretized \end{tabular} \\
\hline
\cite{grigorescu2021online} & online covering LP & \begin{tabular}{@{}l@{}} without box constraints: \\ $O(\log n)$-competitive \end{tabular} & \begin{tabular}{@{}l@{}} continuous \\ guess-and-double \\ efficient updating \end{tabular} \\
\hline
\textbf{This Work} & \begin{tabular}{@{}l@{}} learning-augmented online \\ covering LP and SDP \\ with fractional advice \end{tabular} & \begin{tabular}{@{}l@{}} without box constraints: \\ $O(1/(1-\lambda))$-consistent \\ $O(\log (\kappa n/\lambda))$-robust \\ with box constraints: \\ $O(1/(1-\lambda))$-consistent \\ $O(\log (s/\lambda))$-robust \end{tabular} & \begin{tabular}{@{}l@{}} continuous \\ guess-and-double \\ efficient updating \end{tabular} \\
\hline
\end{tabular}
\caption{Summary of the competitive, consistency, and robustness ratios. We assume that the advice is feasible for the learning-augmented problems. Here, $n$ refers to the number of sets or variables, $\lambda \in [0,1]$ refers to the confidence parameter, $\kappa$ refers to the condition number, $d$ refers to row sparsity, and $s$ refers to sparsity. We note that online covering LP with box constraints generalizes online set cover with $s = d$. In \cite{EladKN16}, the guess-and-double scheme is not used for online SDP covering with box constraints.} 
\label{table:pdla}
\end{center}
\end{table}

\paragraph{Applications.} We emphasize that our framework uses a continuous approach that is amenable to other learning-augmented optimization problems, and supports fractional advice, which may  be interpreted as probabilities. 
For example, as in \cite{EladKN16,ahlswede2002strong,wigderson2006derandomizing}, our framework for covering SDPs may be applied to the quantum hypergraph covering problem. 
We apply our PDLA algorithm for covering LPs with box constraints in order to obtain online  algorithms for: (1) the fractional online set cover problem with fractional advice, and for (2) the online group Steiner tree problem on trees, where a min-cut algorithm is used as a separation oracle to retrieve violating constraints. Our learning-augmented solver for the group Steiner problem on trees can be employed as a black-box for other related problems, including group Steiner tree on general graphs, multicast problem on trees, and the non-metric facility location problem \cite{alon2006general}.

\subsection{Overview of Our Techniques}\label{sec:tech}
We now give a technical overview of our algorithms and describe how both our algorithms for covering LPs and SDPs are guided by several common underlying principles.

\paragraph{Previous approaches.} 
A natural starting point would be the PDLA algorithm for online set cover by \cite{BamasMS20}, who adopted the primal-dual approach in \cite{BuchbinderN09} to incorporate external advice. 
We recall that in the covering LP formulation of the online set cover problem, each row denotes an element and each column denotes a set. The constraint matrix has entries that are either zero or one. An entry is one if and only if the element (row) belongs to the set (column). 
Additionally, for the online set cover problem considered in \cite{BamasMS20}, each set is either included in the advice or not, i.e., each coordinate of the suggested indicator vector for the set selection is 
either one or zero. 
While it seems plausible that one could extend the \emph{discretized} approach of \cite{BamasMS20} to handle general coefficients in the constraint matrix, i.e., the online covering LP problem, it is unclear how the growth rates of the variables can be adjusted to guarantee dual feasibility. This is because the positive coefficients in every covering constraint (all with value one) are \emph{balanced} in the online set cover problem, which turns out to be a crucial ingredient to argue dual feasibility by the discretized approach, but we do not have this guarantee for general covering LPs with arbitrary positive coefficients. 
Instead, we use a different framework inspired by the classical online algorithm literature, e.g., \cite{buchbinder2009online,EladKN16}.
We present a short summary of our framework in Figure~\ref{fig:framework} and describe it in more details below. 

\begin{figure*}
\begin{mdframed}
For each update, while there exists a violating constraint:
\begin{enumerate}
\item Determine a violating constraint.
\item Acquire a ``growth rate'' for each variable depending on its coefficient in the violating constraint, the corresponding cost, and the advice.
\item Use a guess-and-double approach to determine how fast each of the variables are increased by their growth rates.
\item Increase the variables continuously until the constraint is satisfied.
\end{enumerate}
\end{mdframed}
\caption{Summary of our framework}
\label{fig:framework}
\end{figure*}

\paragraph{Continuous updates.} 
Each time a new constraint arrives, we \emph{continuously} increase the variables 
until the constraint is satisfied. 
We adjust this growth rate of each variable based on its cost in the objective linear function, its coefficient in the arriving constraint, and the advice: a variable is increased at a slower rate if its cost is more expensive, its coefficient in the constraint has a smaller value, or it is less recommended by the predictor. 
The introduction of fractional values  
in the advice is the main technical obstacle of our setting. 
In particular, our algorithm must behave differently in the case where a primal variable has not reached the fractional value recommended by the advice compared to the case where it has reached the recommended value, but the solution does not satisfy all constraints. 
By contrast, in the integral advice setting of \cite{BamasMS20}, the recommendation value always coincides with the limit at one. 
To this end, once the variable reaches the recommended value, our algorithms judiciously decelerate the growth of the variable.

\paragraph{Guess-and-double.} 
However, by allowing the coefficients of the constraint matrix to be arbitrary, the optimal objective value $\opt$ can be arbitrary and we need a nice estimate for this. 
Thus, we adopt the \emph{guess-and-double} technique, e.g., \cite{buchbinder2009online,EladKN16,grigorescu2021online}, where the algorithm is executed in \emph{phases}, so that in each phase we propose a lower-bound estimate of $\opt$, and the algorithm enters the next phase when the value exceeds our estimate. 
Note that such techniques are not necessary for \cite{BamasMS20}, as their assumption of  coefficients in $\{0,1\}$ implicitly provided bounds on $\opt$.

\paragraph{Efficient updating.} In more general applications, each arriving update may induce a large or even infinite number of constraints, such as an infinite number of directions induced by an SDP constraint. 
But now if we sequentially choose a violating constraint and satisfy the constraint exactly as in \cite{BamasMS20}, then there is no guarantee that we will satisfy all the constraints in a small number of iterations. 
Thus another technique we adopt to ensure efficiency in conjunction with the guess-and-double technique is to satisfy each arriving constraint by a factor of $2$. 
That is, we instead continue to increment the primal variables until the violating constraint is satisfied by a factor of $2$, which ensures that at least one primal variable is doubled, which also implies an upper-bound on the number of violating constraints that must be considered.

\paragraph{Showing robustness and consistency.}
With the introduction of general coefficients within many components of our LP formulation, the robustness analysis in \cite{BamasMS20} is no longer applicable, so instead we adapt the primal-dual analysis in \cite{buchbinder2009online} for general covering LP problems.
In particular, we deal with the general  
coefficients via a delicate telescoping argument for dual feasibility, since we tune and change the growth rates multiple times even within the same phase. 

Towards obtaining the consistency bound, we partition the growth rate based on whether the variable has exceeded the value in the advice, and argue that the growth rate not credited to the advice is at most a certain factor of the growth rate credited to the advice, similar to the line of the argument presented in \cite{BamasMS20}.

\paragraph{Extending to online covering SDPs with advice.} 
In the SDP case, we have arriving matrices rather than arriving elements so that at each time, we need to cover a new PSD matrix $B^{(i)}$ that can be \emph{larger} than the previous PSD matrix $B^{(i-1)}$ in an infinite number of directions. 
We repeatedly look at the direction with the largest mass that needs to be covered, i.e., the largest eigenvector $v$ of $B^{(i)}-\sum_{j=1}^n A_jx_j$. 
Then to cover the direction $v$, we set the growth rate of the coefficient of each matrix $A_j$ proportional to the amount that the matrix \emph{aligns} with $v$, i.e., proportional to $v^TA_jv=A_j\otimes V$, where $V=vv^T$ and $\otimes$ is the Frobenius product. 
Unfortunately, it does not suffice to cover $v$ alone -- there may be many other directions for which $B^{(i)}-\sum_{j=1}^n A_jx_j$ is not covered. 
However, as we satisfy the \emph{implicit linear violating constraint} by a factor of $2$, the amount of vectors we have to cover is similarly upper-bounded as in the aforementioned approach for the online covering LP problem.\\

Lastly, we remark that our unifying framework can be naturally adopted to any online problem that has a covering LP or SDP formulation, equipped with a fractional advice and a confidence parameter.

\subsection{Additional Background and Related Work}
\paragraph{Learning-augmented algorithms.} There has been extensive work in incorporating machine-learned predictions in algorithmic design. Machine-learned predictions to enhance the performance of online algorithms were studied in learning-augmented set cover \cite{BamasMS20}, ski rental \cite{PurohitSK18}, and caching \cite{LykourisV18,AGKP2022multiple}. \cite{LykourisV18,Rohatgi20} showed that an accurate predictor could be leveraged to provide competitive ratios better than the limits of classical online algorithms for online caching, while subsequent learning-augmented algorithms studied scheduling~\cite{LattanziLMV20}, ski rental~\cite{PurohitSK18,GollapudiP19}, nearest neighbor search~\cite{DongIRW20}, clustering~\cite{ErgunFSWZ22}, triangle counting~\cite{ChenEILNRSWWZ22}, frequency estimation~\cite{HsuIKV19}, and other algorithmic and data structural problems~\cite{Mitzenmacher18,BamasMS20,BamasMRS20,WeiZ20,JiangLLRW20,DiakonikolasKTV21,EdenINRSW21,antoniadis2020online,antoniadis2020secretary,AGKP2022multiple,thang2021packing}.

Another direction on this line of research is the stochastic setting, where the input is drawn from a known distribution. This includes online stochastic matching \cite{feldman2009online}, online graph optimization \cite{azar2022online,khalil2017learning}, and other online problems \cite{Mitzenmacher18,mahdian2012online}. These models differ from ours since we solely consider a given fractional advice (that might have a distribution interpretation) as a solution of the optimization problem instead of making assumptions on the input distribution. 

More recently, \cite{AGKP2022multiple} presented a model for solving online covering LPs with multiple predictions, but their model assumes that, unlike the model used in this work and~\cite{BamasMS20}, the prediction(s) is not given upfront, and instead upon the arrival of each constraint, a feasible way of satisfying that constraint is presented. Additionally, their analysis compares the solution presented by the learning-augmented algorithm with benchmark solutions that are consistent with the predictions, and guarantees robustness independently, in contrast to our definition of a confidence parameter to parameterize both confidence and robustness.

\cite{thang2021packing} followed the framework presented in~\cite{BamasMS20} and presented a solution to online packing LPs, in complement to our contributions on online covering LPs. Their methods closely resemble that of~\cite{BamasMS20} and this paper, but requires the advice to be integral, while we extend the framework and generalize to allow for fractional advice. Their work also generalizes online packing LPs to allow non-linear objective functions, which leaves room for potentially more optimal algorithms tailored to linear objectives, avoiding loss in generality.

\paragraph{Classical online algorithms with covering LP formulations.}
One of the most classical online problems is the online set cover problem, solved in the seminal work of \cite{alon2009online} by implicitly using the primal-dual technique of Goemans \cite{GoemansW95}.  
The approach was extended to network optimization problems in undirected graphs in \cite{alon2006general}, ski rental \cite{kmmo}, and paging \cite{bansal2012primal}, then abstracted and generalized to a broad LP-based primal-dual framework for online covering and packing LPs in \cite{buchbinder2009online}. 
We refer the reader to the excellent survey by Buchbinder and Naor \cite{BuchbinderN09}.

\paragraph{Other variants of online covering and packing LP problems.}
The main focus of our framework is on solving fundamental optimization problems in the online setting with advice. There are other variants of online covering and packing LP problems without advice, including optimizing convex objectives \cite{azar2016online}, optimizing $\ell_q$-norm objectives \cite{shen2020online}, mixed covering and packing LPs \cite{azar2013online}, and sparse integer programs \cite{gupta2014approximating}. All these frameworks employ the powerful primal-dual technique to ensure the competitiveness of the online algorithm.

\paragraph{Alternative learning augmented algorithms in the online model.} 
Subsequent to our work, a significantly simpler algorithm with tighter qualitative guarantees was brought to our attention by Roie Levin.
For completeness, we describe the algorithm in Appendix~\ref{app:simple}, but we emphasize that the algorithm is due to Roie Levin and is included here with his permission. 
Nevertheless, we expect that the techniques and analysis that we introduce in this paper may be of independent interest for other related problems or settings, such as the advice being adaptive, or in settings of multiple experts. We believe that understanding the full power of the techniques developed in this paper is an intriguing direction for further research in the still emerging area of learning-augmented algorithms.

\subsection{Organization}
In Section~\ref{sec:covering}, we present the PDLA algorithms for online covering LPs, prove Theorems \ref{thm:covering-inf} and \ref{thm:covering-box-inf}, and show the applications on fractional online set cover with fractional advice and group Steiner tree on trees.
In Section~\ref{sec:sdp}, we present the PDLA algorithms for online covering SDP and prove Theorems \ref{thm:sdp-inf} and \ref{thm:sdp-box-inf}.
We show our experimental evaluations in Section \ref{sec:ex}. 
In Appendix \ref{sec:appendix}, we show the learnability of the covering SDP problem when the input is drawn from a particular distribution, which might be of independent interest.

\section{PDLA Algorithms for Online Covering Linear Programs} \label{sec:covering}

In this section, we prove Theorems~\ref{thm:covering-inf} and \ref{thm:covering-box-inf}. Namely, we present efficient, consistent, and robust PDLA algorithms for online covering LPs.
We recall that the covering LP \eqref{opt:covering} is the following.
\begin{align*}
  \begin{aligned}
    \text{minimize } & c^T x 
    \text{ over } x \in \nnreals^n 
    \text{ subject to } A x \geq \mathbf{1}.
  \end{aligned}
\end{align*}
Here, $A \in \R_{\geq 0}^{m \times n}$ consists of $m$ covering
constraints, $\mathbf{1}$ is a vector of all ones, and $c \in \R_{> 0}^n$ denotes the positive coefficients of the linear cost function.

In the online problem, the covering constraints (rows) are presented one at a time, so $m$ can be unknown. The cost $c$ is given offline. We say that a new \emph{round} starts when a new covering constraint arrives. The goal is to update $x$ in a non-decreasing manner such that $x$ is feasible and the objective $c^T x$ is approximately minimized. In the learning-augmented problem, we are also given an advice $x' \in \nnreals^n$. The goal is to further consider the advice $x'$ to obtain a consistent (compared to the advice) and robust (compared to the optimal) solution $x$.

Recall that an important idea is to simultaneously consider the dual {\em packing} LP \eqref{opt:packing}:
\begin{align*}
  \begin{aligned}
    \text{maximize } & \mathbf{1}^T y 
    \text{ over } y \in \nnreals^m 
    \text{ subject to } A^T y \leq c
  \end{aligned}
\end{align*}
where $A^T$ consists of $n$ packing constraints with an upper bound $c$ given offline.

We use a guess-and-double approach. Let $a_{ij}$ denote the $i$-th row $j$-th column entry of $A$. The algorithm works in \emph{phases}. 
We estimate a lower bound $\alpha(r)$ for $\opt$ in phase $r$. 
In phase $1$, let $\alpha(1) \gets \min_{j \in [n]}\{c_j / a_{1j}\}$ be a proper lower bound for $\opt$. Once the online objective exceeds $\alpha(r)$, we start the new phase $r+1$ from the current violating constraint (let us call it constraint $i_{r+1}$, in particular, $i_1 = 1$ since in the first phase the first constraint arrives first), and double the estimated lower bound, i.e. $\alpha(r+1) \gets 2\alpha(r)$.

In the beginning of phase $r$, $x^{(r)}_j \gets \min\{x'_j,\alpha(r)/(2nc_j)\}$. If $x'_j \le \alpha(r)/(2nc_j)$, then it is possible that $\alpha(r)/(2nc_j)$ is large, so we have to set $x^{(r)}_j = x'_j$ to ensure consistency. On the other hand, if $x'_j \ge \alpha(r)/(2nc_j)$, then it is possible that $x'_j$ is large and the advice is bad, so we have to set $x^{(r)}_j = \alpha(r)/(2nc_j)$ to ensure robustness.\footnote{The same initialization is used for the same reason for online covering LPs with box constraints and online covering SDPs with and without box constraints.}

As $x$ must be updated in a non-decreasing manner, the algorithm maintains $\{x^{(\ell)}_j\}_{\ell \in [r]}$, which denotes the value of each variable $x_j$ from phase 1 to phase $r$, and the value of each variable $x_j$ is actually set to $\max_{\ell \in [r]}\{x^{(\ell)}_j\}$.

Let $A_i$ denote the $i$-th row of $A$. In phase $r$, upon the arrival of constraint $i$, if the advice adequately covers constraint $i$, i.e., $A_i x' \geq 1$, then we increase the variables $x^{(r)}_j$ with growth rate
\[\frac{a_{ij}}{c_j}\left(x^{(r)}_j + \frac{\lambda}{A_i \mathbf{1}} + \frac{(1-\lambda)x'_j \mathbf{1}_{x^{(r)}_j < x'_j}}{A_i x'_c}\right)\]
where $x'_c$ is the advice restricted to entries in which the corresponding variable has not reached the advice yet, or equivalently, $x^{(r)}_j < x'_j$. More formally, the $j$-th coordinate of $x'_c$ is equal to the $j$-th coordinate of $x'$ if $x_j^{(r)} < x_j'$ and 0 otherwise. $\mathbf{1}_{x^{(r)}_j < x'_j}$ is an indicator variable with value 1 if $x^{(r)}_j < x'_j$ and 0 otherwise. Intuitively, the additive term $\lambda/(A_i \mathbf{1})$ is the contribution credited to the online algorithm while the additive term $(1-\lambda)x'_j \mathbf{1}_{x^{(r)}_j < x'_j}/(A_i x'_c)$ is the contribution credited to the advice $x'$. We note that 
\begin{equation} \label{eq:covering-sum-to-1}
\sum_{j=1}^n{\frac{a_{ij}}{A_i \mathbf{1}}} = 1 \text { and } \sum_{j=1}^n\frac{a_{ij}x'_j \mathbf{1}_{x^{(r)}_j < x'_j}}{A_i x'_c} = 1
\end{equation}
so the contributions credited to the online algorithm and the advice are normalized and scaled by a factor of $\lambda$ and $(1 - \lambda)$, respectively.

Alternatively if the advice does not cover the constraint enough, 
i.e., $A_i x' < 1$, we increase $x^{(r)}_j$ with growth rate
\[\frac{a_{ij}}{c_j}\left(x^{(r)}_j + \frac{1}{A_i \mathbf{1}}\right).\]
Namely, since the advice $x'$ is not feasible, we only consider the contribution from the online algorithm.

In order to implement with the proper growth rate, the dual variable $y_i$ initialized as 0 is used as a proxy in round $i$. We increment $x^{(r)}$ until the arriving violating constraint $i$ is satisfied \emph{by a factor of 2}. This guarantees that at least one variable $x^{(r)}_j$ is doubled thus ensures that the algorithm only encounters polynomially many violating constraints.

In each phase $r$, each \emph{iteration} ends in one of the following three cases: (1) the arriving constraint $i$ is satisfied by a factor of 2, (2) there exists a variable $x^{(r)}_j$ that reaches the advice value $x'_j$, or (3) the objective $c^T x^{(r)}$ reaches $\alpha(r)$. The coefficients $D_j$ and $B_j$ for fitting boundary conditions are defined based on the value of $x^{(r)}_j$ in the end of the previous iteration, stored as $\bar{x}_j$. The indicator $\mathbf{1}_{x^{(r)}_j < x'_j}$ used for the growth rate, has the same value as $\mathbf{1}_{\bar{x}_j < x'_j}$ during an iteration.

The main algorithm that uses the phase scheme is presented in Algorithm~\ref{alg:covering:phase}. The continuous primal-dual approach in phase $r$ and round $i$, used as a subroutine, is presented in Algorithm~\ref{alg:covering}.

\begin{algorithm}[!htb]
\caption{Phase Scheme for Algorithm~\ref{alg:covering}} \label{alg:covering:phase}
\begin{algorithmic}[1]
\State{$r \gets 1$, $\alpha(1)\gets\min_{j \in [n]}\{c_j / a_{1j}\}$, $i_1 \gets 1$.} \Comment{initialization for round $1$}
\For{each $j \in [n]$}
    \State $x^{(r)}_j \gets \min\{x'_j,\alpha(r)/(2nc_j)\}$.
\EndFor
\For{arriving covering constraint $A_i x \ge 1$} \Comment{$i=1, 2, ..., m$ for an unknown $m$}
\State{Run Algorithm~\ref{alg:covering}.} \label{line:covering-alg}
\If {$c^T x^{(r)} \ge \alpha(r)$} \Comment{start a new phase}
    \State $r \gets r+1$, $\alpha(r) \gets 2\alpha(r-1)$, $i_r \gets i$.
    \For{each $j \in [n]$}
    \State $x^{(r)}_j \gets \min\{x'_j,\alpha(r)/(2nc_j)\}$.
    \EndFor
    \State Go to line \ref{line:covering-alg}.
\EndIf
\For{each $j \in [n]$}
    \State $x_j \gets \max_{\ell \in [r]} \{x_j^{(\ell)}\}$. \Comment{this is the solution returned in each round $i$}
\EndFor
\EndFor
\end{algorithmic}
\end{algorithm}

\begin{algorithm}[!htb]
\caption{PDLA Online Covering in Phase $r$ and Round $i$} \label{alg:covering}

\renewcommand{\algorithmicrequire}{\textbf{Input:}}
\renewcommand{\algorithmicensure}{\textbf{Output:}}

\algorithmicrequire{ $x^{(r)}$: current solution, $\alpha(r)$: estimate for $\opt$, $A_i$: current row, $i_r$: starting round of phase $r$, $y_k$ for $k=i_r, ..., i-1$, $\lambda$: the confidence parameter, and $x'$: the advice for $x$.}

\algorithmicensure{ Updated $x^{(r)}$ and $y_i$.}

\begin{algorithmic}[1]

    \State $y_i \gets 0$. \Comment{the packing variable $y_i$ is used for the analysis}
    \State $\bar{x} \gets x^{(r)}.$ \label{line:covering-restore} \Comment{$\bar{x}$ is the value of $x^{(r)}$ at the end of the previous iteration}
    \For{$j \in [n]$ }
        \If{$A_i x' \geq 1$}
            \State $D_j \gets \frac{\lambda}{A_i \mathbf{1}} + \frac{(1-\lambda)x'_j \mathbf{1}_{\bar{x}_j < x'_j}}{A_i x'_c},$ 
        \Else
            \State $D_j \gets \frac{1}{A_i \mathbf{1}}.$
        \EndIf
        \State $Y^{(i-1)}_j \gets \sum_{k = i_r}^{i-1} a_{kj} y_k.$ \Comment{if phase $r$ just started, then $i=i_r$, so $Y^{(i-1)}_j \gets 0$}
        
        \State $B_j \gets \frac{\bar{x}_j + D_j}{\exp\left(\left(Y^{(i-1)}_j+ a_{ij}y_i\right) / c_j \right)}.$ 
    \EndFor
    \If {$A_i x^{(r)} < 1$}
        \While {$A_i x^{(r)} < 2$} \label{line:covering-while}
            \State Increase $y_i$ continuously.
            \For{each $j \in [n]$}
                \State Increase $x_j^{(r)}$ simultaneously by
                \[x^{(r)}_j \gets B_j \exp\left(\frac{Y^{(i-1)}_j + a_{ij} y_i }{c_j}\right) - D_j.\] \label{line:covering-inc}
            \EndFor
            \If{any $x^{(r)}_j$ reaches $x'_j$}
                    \State Break and go to line \ref{line:covering-restore}.
            \EndIf
            \If{$c^T x^{(r)} \ge \alpha(r)$}
                \State Break and return.
            \EndIf
        \EndWhile
    \EndIf
\end{algorithmic}
\end{algorithm}

We note that when we start a new phase $r$ from the violating constraint $i_r$, $y_{i_r}$ is set to zero. In phase $r$ and round $i$, we are actually considering the following covering and packing LPs:
\begin{align}
  \begin{aligned}
    \text{minimize } & c^T x^{(r)} 
    \text{ over } x^{(r)} \in \nnreals^n 
    \text{ subject to } A_k x^{(r)} \geq 1 \text{ for $k=i_r, ..., i$}
  \end{aligned} \label{opt:covering:phase}
\end{align}
and
\begin{align}
  \begin{aligned}
    \text{maximize } & \mathbf{1}^T y 
    \text{ over } y_k \ge 0 \text{ for $k=i_r, ..., i$ }
    \text{ subject to } \sum_{k=i_r}^i a_{kj} y_k \leq c_j \text{ for $j \in [n]$}.
  \end{aligned} \label{opt:packing:phase}
\end{align}

Although the augmentation is in a continuous fashion, it is not hard to implement it in a discrete way for any desired precision by binary search. The approach of satisfying each arriving violating constraint by a factor of 2 guarantees that the number of iterations is polynomially upper-bounded. This implies efficient applications on problems that generate covering LPs with
exponentially many or unbounded number of constraints, where violating constraints are retrieved by a separation oracle. The performance of Algorithm \ref{alg:covering:phase} is stated in Theorem \ref{thm:covering}, the formal version of Theorem \ref{thm:covering-inf}.

\begin{theorem} \label{thm:covering}
For the learning-augmented online covering LP problem, there exists an online algorithm that generates $x$ such that
\[c^T x \le \min\left\{O\left(\frac{1}{1-\lambda}\right) c^T x' + O(\log(\kappa n))\opt, O\left(\log\frac{\kappa n}{\lambda}\right) \opt\right\}\]
and encounters $O(n (\log \frac{c^T x}{\alpha(1)})(\log n + \log c^T x + \log \beta))$ violating constraints. If $x'$ is feasible for LP \eqref{opt:covering}, then $c^T x \le O\left(\frac{1}{1-\lambda}\right) c^T x'$.
Here, $\kappa:=\max_{j \in [n]}\{a^{\max}_j / a^{\min}_j\}$, $\beta=\max_{j \in [n]}\{a^{\max}_j/c_j\}$, $a^{\max}_j:=\max_{i \in [m]}\{a_{ij} \mid a_{ij} > 0\}$, and $a^{\min}_j:=\min_{i \in [m]}\{a_{ij} \mid a_{ij} > 0\}$.
\end{theorem}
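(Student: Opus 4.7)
The plan is to fix a phase $r$, carry out a primal-dual analysis on the restricted LPs \eqref{opt:covering:phase} and \eqref{opt:packing:phase}, and then sum the per-phase bounds using the geometric growth $\alpha(r+1)=2\alpha(r)$. Within a phase the continuous update $x^{(r)}_j = B_j \exp((Y^{(i-1)}_j+a_{ij}y_i)/c_j) - D_j$ is the solution of the ODE $\frac{dx^{(r)}_j}{dy_i} = \frac{a_{ij}}{c_j}(x^{(r)}_j + D_j)$, so differentiating $c^T x^{(r)}$ with respect to $y_i$ gives
\begin{equation*}
\frac{d(c^T x^{(r)})}{dy_i} \;=\; \sum_{j=1}^n a_{ij}\bigl(x^{(r)}_j + D_j\bigr) \;=\; A_i x^{(r)} + A_i D.
\end{equation*}
By the identities \eqref{eq:covering-sum-to-1}, $A_i D$ equals exactly $\lambda+(1-\lambda)$ in the $A_i x'\geq 1$ branch (provided the advice mask is non-trivial) and $1$ in the other branch, so $A_i D \leq 1$. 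Since the while loop runs only while $A_i x^{(r)}<2$, we get $\frac{d(c^T x^{(r)})}{dy_i}\leq 3$ in the ``online'' subphases, which translates the primal growth into a cost charged against the dual objective $\mathbf{1}^T y$.

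\textbf{Dual feasibility via telescoping.} Rearranging the update formula gives $\frac{Y^{(i-1)}_j+a_{ij}y_i}{c_j} = \ln\frac{x^{(r)}_j+D_j}{B_j}$. Because $B_j$ and $D_j$ are re-initialized each time an iteration ends (a constraint becomes $2$-satisfied, a coordinate reaches the advice $x'_j$, or the phase ends), I will telescope across these sub-intervals: whenever the coefficients are reset at line \ref{line:covering-restore}, $B_j$ is chosen so that the exponential factor at the new $y$ matches $\bar x_j + D_j$ (from the old $D_j$), so the running sum $\sum_{k=i_r}^{i} a_{kj} y_k = c_j \ln\frac{x^{(r)}_j+D_j}{B_{j,\text{initial}}}$ telescopes up to an additive $c_j$ error per sub-interval. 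Then I upper-bound $x^{(r)}_j \leq \alpha(r)/c_j$ (the phase termination rule) and lower-bound the initial denominator using $D_j \geq \lambda/(A_i\mathbf{1}) \geq \lambda/(n a^{\max}_j)$ and $x^{(r)}_j \geq \min\{x'_j,\alpha(r)/(2nc_j)\}$, which yields $\sum_k a_{kj} y_k \leq O(c_j \log(\kappa n/\lambda))$. Scaling the dual by $O(\log(\kappa n/\lambda))^{-1}$ gives a feasible packing solution, so $\mathbf{1}^T y/\log(\kappa n/\lambda) \leq \opt$ by weak duality.

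\textbf{Consistency bound.} The novel term in $D_j$, namely $(1-\lambda)x'_j\mathbf{1}_{\bar x_j<x'_j}/(A_i x'_c)$, is active only while $x^{(r)}_j$ has not yet reached the advice value $x'_j$. I will split $\frac{d(c^T x^{(r)})}{dy_i} = A_i x^{(r)} + A_i D$ and further split $A_i D$ into an ``online'' part of size $\leq \lambda$ (charged to $\mathbf{1}^T y$ and hence to $\opt$) and an ``advice'' part of size $(1-\lambda)$. By summing the contribution of the advice part across all iterations of a phase, using $\sum_{i:x^{(r)}_j<x'_j} a_{ij}/(A_i x'_c) \leq $ \emph{the total $y$-mass spent while $x_j$ was below $x'_j$} divided by a normalizing factor, and using the primal update, I bound the advice contribution to $c_j x^{(r)}_j$ by $c_j x'_j/(1-\lambda)$. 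Summing over $j$ gives the consistency term $O(1/(1-\lambda))\,c^T x'$. For infeasible advice, only the second branch of $D_j$ fires, so that phase contributes only to the robustness term, matching the min in the theorem statement.

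\textbf{Iteration count and main obstacle.} The number of phases is $O(\log(c^T x/\alpha(1)))$ by the doubling rule. Within a phase, each call to Algorithm~\ref{alg:covering} breaks out of the while loop either because the arriving constraint is $2$-satisfied, because some coordinate hit $x'_j$ (at most $n$ times per phase globally, since $x^{(r)}$ is monotone), or because the phase ends. In the first case, $A_i x^{(r)}$ doubles from ${<}1$ to ${\geq}2$, so at least one coordinate-contribution $a_{ij}x^{(r)}_j$ at least doubles, and hence at least one $x^{(r)}_j$ at least doubles; each variable can double only $O(\log n + \log c^T x + \log \beta)$ times before hitting the $\alpha(r)/c_j$ ceiling, starting from the initialization $\min\{x'_j,\alpha(r)/(2nc_j)\}$, giving the claimed per-phase bound. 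I expect the main technical obstacle to be the telescoping dual-feasibility argument: because $D_j$ and $B_j$ are reset many times within a single phase (once per advice crossing and once per incoming constraint), one must carefully verify that the residual errors in $\ln\frac{x^{(r)}_j+D_j}{B_j}$ at each reset accumulate only additively and are dominated by a single $O(\log(\kappa n/\lambda))$ factor, rather than compounding multiplicatively across the $O(n)$ sub-intervals.
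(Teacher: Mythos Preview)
Your overall structure is right, and the $\partial (c^T x^{(r)})/\partial y_i \le 3$ calculation matches the paper exactly. But there are two genuine gaps.

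\textbf{The telescoping for dual feasibility.} You correctly identify this as the obstacle, and your proposed resolution---accept an ``additive $c_j$ error per sub-interval''---would fail: there can be $\Theta(m+n)$ sub-intervals in a phase, so additive errors accumulate unboundedly. The paper's trick is that the telescope is \emph{exact}:
\[
\frac{\hat Y^{(t)}_j}{c_j} \;=\; \sum_{p=1}^t \ln\frac{x^{(r,p)}_j + D^{(p)}_j}{x^{(r,p-1)}_j + D^{(p)}_j}\,.
\]
This product does not telescope as written because $D^{(p)}_j$ changes. The missing lemma is the elementary inequality: for $a\ge b>0$ and $c_1\ge c_2\ge 0$, $(a+c_1)/(b+c_1)\le(a+c_2)/(b+c_2)$. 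Since $x^{(r,p)}_j\ge x^{(r,p-1)}_j$, you can replace every $D^{(p)}_j$ by $D^{\min}_j:=\min_p D^{(p)}_j$, and then the product telescopes to $\ln\frac{x^{(r,t)}_j+D^{\min}_j}{x^{(r,0)}_j+D^{\min}_j}$, a single logarithm. Separately, your numerator bound $x^{(r)}_j\le\alpha(r)/c_j$ does not yield the $\kappa$ dependence; the paper instead uses $x^{(r)}_j\le 2/a^{\min}_j$ (any larger value would already $2$-satisfy every constraint with $a_{ij}>0$), which combined with $D^{\min}_j\ge\lambda/(n a^{\max}_j)$ gives the $\log(\kappa n/\lambda)$ bound.

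\textbf{Consistency.} Your decomposition of $A_iD$ into ``$\lambda$'' and ``$1-\lambda$'' pieces leaves the $A_i x^{(r)}$ term (which can be as large as $2$) unaccounted for, and your sketch of why the advice contribution stays below $c_j x'_j/(1-\lambda)$ is not an argument. The paper splits instead by \emph{index sets}: let $\partial P_c$ collect the full rate $a_{ij}(x^{(r)}_j+D_j)$ over $j$ with $x^{(r)}_j<x'_j$, and $\partial P_u$ over the rest. Then $\partial P_c\ge 1-\lambda$ (the advice term in $D_j$ sums to exactly $1-\lambda$ over the not-yet-reached indices by \eqref{eq:covering-sum-to-1}) while $\partial P_u\le 2+\lambda$, so $\partial P(r)\le O(1/(1-\lambda))\,\partial P_c$; and $P_c$ is trivially at most $c^T x'$ since each variable in $P_c$ has not exceeded its advice value.

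\textbf{Iteration count.} Your ``some $x^{(r)}_j$ must double'' argument fails when $x'_j=0$: then $x^{(r)}_j$ is initialized to $0$, and doubling zero is useless. The paper handles this with a large/small threshold at $1/(2n a^{\max}_j)$: either a large variable grows by a factor $3/2$, or a small variable becomes large, and each can happen only boundedly often.
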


\begin{proof}
Let $P(r)=c^T x^{(r)}$ and $D(r)=\mathbf{1}^T y$ be the objective value of the primal and the dual in phase $r$, respectively. We use Algorithm~\ref{alg:covering:phase} and first show robustness, i.e., $c^T x \le O\left(\log\frac{\kappa n}{\lambda}\right) \opt$, then show consistency, i.e., $c^T x \le O\left(\frac{1}{1-\lambda}\right) c^T x' + O(\log(\kappa n))\opt$. Finally, we show that only $O(n (\log \frac{c^T x}{\alpha(1)})(\log n + \log c^T x + \log \beta))$ violating constraints are encountered.

\paragraph{Robustness.} To show that $c^T x \le O\left(\log\frac{\kappa n}{\lambda}\right) \opt$, we prove the following five claims:
\begin{enumerate} [label=(\roman*)]
\item \label{gpd1} $x$ is feasible for LP \eqref{opt:covering}.
\item \label{gpd2} For each \emph{finished} phase $r$, $\alpha(r) \leq 6 D(r)$.
\item \label{gpd3} $y/\Theta\left(\log\frac{\kappa n}{\lambda}\right)$ is feasible for LP \eqref{opt:packing:phase} in each phase.
\item \label{gpd4} The sum of the covering objective generated from phase 1 to $r$ is at most $2 \alpha(r)$.
\item \label{gpd5} Let $r'$ be the last phase, then the covering objective $c^T x \leq 2 \alpha(r')$.
\end{enumerate}

Equipped with these claims and weak duality, we have that
\[
c^T x \le \Theta(1) \alpha(r') \le \Theta(1) D(r') = O\left(\log\frac{\kappa n}{\lambda}\right) \opt.
\]

\begin{proof}[Proof of \ref{gpd1}.]
We prove that $x^{(r)}$ is feasible in phase $r$ by showing that the growing function in Algorithm~\ref{alg:covering} line \ref{line:covering-inc} increments $x^{(r)}$ in a continuous manner.
In the beginning of an iteration in round $i$, we have that
\[x^{(r)}_j = B_j \exp\left(\frac{Y^{(i-1)}_j+a_{ij}y_i}{c_j}\right) - D_j = \frac{\bar{x}_j + D_j}{\exp\left(\left(Y^{(i-1)}_j+a_{ij}y_i\right) / c_j\right)} \exp\left(\frac{Y^{(i-1)}_j+a_{ij}y_i}{c_j}\right) - D_j = \bar{x}_j.\]
By following Algorithm~\ref{alg:covering}, we have that $x^{(r)}$ is feasible for LP \eqref{opt:covering:phase} since we terminate the while loop at line \ref{line:covering-while} when $A_i x^{(r)} = 2 > 1$. Then, as $x$ is the coordinate-wise maximum of $\{x^{(r)}\}$, $x$ must be feasible for LP \eqref{opt:covering}.
\end{proof}

\begin{proof}[Proof of \ref{gpd2}.]
In the beginning of phase $r$, $x^{(r)}_j = \min\{x'_j,\alpha(r)/(2nc_j)\}$, so $P(r)$ is initially at most $\alpha(r) / 2$. The total increase of $P(r)$ is at least $\alpha(r) / 2$ as $P(r) \geq \alpha(r)$ when phase $r$ ends. Therefore, it suffices to show that in round $i$,
$$\frac{\partial P(r)}{\partial y_i} \leq 3 \frac{\partial D(r)}{\partial y_i}.$$
By considering the partial derivative of $P(r)$ with respect to $y_i$, when $A_i x' \ge 1$, we have that
\begin{align*}
\frac{\partial P(r)}{\partial y_i} &= \sum_{j=1}^n c_j \frac{\partial x^{(r)}_j}{\partial y_i}\\
&= \sum_{j=1}^n a_{ij} c_j \left(\frac{B_j}{c_j} \exp\left(\frac{Y^{(i-1)}_j + a_{ij} y_i}{c_j}\right)\right)\\
&= \sum_{j=1}^n a_{ij} \left(B_j \exp\left(\frac{Y^{(i-1)}_j + a_{ij} y_i}{c_j}\right) - D_j + D_j\right)\\
&= \sum_{j=1}^n a_{ij} \left(x^{(r)}_j + \frac{\lambda}{A_i \mathbf{1}} + \frac{(1-\lambda)x'_j \mathbf{1}_{x^{(r)}_j < x'_j}}{A_i x'_c}\right)\\
&\le 2 + \lambda + (1-\lambda) = 3 = 3 \frac{\partial D(r)}{\partial y_i}
\end{align*}
where the last inequality is due to the fact that $\sum_{j=1}^n a_{ij} x^{(r)}_j < 2$ and \eqref{eq:covering-sum-to-1}. The same result can be obtained when $A_i x' < 1$ by regarding $\lambda$ as 1.
\end{proof}

\begin{proof}[Proof of \ref{gpd3}.]
We show that $y/\Theta\left(\log\frac{\kappa n}{\lambda}\right)$ is feasible for LP \eqref{opt:packing:phase} in the last round $\ell_r$ of phase $r$. The argument applies to any round prior to $\ell_r$. We recall that within a phase $r$, each iteration ends in one of the following two cases: (1) the arriving constraint $i$ is satisfied by a factor of 2, or (2) there exists a variable $x^{(r)}_j$ that reaches the advice value $x'_j$. Suppose there are $t \le \ell_r - i_r + 1 + n$ iterations in phase $r$ since there are $\ell_r - i_r + 1$ covering constraints and $n$ variables advised. In iteration $p \le t$, let $x^{(r,p)}_j$, $B^{(p)}_j$, and $D^{(p)}_j$ be the value of $x^{(r)}_j$, $B_j$, and $D_j$ in the end of iteration $p$, respectively. Additionally, let $i^{(p)}$ denote the corresponding round in which iteration $p$ occurs and $\hat{Y}^{(p)}_j=\sum_{k =i_r}^{i^{(p)}} a_{kj}y_k$ be the accumulative weighted dual variable sum in the end of iteration $p$ with respect to coordinate $j$ in the primal. Note that we are only incrementing $y_{i^{(p)}}$ in iteration $p$. 
In the beginning of phase $r$, $Y^{(i_r)} = \hat{Y}^{(0)}$ is a zero vector; in the end of the last round $\ell_r$, $\hat{Y}^{(t)} = Y^{(\ell_r)}$. We have that
\[
x^{(r,t)}_j = B^{(t)}_j \exp\left(\frac{\hat{Y}^{(t)}_j}{c_j}\right) - D^{(t)}_j \implies \frac{\hat{Y}^{(t)}_j}{c_j} = \ln \frac{x^{(r,t)}_j + D^{(t)}_j}{B^{(t)}_j}.
\]
Then,
\begin{align*}
\frac{\hat{Y}^{(t)}_j}{c_j} &= \ln \frac{x^{(r,t)}_j + D^{(t)}_j}{B^{(t)}_j} = \ln \left(\frac{x^{(r,t)}_j + D^{(t)}_j}{x^{(r,t-1)}_j + D^{(t)}_j}\exp\left(\frac{\hat{Y}^{(t-1)}_j}{c_j}\right)\right) \\
&= \frac{\hat{Y}^{(t-1)}_j}{c_j} + \ln \frac{x^{(r,t)}_j + D^{(t)}_j}{x^{(r,t-1)}_j + D^{(t)}_j}\\
&= ...\\
&= \frac{\hat{Y}^{(0)}_j}{c_j} + \sum_{p=1}^t \ln \frac{x^{(r,p)}_j + D^{(p)}_j}{x^{(r,p-1)}_j + D^{(p)}_j}\\
&= \ln \prod_{p=1}^t \frac{x^{(r,p)}_j + D^{(p)}_j}{x^{(r,p-1)}_j + D^{(p)}_j}.
\end{align*}
From here, we use the following claims.
\begin{claim} \label{cl:a-b-c1-c2}
For all scalars $a \geq b > 0$ and $c_1 \ge c_2 \geq 0$,
$$\frac{a + c_1}{b + c_1} \leq \frac{a + c_2}{b + c_2}.$$
\end{claim}
\begin{claim} \label{cl:sat-by-2}
$x^{(r)}_j \le 2/a^{\min}_j$.
\end{claim}
\begin{proof}
Suppose $x^{(r)}_j > 2/a^{\min}_j$, then by definition of $a^{\min}_j$, if $x^{(r)}_j > 2/a^{\min}_j$ in the current round, then it contradicts to the terminating condition in Algorithm \ref{alg:covering} line \ref{line:covering-while} since $A_i x^{(r)} \ge a^{\min}_j x^{(r)}_j > 2$; if $x^{(r)}_j > 2/a^{\min}_j$ in an earlier round of phase $r$, then it contradicts to the condition that the current round is violated since $A_i x^{(r)} \ge a^{\min}_j x^{(r)}_j > 2$.
\end{proof}
Recall that $\kappa:=\max_{j \in [n]}\{a^{\max}_j / a^{\min}_j\}$, $a^{\max}_j:=\max_{i \in [m]}\{a_{ij} \mid a_{ij} > 0\}$, and $a^{\min}_j:=\min_{i \in [m]}\{a_{ij} \mid a_{ij} > 0\}$.
Notice that $D^{(p)}_j \geq \frac{\lambda}{a^{\max}_j n}$ for all $p \le t$. Let $D^{min}_j := \min_{p \in [t]} \{D^{(p)}_j\} \geq \frac{\lambda}{a^{\max}_j n}$. By Claim~\ref{cl:a-b-c1-c2}, we have that
\begin{align*}
\frac{\hat{Y}^{(t)}_j}{c_j} &= \ln \prod_{p=1}^t \frac{x^{(r,p)}_j + D^{(p)}_j}{x^{(r,p-1)}_j + D^{(p)}_j}
\leq \ln \prod_{p=1}^t \frac{x^{(r,p)}_j + D^{min}_j}{x^{(r,p-1)}_j + D^{min}_j} \\
&= \ln \frac{x^{(r,t)}_j + D^{min}_j}{x^{(r,0)}_j + D^{min}_j}
\leq \ln \left(1 + \frac{2}{a^{\min}_j D^{min}_j}\right)
= O\left(\log \frac{\kappa n}{\lambda}\right)
\end{align*}
where the last inequality is by Claim \ref{cl:sat-by-2} which implies that $x^{(r,t)}_j \le 2/a^{\min}_j$.
\end{proof}

\begin{proof}[Proof of \ref{gpd4}.]
The sum of the covering objective generated from phase 1 to $r$ is at most
$$\sum_{k=1}^r \alpha(k) = \sum_{k=1}^r \frac{\alpha(r)}{2^{k-r}} \leq 2 \alpha(r).$$
\end{proof}

\begin{proof}[Proof of \ref{gpd5}.]
In the last phase $r'$, $x$ is feasible because it is the coordinate-wise maximum of $\{x^{(r)}\}_{r \in [r']}$. We have
$$c^T x = \sum_{j=1}^n c_j x_j \leq \sum_{j=1}^n \sum_{r=1}^{r'} c_j x^{(r)}_j = \sum_{r=1}^{r'} \left(\sum_{j=1}^n c_j x^{(r)}_j\right) \leq \sum_{r=1}^{r'} \alpha(r) \leq 2 \alpha(r')$$
where the first inequality holds because $x_j = \max_r \{x^{(r)}_j\} \leq \sum_r x^{(r)}_j$, the second inequality is by the fact that the covering objective $\sum_j c_j x^{(r)}_j$ cannot exceed the estimated lower bound $\alpha(r)$, while the last inequality is by \ref{gpd4}.
\end{proof}

\paragraph{Consistency.} We then show that $c^T x \leq O\left(\frac{1}{1-\lambda}\right) c^T x' + O(\log(\kappa n))\opt$. Suppose Algorithm \ref{alg:covering} is in phase $r$ with constraint $i$ arriving. If $A_i x' < 1$, then the change of $c^T x$ simply follows \ref{gpd2} and \ref{gpd3} by regarding $\lambda$ as 1, so $c^T x \le O(\log(\kappa n))\opt$. For the more interesting case when $A_i x' \ge 1$, we decompose $\frac{\partial P(r)}{\partial y_i}=\frac{\partial P_c}{\partial y_i}+\frac{\partial P_u}{\partial y_i}$, where $P_c$ is the component of the primal objective due to the advice and $P_u$ is the component of the primal objective due to the online algorithm. 
We have that the rate of change is credited to $\partial P_c$ if $x^{(r)}_j < x'_j$ and the rate of change is credited to $\partial P_u$ otherwise, if $x^{(r)}_j \ge x'_j$. It suffices to check the rate of change since $x^{(r)}_j$ is initialized to $\min\{x'_j, \alpha(r)/(2n c_j)\}$, i.e., $P_c$ is non-negative and $P_u=0$ in the beginning of phase $r$.\footnote{The same argument is used for proving the consistency in Theorems \ref{thm:covering-box}, \ref{thm:sdp}, and \ref{thm:sdp-box}.} 
In particular, in round $i$,
\begin{align*}
\frac{\partial P_c}{\partial y_i} &= \sum_{j \in [n]: x^{(r)}_j < x'_j} a_{ij} \left(x^{(r)}_j + \frac{\lambda}{A_i \mathbf{1}} + \frac{(1-\lambda)x'_j \mathbf{1}_{x^{(r)}_j < x'_j}}{A_i x'_c}\right) \geq 0 + \frac{a^{\min}_j\lambda}{a^{\max}_jn} + (1 - \lambda),\\
\frac{\partial P_u}{\partial y_i} &= \sum_{j \in [n]: x^{(r)}_j \ge x'_j} a_{ij} \left(x^{(r)}_j + \frac{\lambda}{A_i \mathbf{1}} + \frac{(1-\lambda)x'_j \mathbf{1}_{x^{(r)}_j < x'_j}}{A_i x'_c}\right) \leq 2 + \lambda + 0.
\end{align*}
Thus we have $\frac{\partial P_u}{\partial y_i} \leq \frac{2 + \lambda}{a^{\min}_j\lambda / \left(a^{\max}_jn\right) + 1 - \lambda} \cdot \frac{\partial P_c}{\partial y_i}$, so that
\[\frac{\partial P(r)}{\partial y_i} \leq \left(1 + \frac{2 + \lambda}{\frac{a^{\min}_j\lambda}{a^{\max}_jn} + 1 - \lambda}\right) \frac{\partial P_c}{\partial y_i} = O\left(\frac{1}{1-\lambda}\right) \frac{\partial P_c}{\partial y_i}.\]
We note that if $x'$ is feasible, then $A_i x' \ge 1$ for all $i \in [m]$, so $c^T x \leq O\left(\frac{1}{1-\lambda}\right) c^T x'$.

\paragraph{Bounding the number of violating constraints.} Finally, we show that Algorithm~\ref{alg:covering:phase} encounters

\noindent$O(n (\log \frac{c^T x}{\alpha(1)})(\log n + \log c^T x + \log \beta))$ violating constraints. 
We first show that there are $O(\log (c^T x/\alpha(1)))$ phases. The estimated lower bound $\alpha$ doubles when we start a new phase. Suppose there are $r'$ phases, then $\alpha(1) \cdot 2^{r'-1} = O(c^T x)$. This implies that $r' = O(\log (c^T x/\alpha(1)))$.

In each phase $r$, when a violating  constraint $i$ just arrived, we increment $x^{(r)}$ until the constraint is satisfied by a factor of 2. One of the following two cases must hold after updating $x^{(r)}$: (1) there exists a \emph{large} variable $x^{(r)}_j \geq 1/(2na^{\max}_j)$ that is updated to at least $3x^{(r)}_j/2$, or (2) there exists a \emph{small} variable $x^{(r)}_j < 1/(2na^{\max}_j)$ that becomes large, i.e., $x^{(r)}_j$ is updated to at least $1/(2na^{\max}_j)$. Let $L$ and $S$ be the set of large and small variable subscript labels before the violating constraint $i$ arrives, respectively, and $\hat{x}^{(r)}_j$ be the value of $x^{(r)}_j$ after the update. If none of these two cases holds, then
\[\sum_{j=1}^n a_{ij} \hat{x}^{(r)}_j < \frac{3}{2}\sum_{j \in L}a_{ij}x^{(r)}_j + \sum_{j \in S}\frac{a_{ij}}{2na^{\max}_j} < \frac{3}{2} + \frac{1}{2} = 2\]
where the second inequality is by the fact that constraint $i$ is violated and $a_{ij} \leq a^{\max}_j$.
This implies that constraint $i$ is not satisfied by a factor of $2$ after the update, a contradiction.

Suppose $x^{(r)}_j$ has been updated $t$ times since it was large in phase $r$, then
\[\frac{c_j}{2na^{\max}_j}\left(\frac{3}{2}\right)^{t} =  O(c^T x)\]
which implies that $t=O(\log n + \log c^T x + \log (a^{\max}_j/c_j)) = O(\log n + \log c^T x + \log \beta)$.

There are $n$ variables, each variable can be updated from small to large once and updated $t$ times by a factor of $3/2$ since it was large in each phase. Hence, Algorithm~\ref{alg:covering:phase} encounters
$O(n (\log \frac{c^T x}{\alpha(1)})(\log n + \log c^T x + \log \beta))$
violating covering constraints.
\end{proof}

\subsection{Adding Box Constraints} \label{subsec:covering-box}
We recall that the covering LP \eqref{opt:covering-box} with box constraints is the following. 
\begin{align*}
  \begin{aligned}
    \text{minimize } & c^T x 
    \text{ over } x \in [0,1]^n 
    \text{ subject to } A x \geq \mathbf{1}.
  \end{aligned}
\end{align*}
Our PDLA algorithm simultaneously considers the dual packing LP:
\begin{align}
  \begin{aligned}
    \text{maximize } & \mathbf{1}^T y - \mathbf{1}^T z
    \text{ over } y \in \nnreals^m \text{ and } z \in \nnreals^n
    \text{ subject to } A^T y - z \leq c.
  \end{aligned}
                        \label{opt:packing-box}
\end{align}
We recall that we assume that the advice $x' \in [0,1]^n$.

We use the guess-and-double approach similar to Algorithm~\ref{alg:covering:phase} and \ref{alg:covering}, with a tweak of maintaining the set $T$, which denotes the subscript indices of the $x$ variables that are \emph{tight}. 
In phase $r$, upon the arrival of a violating constraint $i$, we increment $x^{(r)}_j$ in terms of an exponential function of $y_i$ and $z_j$ subject to $x^{(r)}_j \le 1$. 
Once $x^{(r)}_j = 1$, we add $j$ to the tight set $T$ and stop incrementing $x^{(r)}_j$, but we still increment $y_i$ continuously and $z_j$ with rate $a_{ij} y_i$ in order to maintain dual feasibility. 
In the beginning of each phase, $z$ is a zero vector and $T$ is an empty set. 
$z_j = 0$ whenever $j \in [n] \setminus T$.

Subject to $x^{(r)}_j \le 1$, $x^{(r)}_j$ is increased until the cost \emph{outside of the tight set} exceeds the \emph{remaining capacity} by a factor of 2. 
More specifically, when we have a violating constraint $i$, we have that $\sum_{j \in [n] \setminus T} a_{ij}x^{(r)}_j < 1 - \sum_{j \in T} a_{ij}$, and we increment $x^{(r)}_j$ until $\sum_{j \in [n] \setminus T} a_{ij}x^{(r)}_j \ge 2\left( 1 - \sum_{j \in T} a_{ij}\right)$. 
Let $A'_i$ denote the vector with entries of $A_i$ by considering only the coordinates that are not tight. The entry is zero if the coordinate is tight. More formally,
\[
a'_{ij} = 
\begin{cases}
a_{ij} & \text{if } x_j < 1, \text{ i.e., } j \in [n] \setminus T,\\
0 & \text{if } x_j = 1, \text{ i.e., } j \in T,\\
\end{cases}
\]
where $a'_{ij}$ denotes the $j$-th coordinate entry of $A'_i$. 

If the advice adequately covers constraint $i$, i.e., $A'_i x' \geq 1$, we increase the variables $x^{(r)}_j$ with growth rate
\[\frac{a_{ij}}{c_j} \left(x^{(r)}_j + \left(\frac{\lambda}{A'_i \mathbf{1}} + \frac{(1-\lambda)x'_j \mathbf{1}_{x^{(r)}_j < x'_j}}{A'_i x'_c}\right) \left(1 - \sum_{j \in T} a_{ij}\right)\right).\]
Alternatively if the advice does not cover the constraint enough, 
i.e., $A'_i x' < 1$, we increase $x^{(r)}_j$ with growth rate
\[\frac{a_{ij}}{c_j} \left(x^{(r)}_j + \frac{1 - \sum_{j \in S} a_{ij}}{A'_i \mathbf{1}}\right).\]
We note that we increment $x^{(r)}$ only when $1 - \sum_{j \in T} a_{ij} > 0$, since otherwise the constraint $A_i x^{(r)} \ge \sum_{j \in T} a_{ij} \ge 1$ is already satisfied and there is nothing to be updated.
Similar to \eqref{eq:covering-sum-to-1}, we have that 
\begin{equation} \label{eq:covering-box-sum-to-1}
\sum_{j \in [n] \setminus T}{\frac{a_{ij}}{A'_i \mathbf{1}}} = 1 \text { and } \sum_{j \in [n] \setminus T}\frac{a_{ij} x'_j \mathbf{1}_{x^{(r)}_j < x'_j}}{A'_i x'_c} = 1.
\end{equation}

Each iteration now ends in one of the following four cases: (1) the arriving constraint $i$ induces a cost outside of the tight set that exceeds the remaining capacity by a factor of 2, (2) there exists a variable $x^{(r)}_j$ that reaches the advice value $x'_j$, (3) there exists a variable $x^{(r)}_j$ that reaches 1, or (4) the objective $c^T x^{(r)}$ reaches $\alpha(r)$. 
The coefficients $D_j$ and $B_j$ for fitting boundary conditions are defined based on the value of $x^{(r)}_j$ in the end of the previous iteration, stored as $\bar{x}_j$. 
Again, the indicator value used for the growth rate, $\mathbf{1}_{x^{(r)}_j < x'_j}$, has the same value as $\mathbf{1}_{\bar{x}_j < x'_j}$ during an iteration.

The main algorithm that uses the phase scheme is presented in Algorithm~\ref{alg:covering-box:phase}. 
The continuous primal-dual approach in phase $r$ and round $i$, used as a subroutine, is presented in Algorithm~\ref{alg:covering-box}.

\begin{algorithm}[!htb]
\caption{Phase Scheme for Algorithm~\ref{alg:covering-box}} \label{alg:covering-box:phase}
\begin{algorithmic}[1]
\State{$r \gets 1$, $\alpha(1)\gets\min_{j \in [n]}\{c_j / a_{1j}\}$, $T \gets \emptyset$, $i_1 \gets 1$.} \Comment{initialization for round $1$}
\For{each $j \in [n]$}
    \State $x^{(r)}_j \gets \min\{x'_j, \alpha(r)/(2nc_j)\}$, $z_j \gets 0$.
\EndFor
\For{arriving covering constraint $A_i x \ge 1$} \Comment{$i=1, 2, ..., m$ for an unknown $m$}
\State{Run Algorithm~\ref{alg:covering-box}.} \label{line:covering-box-alg}
\If {$c^T x^{(r)} \ge \alpha(r)$} \Comment{start a new phase}
    \State $r \gets r+1$, $\alpha(r) \gets 2\alpha(r-1)$, $T \gets \emptyset$, $i_r \gets i$.
    \For{each $j \in [n]$}
    \State $x^{(r)}_j \gets \min\{x'_j, \alpha(r)/(2nc_j)\}$, $z_j \gets 0$.
    \EndFor
    \State Go to line \ref{line:covering-box-alg}.
\EndIf
\For{each $j \in [n]$}
    \State $x_j \gets \max_{\ell \in [r]} \{x_j^{(\ell)}\}$. \Comment{this is the solution returned in each round $i$}
\EndFor
\EndFor
\end{algorithmic}
\end{algorithm}

\begin{algorithm}[!htb]
\caption{PDLA Online Covering with Box Constraints in Phase $r$ and Round $i$} \label{alg:covering-box}

\renewcommand{\algorithmicrequire}{\textbf{Input:}}
\renewcommand{\algorithmicensure}{\textbf{Output:}}

\algorithmicrequire{ $x^{(r)}$: current solution, $\alpha(r)$: estimate for $\opt$, $A_i$: current row, $i_r$: starting round of phase $r$, $y_k$ for $k=i_r, ..., i-1$, $z$: dual variable vector, $T$: tight variable set, $\lambda$: the confidence parameter, and $x'$: the advice for $x$.}

\algorithmicensure{ Updated $x^{(r)}$, $y_i$, and $z$.}

\begin{algorithmic}[1]

    \State $y_i \gets 0$. \Comment{the packing variable $y_i$ is used for the analysis}
    \State $\bar{x} \gets x^{(r)}.$ \label{line:covering-box-restore} \Comment{$\bar{x}$ is the value of $x^{(r)}$ at the end of the previous iteration}
    \For{$j \in [n]$}
        \If{$A_i x' \geq 1$}
            \State $D_j \gets \left(\frac{\lambda}{A'_i \mathbf{1}} + \frac{(1-\lambda)x'_j \mathbf{1}_{\bar{x}_j < x'_j}}{A'_i x'_c}\right) \left(1 - \sum_{j \in T} a_{ij}\right),$ 
        \Else
            \State $D_j \gets \frac{1 - \sum_{j \in T} a_{ij}}{A'_i \mathbf{1}}.$
        \EndIf
        \State $Y^{(i-1)}_j \gets \sum_{k=i_r}^{i-1} a_{kj} y_k.$ \Comment{if phase $r$ just started, then $i=i_r$, so $Y^{(i-1)}_j \gets 0$}
        \State $B_j \gets \frac{\bar{x}_j + D_j}{\exp\left(\left(Y^{(i-1)}_j+ a_{ij}y_i-z_j\right) / c_j\right)}.$ 
    \EndFor
        \If {$A_i x^{(r)} < 1$}
            \While {$T \neq [n]$ and $\sum_{j \in [n] \setminus T} a_{ij}x^{(r)}_j < 2\left(1 - \sum_{j \in T} a_{ij}\right)$} \label{line:covering-box-while}
            \State Increase $y_i$ continuously.
            \For{each $j \in [n]$}
                \If{$j \in T$}
                    \State Increase $z_j$ with rate $a_{ij} y_i$.
            \EndIf
                \State Update $x_j^{(r)}$ simultaneously by
                \[x^{(r)}_j \gets \min\left\{1,B_j \exp\left(\frac{Y^{(i-1)}_j + a_{ij} y_i -z_j}{c_j}\right) - D_j\right\}.\] \label{line:covering-box-inc}
            \EndFor
            \If{any $x^{(r)}_j=x'_j < 1$}
                    \State Break and go to line \ref{line:covering-box-restore}.
                \EndIf
                \If{any $x^{(r)}_j=1$ for $j \notin T$}
                    \State Add $j$ to $T$ and go to line \ref{line:covering-box-restore}.
                    \EndIf
            \If{$c^T x^{(r)} \ge \alpha(r)$}
                \State Break and return.
            \EndIf
        \EndWhile
        \If {$T = [n]$ and $A_i x^{(r)} < 1$}
        \State \Return no feasible solution. \label{line:covering-box-infeasible}
    \EndIf
        \EndIf
\end{algorithmic}
\end{algorithm}

We note that when we start a new phase $r$ from the violating constraint $i_r$, $y_{i_r}$ is set to zero and $z$ is set to a zero vector. In phase $r$ and round $i$, we are actually considering the following covering and packing LPs with box constraints:
\begin{align}
  \begin{aligned}
    \text{minimize } & c^T x^{(r)} 
    \text{ over } x^{(r)} \in [0,1]^n 
    \text{ subject to } A_k x^{(r)} \geq 1 \text{ for $k=i_r, ..., i$}.
  \end{aligned} \label{opt:covering-box:phase}
\end{align}
and
\begin{align}
  \begin{aligned}
    \text{maximize } & \mathbf{1}^T y - \mathbf{1}^T z
    \text{ over } y_k \ge 0 \text{ for $k=i_r, ..., i$ and } z \in \nnreals^n
    \text{ subject to } \sum_{k=i_r}^i a_{kj} y_k - z_j \leq c_j \text{ for $j \in [n]$}.
  \end{aligned} \label{opt:packing-box:phase}
\end{align}

In Algorithm~\ref{alg:covering-box} line \ref{line:covering-box-inc}, when $j \in T$, $x^{(r)}_j$ remains unchanged, since $z_j$ is increased with rate $a_{ij}y_i$. This ensures that approximate dual feasibility is maintained when $x^{(r)}_j$ is tight. The performance of Algorithm \ref{alg:covering-box:phase} is stated in Theorem \ref{thm:covering-box}, the formal version of Theorem \ref{thm:covering-box-inf}.

\begin{theorem} \label{thm:covering-box}
For the learning-augmented online covering LP problem with box constraints, there exists an online algorithm that generates $x$ such that
\[c^T x \le \min\left\{O\left(\frac{1}{1-\lambda}\right) c^T x' + O(\log s) \opt, O\left(\log \frac{s}{\lambda}\right) \opt\right\}\]
and encounters $O(n \log s \log \frac{c^T x}{\alpha(1)})$ violating constraints. If $x'$ is feasible for LP \eqref{opt:covering-box}, then $c^T x \le O\left(\frac{1}{1-\lambda}\right) c^T x'$.
Here,
\[s:= \max_{i \in [m], T \subseteq [n]} \left\{\frac{\sum_{j \in [n] \setminus T} a_{ij}}{1-\sum_{j \in T} a_{ij}} \mid 1-\sum_{j \in T} a_{ij} > 0\right\}.\]
\end{theorem}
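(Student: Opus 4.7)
The plan is to mirror the five-claim structure used in the proof of Theorem~\ref{thm:covering}, but with two substantive modifications that account for the box constraints and the tight set~$T$. Let $P(r)=c^Tx^{(r)}$ be the primal objective in phase~$r$ and $D(r)=\mathbf{1}^T y-\mathbf{1}^T z$ the dual objective in that phase. Robustness will follow once we establish: \textbf{(i)} feasibility of $x$ in LP~\eqref{opt:covering-box}, or else a proof that no feasible solution exists (line~\ref{line:covering-box-infeasible} of Algorithm~\ref{alg:covering-box}); \textbf{(ii)} $\alpha(r)\le 6D(r)$ for every completed phase; \textbf{(iii)} the scaled dual $(y,z)/\Theta(\log(s/\lambda))$ is feasible for LP~\eqref{opt:packing-box:phase}; \textbf{(iv)} the geometric sum $\sum_{k\le r}\alpha(k)\le 2\alpha(r)$; and \textbf{(v)} $c^T x\le 2\alpha(r')$ for the final phase~$r'$. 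Combining these with weak duality yields $c^Tx\le\Theta(1)\alpha(r')\le\Theta(1)D(r')=O(\log(s/\lambda))\opt$.

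For \textbf{(i)}, continuity of the update in line~\ref{line:covering-box-inc} follows from the same boundary-matching computation as before; the min with $1$ just truncates $x^{(r)}_j$ at the box. The exit condition $\sum_{j\in[n]\setminus T}a_{ij}x^{(r)}_j\ge 2(1-\sum_{j\in T}a_{ij})$ combined with $\sum_{j\in T}a_{ij}\cdot 1$ already contributed to $A_ix^{(r)}$ gives $A_ix^{(r)}\ge 1$. For \textbf{(ii)}, the initialization contributes at most $\alpha(r)/2$, so it suffices to show $\partial P(r)/\partial y_i\le 3\,\partial D(r)/\partial y_i=3$. Differentiating and using the boundary conditions for $B_j$ and $D_j$ (and noting the $z_j$ correction exactly cancels the $y_i$ contribution when $j\in T$, so tight variables do not contribute to $\partial P/\partial y_i$) gives
\[
\frac{\partial P(r)}{\partial y_i}=\sum_{j\in[n]\setminus T}a_{ij}\left(x^{(r)}_j+D_j\right)\le 2\left(1-\sum_{j\in T}a_{ij}\right)+\left(1-\sum_{j\in T}a_{ij}\right)\le 3,
\]
where the first bound uses the while-loop invariant and~\eqref{eq:covering-box-sum-to-1}. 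The subtracted contribution $\sum_j z_j$ only appears when tight variables would otherwise over-grow, and is exactly what keeps $\partial D(r)/\partial y_i=1$.

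For \textbf{(iii)}, which is the main technical obstacle, we run the same telescoping argument across iterations within phase~$r$, but now we must bound $\hat Y^{(t)}_j-z_j^{(t)}$ rather than $\hat Y^{(t)}_j$ alone. Because $z_j$ only starts increasing once $j\in T$, and is tuned so that $x^{(r)}_j$ stays constant at $1$ afterwards, the exponent $(\hat Y^{(t)}_j+a_{ij}y_i-z_j)/c_j$ is governed by the pre-tight history. Writing $D^{\min}_j$ for the smallest $D^{(p)}_j$ encountered in the phase, Claim~\ref{cl:a-b-c1-c2} gives
\[
\frac{\hat Y^{(t)}_j-z_j^{(t)}}{c_j}\le\ln\frac{x^{(r,t)}_j+D^{\min}_j}{x^{(r,0)}_j+D^{\min}_j}\le\ln\!\left(1+\frac{1}{D^{\min}_j}\right).
\]
The new ingredient is to lower-bound $D^{\min}_j$ using the sparsity parameter. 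When $A'_ix'\ge 1$, $D^{(p)}_j\ge \lambda(1-\sum_{j\in T}a_{ij})/(A'_i\mathbf{1})$, and the definition of $s$ precisely gives $(A'_i\mathbf{1})/(1-\sum_{j\in T}a_{ij})\le s$, so $D^{\min}_j\ge\lambda/s$; the $A'_ix'<1$ branch gives the same bound with $\lambda$ replaced by $1$. Hence the scaled dual $(y,z)/\Theta(\log(s/\lambda))$ is feasible. Claims \textbf{(iv)} and \textbf{(v)} are identical to those in Theorem~\ref{thm:covering}.

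For the consistency bound, decompose $\partial P(r)/\partial y_i=\partial P_c/\partial y_i+\partial P_u/\partial y_i$ according to whether $x^{(r)}_j<x'_j$, exactly as in Theorem~\ref{thm:covering}. Using~\eqref{eq:covering-box-sum-to-1} and the lower bound on $D^{\min}_j$ in terms of~$s$, we obtain $\partial P_c/\partial y_i\ge \lambda/s+(1-\lambda)$ and $\partial P_u/\partial y_i\le 2+\lambda$, so $\partial P_u/\partial y_i=O(1/(1-\lambda))\,\partial P_c/\partial y_i$. When $A'_ix'<1$ the argument specializes to the robust bound by treating $\lambda$ as $1$. If $x'$ is globally feasible then every round falls into the first case and $c^Tx\le O(1/(1-\lambda))c^Tx'$. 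Finally, for the iteration count, the same large-variable/small-variable dichotomy used in Theorem~\ref{thm:covering} now hinges on the sparsity: if neither a large-variable doubling (by factor $3/2$) nor a small-to-large transition occurs, one shows $\sum_j a_{ij}x^{(r)}_j<2(1-\sum_{j\in T}a_{ij})$ via the definition of $s$, contradicting the while-loop exit. Each of the $n$ variables can therefore experience at most $O(\log s)$ large-scale updates per phase plus one transition, and there are $O(\log(c^Tx/\alpha(1)))$ phases, giving the stated $O(n\log s\log(c^Tx/\alpha(1)))$ bound. The primary obstacle throughout is step~\textbf{(iii)}: keeping the telescoping argument clean in the presence of the auxiliary $z$ variable and converting the local growth rates into the sparsity-based bound.
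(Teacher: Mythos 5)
Your proposal mirrors the paper's proof almost exactly: the same five-claim robustness structure, the same telescoping argument for dual feasibility with $D^{\min}_j\ge\lambda/s$ derived from the definition of the sparsity~$s$, the same $P_c/P_u$ decomposition for consistency, and the same large/small-variable dichotomy with $O(\log s)$ factor-$3/2$ updates per variable per phase for the iteration bound.

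One small inaccuracy: you assert $\partial D(r)/\partial y_i = 1$, but with $z_j$ increasing at rate $a_{ij}$ (in $y_i$-units) for each $j\in T$, the correct derivative is $\partial D(r)/\partial y_i = 1-\sum_{j\in T}a_{ij}$, which can be strictly less than~$1$. Your displayed chain actually arrives at the correct quantity $3\bigl(1-\sum_{j\in T}a_{ij}\bigr)$ as the penultimate step, which is exactly $3\,\partial D(r)/\partial y_i$; the trailing ``$\le 3$'' and the accompanying remark about the $z$-terms keeping $\partial D(r)/\partial y_i=1$ are the only parts that are off. This does not change the substance of the argument, since the needed inequality $\partial P(r)/\partial y_i\le 3\,\partial D(r)/\partial y_i$ is in fact established by your displayed calculation.
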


\begin{proof}

Let $P(r)=c^T x^{(r)}$ and $D(r)=\mathbf{1}^T y - \mathbf{1}^T z$ be the objective value of the primal and the dual in phase $r$, respectively. We use Algorithm~\ref{alg:covering-box:phase} and first show robustness, i.e., $c^T x \le O\left(\log\frac{s}{\lambda}\right) \opt$, then show consistency, i.e., $c^T x \le O\left(\frac{1}{1-\lambda}\right) c^T x' + O(\log s)\opt$. Finally, we show that only $O(n \log s \log \frac{c^T x}{\alpha(1)})$ violating constraints are encountered.

\paragraph{Robustness.}
To show that $c^T x \le O\left(\log\frac{s}{\lambda}\right) \opt$, we prove the following five claims:
\begin{enumerate} [label=(\roman*)]
\item \label{gpd1-box} If line \ref{line:covering-box-infeasible} in Algorithm \ref{alg:covering-box} is not encountered, then $x$ is feasible for LP \eqref{opt:covering-box}.
\item \label{gpd2-box} For each \emph{finished} phase $r$, $\alpha(r) \leq 6 D(r)$.
\item \label{gpd3-box} $(y,z)/\Theta\left(\log\frac{s}{\lambda}\right)$  
is feasible for LP \eqref{opt:packing-box:phase} in each phase.
\item \label{gpd4-box} The sum of the covering objective generated from phase 1 to $r$ is at most $2 \alpha(r)$.
\item \label{gpd5-box} Let $r'$ be the last phase, then the covering objective $c^T x \leq 2 \alpha(r')$.
\end{enumerate}
The proofs of \ref{gpd4-box} and \ref{gpd5-box} are the same as the ones in Theorem~\ref{thm:covering}, hence omitted. Equipped with these claims and weak duality, we have that
\[
c^T x \le \Theta(1) \alpha(r') \le \Theta(1) D(r') = O\left(\log\frac{s}{\lambda}\right) \opt.
\]

\begin{proof}[Proof of \ref{gpd1}.]
We prove that $x^{(r)}$ is feasible in phase $r$ by showing that the growing functions in Algorithm \ref{alg:covering-box} line \ref{line:covering-box-inc} increments $x^{(r)}$ in a continuous manner. In the beginning of an iteration in round $i$, we have that
\[x^{(r)}_j = B_j \exp\left(\frac{Y^{(i-1)}_j+a_{ij}y_i}{c_j}\right) - D_j = \frac{\bar{x}_j + D_j}{\exp\left(\left(Y^{(i-1)}_j+a_{ij}y_i\right) / c_j\right)} \exp\left(\frac{Y^{(i-1)}_j+a_{ij}y_i}{c_j}\right) - D_j = \bar{x}_j.\]
By following Algorithm \ref{alg:covering-box}, we have that $x^{(r)}$ is feasible for LP \eqref{opt:covering-box:phase} since we terminate the while loop at line \ref{line:covering-box-while} when the cost outside of the tight set exceeds the remaining capacity by a factor of 2. Then, as $x$ is the coordinate-wise maximum of $\{x^{(r)}\}$, capped at 1, $x$ must be feasible for LP \eqref{opt:covering-box}.
\end{proof}

\begin{proof}[Proof of \ref{gpd2-box}.]
In the beginning of phase $r$, $x^{(r)}_j = \min\{x'_j,\alpha(r) / (2nc_j)\}$, so $P(r)$ is initially at most $\alpha(r) / 2$. The total increase of $P(r)$ is at least $\alpha(r) / 2$ as $P(r) \geq \alpha(r)$ when phase $r$ ends. Therefore, it suffices to show that
$$\frac{\partial P(r)}{\partial y_i} \leq 3 \frac{\partial D(r)}{\partial y_i}.$$
Recall that $T$ is the set of the tight $x$ indices and the $z_j$ variables in $T$ are increasing with rate $a_{ij} y_i$, we have that
\[\frac{\partial D(r)}{\partial y_i} = 1 - \sum_{j \in T} a_{ij}.\]
When $y_i$ is increasing, this partial derivative is always non-negative due to the condition in Algorithm \ref{alg:covering-box} line \ref{line:covering-box-while}. Namely, at the point when $x^{(r)}_j$ is tight, we add $j$ to $T$, and this may make the partial derivative change from non-negative to negative. In this case, we have that $2(1 - \sum_{j \in T} a_{ij}) < 0$ so the condition automatically fails. From here, when $A_i x' \geq 1$, we have that
\begin{align*}
\frac{\partial P(r)}{\partial y_i} &= \sum_{j \in [n] \setminus T} c_j \frac{\partial x^{(r)}_j}{\partial y_i}\\
&= \sum_{j \in [n] \setminus T} a_{ij} c_j \left(\frac{B_j}{c_j} \exp\left(\frac{Y^{(i-1)}_j + a_{ij} y_i - z_j}{c_j}\right)\right)\\
&= \sum_{j \in [n] \setminus T} a_{ij} \left(B_j \exp\left(\frac{Y^{(i-1)}_j + a_{ij} y_i - z_j}{c_j}\right)-D_j+D_j\right)\\
&= \sum_{j \in [n] \setminus T} a_{ij} \left(x^{(r)}_j + \left(\frac{\lambda}{A'_i \mathbf{1}} + \frac{(1-\lambda)x'_j \mathbf{1}_{\bar{x}_j < x'_j}}{A'_i x'_c}\right) \left(1 - \sum_{j \in T} a_{ij}\right)\right)\\
&\le 2\left(1 - \sum_{j \in T} a_{ij}\right) + (\lambda + (1-\lambda))\left(1 - \sum_{j \in T} a_{ij}\right)\\
&= 3\left(1 - \sum_{j \in T} a_{ij}\right) = 3 \frac{\partial D(r)}{\partial y_i}
\end{align*}
where the last inequality is due to the fact that $\sum_{j \in [n] \setminus T} a_{ij}x^{(r)}_j < 2\left(1 - \sum_{j \in T} a_{ij}\right)$ and \eqref{eq:covering-box-sum-to-1}. The same result can be obtained when $A_i x' < 1$ by regarding $\lambda$ as 1.
\end{proof}

\begin{proof}[Proof of \ref{gpd3-box}.]
We show that $(y,z)/\Theta\left(\log\frac{\kappa n}{\lambda}\right)$ is feasible for LP \eqref{opt:packing-box:phase} in the last round $\ell_r$ of phase $r$. The argument applies to any round prior to $\ell_r$. We recall that within a phase $r$, an iteration ends in one of the following three cases: (1) the arriving constraint $i$ induces a cost outside of the tight set that exceeds the remaining capacity by a factor of 2, (2) there exists a variable $x^{(r)}_j$ that reaches the advice value $x'_j$, or (3) there exists a variable $x^{(r)}_j$ that reaches 1. Suppose there are $t \le \ell_r - i_r + 1 + 2n$ iterations in phase $r$ since there are $\ell_r - i_r + 1$ covering constraints, $n$ variables advised, and each variable is capped at 1 at most once. In iteration $p \le t$, let $x^{(r,p)}_j$, $B^{(p)}_j$, and $D^{(p)}_j$ be the value of $x^{(r)}_j$, $B_j$, and $D_j$ in the end of iteration $p$, respectively. Additionally, let $i^{(p)}$ denote the corresponding round in which iteration $p$ occurs and $\hat{Y}^{(p)}_j=\sum_{k =i_r}^{i^{(p)}} a_{kj}y_k - z_j$ be the accumulative weighted dual variable sum in the end of iteration $p$ with respect to coordinate $j$ in the primal. Note that we are only incrementing $y_{i^{(p)}}$ in iteration $p$ and possibly incrementing $z_j$ with rate $a_{i^{(p)}j}y_{i^{(p)}}$ if $j \in T$. If $j \in T$ in the beginning of iteration $p$, then $x_j^{(r,p-1)} = x_j^{(r,p)} = 1$ and $\hat{Y}^{(p-1)}_j = \hat{Y}^{(p)}_j$ because $a_{i^{(p)}j} y_j$ and $z_j$ cancel out each other. The goal is to show that $\hat{Y}^{(t)}_j / c_j \leq \Theta\left(\log\frac{
s}{\lambda}\right)$.

By using a similar argument in Theorem \ref{thm:covering} \ref{gpd3}, we can arrive at
$$\frac{\hat{Y}^{(t)}_j}{c_j} = \ln \prod_{p=1}^t \frac{x^{(r,p)}_j + D^{(p)}_j}{x^{(r,p-1)}_j + D^{(p)}_j} \leq \ln \frac{x^{(r,t)}_j + D^{min}_j}{x^{(r,0)}_j + D^{min}_j}$$
where $D^{min}_j := \min_{p \in [t]} \{D^{(p)}_j\}$.

From here, notice that $x^{(r,t)}_j \leq 1$, $x^{(r,0)}_j \geq 0$, and \[D^{min}_j \geq \min_{\substack{
i \in \{i_r, ..., \ell_r\}, T \subseteq [n] \\ 1 - \sum_{j \in T} a_{ij}>0}}\left\{\frac{\lambda\left(1 - \sum_{j \in T} a_{ij}\right)}{A'_i \mathbf{1}}\right\} \ge \min_{\substack{
i \in \{i_r, ..., \ell_r\}, T \subseteq [n] \\ 1 - \sum_{j \in T} a_{ij}>0}}\left\{\frac{\lambda\left(1 - \sum_{j \in T} a_{ij}\right)}{\sum_{j \in [n] \setminus T} a_{ij}}\right\} \ge \frac{\lambda}{s}.\]
We thus have 
$$\ln \frac{x^{(r,t)}_j + D^{min}_j}{x^{(r,0)}_j + D^{min}_j} \leq \ln \left(1 + \frac{1}{D^{min}_j}\right) = O\left(\log \frac{s}{\lambda}\right).$$
\end{proof}

\paragraph{Consistency.} We then show that $c^T x \leq O\left(\frac{1}{1-\lambda}\right) c^T x' + O(\log s)\opt$. Suppose Algorithm \ref{alg:covering-box} is in phase $r$ with constraint $i$ arriving. If $A_i x' < 1$, then the change of $c^T x$ simply follows \ref{gpd2-box} and \ref{gpd3-box} by regarding $\lambda$ as 1, so $c^T x \le O(\log s)\opt$. For the more interesting case when $A_i x' \ge 1$, we decompose $\frac{\partial P(r)}{\partial y_i}=\frac{\partial P_c}{\partial y_i}+\frac{\partial P_u}{\partial y_i}$, where $P_c$ is the component of the primal objective due to the advice and $P_u$ is the component of the primal objective due to the online algorithm. 
We have that the rate of change is credited to $\partial P_c$ if $x^{(r)}_j < x'_j$ and the rate of change is credited to $\partial P_u$ otherwise, if $x^{(r)}_j \ge x'_j$. 
In particular,
\begin{align*}
\frac{\partial P_c}{\partial y_i} &= \sum_{j \in [n] \setminus T: x^{(r)}_j < x'_j} a_{ij} \left(x^{(r)}_j + \left(\frac{\lambda}{A'_i \mathbf{1}} + \frac{(1-\lambda)x'_j \mathbf{1}_{x^{(r)}_j < x'_j}}{A'_i x'_c}\right) \left(1 - \sum_{j \in T} a_{ij}\right)\right) \geq 0 + \frac{\lambda}{s} + (1 - \lambda),\\
\frac{\partial P_u}{\partial y_i} &= \sum_{j \in [n] \setminus T: x^{(r)}_j \ge x'_j} a_{ij} \left(x^{(r)}_j + \left(\frac{\lambda}{A'_i \mathbf{1}} + \frac{(1-\lambda)x'_j \mathbf{1}_{x^{(r)}_j < x'_j}}{A'_i x'_c}\right) \left(1 - \sum_{j \in T} a_{ij}\right)\right) \leq 2 + \lambda + 0.
\end{align*}
Thus we have $\frac{\partial P_u}{\partial y_i} \leq \frac{2 + \lambda}{\lambda / s + 1 - \lambda} \cdot \frac{\partial P_c}{\partial y_i}$, so that
\[\frac{\partial P(r)}{\partial y_i} \leq \left(1 + \frac{2 + \lambda}{\frac{\lambda}{s} + 1 - \lambda}\right) \frac{\partial P_c}{\partial y_i} = O\left(\frac{1}{1-\lambda}\right) \frac{\partial P_c}{\partial y_i}.\]
We note that if $x'$ is feasible, then $A_i x' \ge 1$ for all $i \in [m]$, so $c^T x \leq O\left(\frac{1}{1-\lambda}\right) c^T x'$.

\paragraph{Bounding the number of violating constraints.} 
We show that Algorithm~\ref{alg:covering-box:phase} encounters $O(n \log s \log \frac{c^T x}{\alpha(1)})$ violating constraints. 
Using the same argument as in the proof of Theorem \ref{thm:covering-box}, we have that there are $O(\log (c^T x/\alpha(1)))$ phases.

In each phase $r$, when a violating  constraint $i$ just arrived, we have that $\sum_{j \in [n] \setminus T} a_{ij}x^{(r)}_j < 1 - \sum_{j \in T} a_{ij}$. We increment $x^{(r)}_j$'s where $j \in [n] \setminus T$ until $\sum_{j \in [n] \setminus T} a_{ij}x^{(r)}_j \ge 2\left(1 - \sum_{j \in T} a_{ij}\right)$. One of the following two cases must hold after updating $x^{(r)}$: (1) there exist a \emph{large} variable $x^{(r)}_j \geq 1/(2s)$ that is updated to at least $3x^{(r)}_j/2$, or (2) there exists a \emph{small} variable $x^{(r)}_j < 1/(2s)$ that becomes large, i.e., $x^{(r)}_j$ is updated to at least $1/(2s)$. Let $L$ and $S$ be the set of large and small variable subscript labels in $[n] \setminus T$ before the violating constraint $i$ arrives, respectively, and $\hat{x}^{(r)}_j$ be the value of $x^{(r)}_j$ after the update. If none of these two cases holds, then
\[\sum_{j \in [n] \setminus T} a_{ij} \hat{x}^{(r)}_j < \frac{3}{2}\sum_{j \in L}a_{ij}x^{(r)}_j + \frac{1}{2s} \sum_{j \in S}a_{ij} < \frac{3}{2}\left(1 - \sum_{j \in T} a_{ij}\right) + \frac{\left(1 - \sum_{k \in T} a_{ij}\right)}{2\sum_{k \in [n] \setminus T} a_{ij}} \sum_{j \in S}a_{ij} \le 2 \left(1 - \sum_{j \in T} a_{ij}\right)\]
where the second inequality is by the fact that constraint $i$ is violated and the definition of the sparsity $s$, while the last inequality is by $S \subseteq [n] \setminus T$.
This implies that the cost outside of the tight set does not exceed the remaining capacity by a factor of 2 after the update, a contradiction.

Suppose $x^{(r)}_j$ has been updated $t$ times by a multiplicative factor of 3/2 since it was large in phase $r$, then $t=O(\log s)$ since $\frac{1}{2s}(\frac{3}{2})^{t} \le 1$.

There are $n$ variables, each variable can be updated from small to large once and updated at most $t$ times by a factor of $3/2$ since it was large in each phase. Hence, Algorithm~\ref{alg:covering-box:phase} encounters
$O(n \log s \log \frac{c^T x}{\alpha(1)})$
violating covering constraints.
\end{proof}

\subsection{Applications} \label{sec:covering-app}

Our general framework on learning-augmented online covering LPs can be directly applied to online problems with predictions, including the online set cover problem and the online group Steiner tree problem on trees.

\subsubsection{Online fractional set cover with fractional advice} In the online set cover problem introduced in \cite{alon2009online}, we are given offline $n$ sets $S_j$ where $j \in [n]$, each associated with a positive cost $c_j$. The elements arrive online one at a time. Upon the arrival of an element $i \in [m]$ where $m$ can be unknown, the information of whether each set $S_j$ contains element $i$ is revealed. The goal is to irrevocably pick sets to cover all the elements that have arrived while minimizing the total cost. Let $F_i:=\{j \in [n] \mid i \in S_j\}$ denote the subset of subscript indices with the corresponding sets containing element $i$. Let $d:= \max_{i \in [m]}\{|F_i|\}$ denote the maximum number of sets that contain one specific element $i$, over all possible $i \in [m]$. An $O(\log m \log d)$-competitive algorithm was introduced in \cite{alon2009online}. The online algorithm first finds an $O(\log d)$-competitive fractional solution for the following LP formulation:
\begin{align}
  \begin{aligned}
    \text{minimize } & c^T x 
    \text{ over } x \in [0,1]^n 
    \text{ subject to } \sum_{j \in F_i} x_j \geq 1 & \forall i \in [m]
  \end{aligned}
                        \label{opt:set-cover}
\end{align}
and then round by paying a factor of $O(\log m)$. The online fractional set cover problem considers LP \eqref{opt:set-cover}, where each row is associated with an element that arrives and each column is associated with a set. The goal is to update $x$ in a non-decreasing manner while minimizing the objective $c^T x$.

The learning-augmented problem was later introduced in \cite{BamasMS20}, where the input also includes a confidence parameter $\lambda \in [0,1]$ and an advice $A \subseteq [n]$ presented as a subset of subscript indices of the sets given offline. The advice $A$ is feasible if $\cup_{j \in A} S_j = [m]$. Another way to present the advice is to describe it as a vector $x' \in \{0,1\}^n$, namely, set $S_j$ is suggested by the advice if and only if $x'_j = 1$. Our framework therefore naturally extends to a more general setting when the advice can be fractional, i.e., $x' \in [0,1]^n$. Let $\opt$ denote the optimal objective value for LP \eqref{opt:set-cover}. We show that the online fractional set cover problem with fractional advice can be solved efficiently (independent of the number of elements) with the same approximation guarantee as \cite{BamasMS20}.

\begin{corollary} \label{cor:set-cover}
For the learning-augmented online fractional set cover problem with fractional advice, there exists an online algorithm that generates $x$ and encounters polynomially many uncovered elements such that
\[c^T x \le \min\left\{O\left(\frac{1}{1-\lambda}\right) c^T x' + O(\log d) \opt, O\left(\log \frac{d}{\lambda}\right) \opt\right\}.\]
If $x'$ is feasible for LP \eqref{opt:set-cover}, then $c^T x \le O\left(\frac{1}{1-\lambda}\right) c^T x'$.
\end{corollary}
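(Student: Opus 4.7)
The plan is to recognize that LP \eqref{opt:set-cover} is precisely a special case of LP \eqref{opt:covering-box}, so that Corollary~\ref{cor:set-cover} will follow as an essentially immediate application of Theorem~\ref{thm:covering-box}. Concretely, set $a_{ij}=1$ if $j\in F_i$ and $a_{ij}=0$ otherwise; the covering constraints $\sum_{j\in F_i}x_j\ge 1$ then coincide with $A_i x\ge 1$, and the box constraint $x\in[0,1]^n$ matches. Thus running Algorithm~\ref{alg:covering-box:phase} on this instance, with the given confidence parameter $\lambda$ and fractional advice $x'\in[0,1]^n$, produces an online solution $x$ with the guarantees of Theorem~\ref{thm:covering-box}.

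The main step, therefore, is to identify the sparsity parameter $s$ for this instance. Since each $a_{ij}\in\{0,1\}$, the quantity $\sum_{j\in T}a_{ij}$ equals $|T\cap F_i|$, an integer. For the condition $1-\sum_{j\in T}a_{ij}>0$ in the definition of $s$ to hold, we must therefore have $T\cap F_i=\emptyset$, in which case $\sum_{j\in[n]\setminus T}a_{ij}=|F_i|$ and $1-\sum_{j\in T}a_{ij}=1$. Taking the maximum over $i$ then yields $s=\max_i|F_i|=d$. Plugging this into the bounds of Theorem~\ref{thm:covering-box} immediately gives the claimed $O(1/(1-\lambda))$-consistency against $c^T x'$, the $O(\log(d/\lambda))$-robustness against $\opt$, and the mixed bound involving $O(\log d)\,\opt$ when the advice is infeasible.

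Finally, the efficiency claim that polynomially many uncovered elements are encountered follows from the bound $O(n\log s\log\tfrac{c^T x}{\alpha(1)})$ on violating constraints in Theorem~\ref{thm:covering-box}: with $s=d\le n$ and $c^T x\le \poly(n,\max_j c_j)$, this is polynomial in the input parameters, and each ``uncovered element'' in the online set cover formulation corresponds to a violating covering constraint being addressed in the LP. No step here is expected to be hard; the only subtlety is verifying that the $0/1$ structure of $A$ forces the ``$T$ disjoint from $F_i$'' regime in the sparsity definition, which is what collapses $s$ down to the natural parameter $d$ and recovers the competitive ratio of~\cite{BamasMS20}, now extended to fractional advice.
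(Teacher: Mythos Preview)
Your proposal is correct and follows essentially the same approach as the paper: apply Theorem~\ref{thm:covering-box} to the $0/1$ constraint matrix of the set cover LP, verify that the sparsity parameter collapses to $s=d$, and read off the bounds. Your justification that the constraint $1-\sum_{j\in T}a_{ij}>0$ forces $T\cap F_i=\emptyset$ is exactly the right observation (the paper simply asserts $s=d$ ``since the entries of the constraint matrix is either 0 or 1'' without spelling this out), and your treatment of the efficiency claim matches the paper's remark that the number of violating constraints depends on $n$ and the costs $c_j$ (through $\alpha(1)$) rather than on $m$.
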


\begin{proof}
We use Theorem~\ref{thm:covering-box} and observe that the sparsity parameter $s = d$ since the entries of the constraint matrix is either 0 or 1. We note that the number of violating constraints encountered does not depend on $m$ but depends on the cost $c_j$ since $\alpha(1)$ depends on $c_j$.
\end{proof}

\subsubsection{Online group Steiner tree on trees} We consider the learning-augmented online group Steiner tree problem on trees. We note that this problem generalizes other online problems including the online group Steiner tree problem on general graphs (by paying another logarithmic factor in the competitive ratio), the online multicast problem on trees, and the online non-metric facility location problem \cite{alon2006general}.

In the group Steiner tree problem on trees, we are given a weighted rooted tree $T=(V,E,r)$ and groups of vertices $V_1, V_2, ..., V_k \subseteq V$. Let $n=|V|$ denote the number of vertices. Each edge $e \in E$ is associated with a positive cost $c_e$. The goal is to find a minimum weighted rooted subtree $T'=(V',E',r)$ that contains at least one vertex from each group $V_i$. An $O(\log n \log k)$-approximation algorithm was presented in \cite{garg2000polylogarithmic} by considering the following LP formulation:
\begin{align}
  \begin{aligned}
    \text{minimize } & \sum_{e \in E}c_e x_e 
    \text{ over } x \in [0,1]^{|E|} 
    \text{ subject to } x \text{ supports an $r$ to $V_i$ flow of value 1 } \forall i \in [k]
  \end{aligned}
                        \label{opt:steiner}
\end{align}
and round. We note that although LP \eqref{opt:steiner} might have exponentially many constraints, it can be solved in polynomial time by using a minimum cut procedure to retrieve violating constraints.

In the online problem, the groups $V_i$ arrive one at a time, and the goal is to find $T'$ by irrevocably adding edges from $E$. The $O(\log^2 n \log k)$-competitive algorithm in \cite{alon2006general} implicitly considers LP \eqref{opt:steiner} in an online fashion by  updating $x$ in a non-decreasing manner and rounding $x$ online.

In the learning-augmented problem, we are given a confidence parameter $\lambda \in [0,1]$ and a fractional advice $x' \in [0,1]^{|E|}$ indicating how likely each edge should be selected. Let $\opt$ denote the weight of the minimum weighted subtree rooted at $r$ that contains at least one vertex from each group $V_i$. We show the following.

\begin{corollary} \label{cor:steiner}
For the learning-augmented online group Steiner tree problem on trees, there exists a polynomial time randomized online algorithm that generates $T'=(V',E',r)$ such that
\[\sum_{e \in E'} c_e \le \log n \log k \cdot \min\left\{O\left(\frac{1}{1-\lambda}\right) c^T x' + O(\log n) \opt, O\left(\log \frac{n}{\lambda}\right) \opt\right\}\]
and $V' \cap V_i \neq \emptyset$ for each $i \in [k]$. If $x'$ is feasible, then $\sum_{e \in E'} c_e \le O\left(\frac{\log n \log k}{1-\lambda}\right) c^T x'$.
\end{corollary}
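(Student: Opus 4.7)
The plan is to apply our learning-augmented framework for covering LPs with box constraints (Theorem~\ref{thm:covering-box}) to the flow-based LP relaxation \eqref{opt:steiner} of the group Steiner tree problem, and then compose it with the online randomized rounding scheme of \cite{alon2006general,garg2000polylogarithmic} which incurs an $O(\log n\log k)$ multiplicative loss.

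First, I would express LP \eqref{opt:steiner} explicitly as an online covering LP with box constraints: by the max-flow/min-cut duality on trees, the flow requirement ``$x$ supports an $r$-to-$V_i$ flow of value $1$'' is equivalent to the family of cut constraints $\sum_{e\in C}x_e\ge 1$ for every edge set $C$ that separates $r$ from $V_i$. Upon the arrival of a group $V_i$, one invokes a minimum-cut subroutine on the current fractional $x$ to identify a violated cut constraint (or certify that none exists), feeds it to Algorithm~\ref{alg:covering-box}, and repeats until $V_i$ is fractionally covered. Since Theorem~\ref{thm:covering-box} only generates polynomially many violating constraints, and each such constraint is supplied in polynomial time by the min-cut oracle, the overall fractional procedure runs in polynomial time.

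Next, I would bound the sparsity parameter $s$ for this LP. On a tree, every $r$-to-$V_i$ cut consists of at most $n-1$ edges, and the constraint coefficients are in $\{0,1\}$; thus for each encountered constraint and any tight set $T$, the ratio $\sum_{j\in[n]\setminus T} a_{ij}/(1-\sum_{j\in T}a_{ij})$ is bounded by $O(n)$. Plugging $s=O(n)$ into Theorem~\ref{thm:covering-box}, the fractional solution $x$ satisfies
\[
c^T x\le \min\left\{O\!\left(\tfrac{1}{1-\lambda}\right) c^T x' + O(\log n)\,\opt_{\mathrm{LP}},\ O\!\left(\log\tfrac{n}{\lambda}\right)\opt_{\mathrm{LP}}\right\},
\]
where $\opt_{\mathrm{LP}}\le \opt$. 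If $x'$ is feasible, only the first term survives and gives $c^T x\le O(1/(1-\lambda))\,c^T x'$.

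Finally, I would invoke the online randomized rounding procedure of \cite{alon2006general} as a black box: given the monotone fractional solution $x$ produced above, it constructs a rooted subtree $T'=(V',E',r)$ online such that $V'\cap V_i\neq\emptyset$ for every group $V_i$ encountered, while $\mathbb{E}[\sum_{e\in E'}c_e]\le O(\log n\log k)\cdot c^T x$. Combining this with the fractional bound yields the stated consistency/robustness guarantees. The main subtlety to verify is that this rounding composes cleanly with our continuous primal-dual updates, i.e.\ that it only needs a monotone fractional solution satisfying each cut constraint as it arrives, which is exactly what Algorithm~\ref{alg:covering-box:phase} produces; the step I expect to require the most care is exhibiting that the online min-cut separation oracle interacts correctly with the guess-and-double phase structure so that the polynomial bound on the number of violating constraints, and hence the overall runtime, is preserved.
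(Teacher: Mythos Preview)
Your proposal is correct and follows essentially the same approach as the paper: apply Theorem~\ref{thm:covering-box} to LP~\eqref{opt:steiner} using a min-cut separation oracle, bound the sparsity parameter, and then invoke the $O(\log n\log k)$ online rounding of \cite{alon2006general} (stated in the paper as Lemma~\ref{lem:steiner-rounding}). Your sparsity bound $s=O(n)$ is in fact tighter and more carefully justified than the paper's stated $s=O(|E|)=O(n^2)$ (the latter is loose since on a tree $|E|=n-1$), but both yield the same $O(\log n)$ factor after taking logarithms.
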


\begin{proof}
We use the following lemma due to \cite{alon2006general}.

\begin{lemma} \label{lem:steiner-rounding}
Given an $\alpha$-competitive feasible solution for LP \eqref{opt:steiner}, there exists an $O(\alpha \log n \log k)$-competitive randomized rounding scheme for the online group Steiner tree problem on trees.
\end{lemma}

We use Theorem~\ref{thm:covering-box} to obtain an online solution $x$ and observe that the sparsity parameter $s = O(|E|)=O(n^2)$. Let $\lp^*$ denote the optimal objective value of LP \eqref{opt:steiner}. We have that
\begin{align*}
c^T x &\le \min\left\{O\left(\frac{1}{1-\lambda}\right) c^T x' + O(\log n^2) \lp^*, O\left(\log \frac{n^2}{\lambda}\right) \lp^*\right\} \\
&\le \min\left\{O\left(\frac{1}{1-\lambda}\right) c^T x' + O(\log n) \lp^*, O\left(\log \frac{n}{\lambda}\right) \lp^*\right\}
\end{align*}
and $c^T x \le O\left(\frac{1}{1-\lambda}\right) c^T x'$ if $x'$ is feasible for LP \eqref{opt:steiner}.

Combining the inequality above with Lemma \ref{lem:steiner-rounding}, we have the desired guarantee for $T'$. \end{proof}

\section{PDLA Algorithms for Covering Semidefinite Programming} \label{sec:sdp}

In this section, we prove Theorems~\ref{thm:sdp-inf} and \ref{thm:sdp-box-inf}. Namely, we present efficient, consistent, and robust PDLA algorithms for online covering SDPs.
We recall that the online covering SDP problem \eqref{opt:sdp-covering} is the following.
\begin{align*}
  \begin{aligned}
    \text{minimize } & c^T x
    \text{ over } x \in \nnreals^n 
    \text{ subject to } \sum_{j=1}^n A_jx_j\succeq B^{(i)}.
  \end{aligned}
\end{align*}
Here, $A_j \in \R_{\geq 0}^{d \times d}$ is an SPSD matrix for each $j \in [n]$ and $c \in \R_{> 0}^n$ denotes the positive coefficients of the linear cost function. We assume that in the beginning, $B^{(0)}$ is a zero matrix, and there are $m$ arriving SPSD matrices $B^{(i)}$'s where $m$ might be unknown. For each $i \in [m]$, it is required that $B^{(i)} \succeq B^{(i-1)}$. In round $i$, $B^{(i)}$ arrives and our goal is to approximately minimize the objective $c^T x$ by updating $x$ in a non-decreasing manner such that $x$ is feasible. An important idea is to simultaneously consider the dual packing SDP problem \eqref{opt:sdp-packing}:
\begin{align*}
  \begin{aligned}
    \text{maximize } & B^{(i)} \otimes Y
    \text{ over $Y \succeq 0$ subject to } A_j\otimes Y\le c_j & \forall j \in[n].
  \end{aligned}
\end{align*}

The approach closely follows the one for online covering LPs and the one in \cite{EladKN16}. In round $i$, either $\sum_{j=1}^n A_jx_j\succeq B^{(i)}$ and there is nothing to be done, or we use a subroutine to retrieve an implicit violating \emph{linear} constraint and update $x$. More specifically, observe that $\sum_{j=1}^n A_jx_j \succeq B^{(i)}$ if and only if $v^T\left(\sum_{j=1}^n A_jx_j\right) v \ge v^T B^{(i)} v$ for all $v \in \R^n$, if the constraint in round $i$ is violated, we can find an SPSD matrix $V = v v^T$ such that $v^T\left(\sum_{j=1}^n A_jx_j\right) v = \sum_{j=1}^n A_j x_j \otimes V  < B^{(i)} \otimes V = v^T B^{(i)} v$, and we update $x_j$ proportionally to $A_j \otimes V$. There are many possible options for $V$ and without loss of generality, we can scale $V$ by a factor of $\trace(V)$ such that $\trace(V) = 1$. One option is to use $V=v v^T$ where $v$ is a unit vector corresponding to the smallest eigenvalue of $\sum_{j=1}^n A_jx_j - B^{(i)}$. Since $\sum_{j=1}^n A_jx_j - B^{(i)} \nsucceq 0$, its smallest eigenvalue $\lambda_d$ is negative, so $(\sum_{j=1}^n A_jx_j - B^{(i)}) \otimes V = v (\sum_{j=1}^n A_jx_j - B^{(i)}) v^T = \lambda_d < 0$ as required.

Again, we use the guess-and-double approach by increasing the primal and dual variables in each phase $r$. We estimate a lower bound $\alpha(r)$ for $\opt$ in phase $r$. In phase $1$, let $\alpha(1) \gets \min_{j=1}^n\{\frac{c_j\trace(B^{(1)})}{\trace(A_j)}\}$ be a proper lower bound for $\opt$. Once the online objective exceeds $\alpha(r)$, we start the new phase $r+1$ from the current $B^{(i)}$ matrix (say phase $r+1$ starts with $i = i_{r+1}$, in particular, $i_1 = 1$), and double the estimated lower bound, i.e. $\alpha(r+1) \gets 2\alpha(r)$. In the beginning of phase $r$, $x^{(r)}_j \gets \min\{x'_j, \alpha(r)/(2n c_j)\}$.

We recall that $x$ must be updated in a non-decreasing manner, so the algorithm maintains $\{x^{(r)}_j\}$, which denotes the value of each variable $x_j$ in each phase $r$, and the value of each variable $x_j$ is actually set to $\max_r\{x^{(r)}_j\}$.

In phase $r$ and round $i$, we find $V$ and increase $Y$ continuously along $V$. If the advice is feasible, i.e., $\sum_{j=1}^n A_jx'_j \succeq B^{(i)}$, we increase each variable $x^{(r)}_j$ with growth rate
\[\frac{A_j \otimes V}{c_j} \left(x^{(r)}_j + \frac{\lambda B^{(i)} \otimes V}{\sum_{k=1}^n A_k\otimes V}+\frac{(1-\lambda)x'_j\mathbf{1}_{x^{(r)}_j < x'_j}B^{(i)} \otimes V}{\sum_{k=1}^n \mathbf{1}_{x^{(r)}_k < x'_k}A_kx'_k\otimes V}\right)\]
where $\mathbf{1}_{x^{(r)}_k < x'_k}$ is an indicator variable with value 1 if $x^{(r)}_k < x'_k$ and 0 otherwise. Alternatively if the advice is not feasible, we increase $x^{(r)}_j$ with growth rate
\[\frac{A_j \otimes V}{c_j} \left(x^{(r)}_j + \frac{B^{(i)} \otimes V}{\sum_{k=1}^n A_k\otimes V}\right).\]
We note that 
\begin{equation} \label{eq:sdp-sum-to-1}
\sum_{j=1}^n{\frac{A_j \otimes V}{\sum_{k=1}^n A_k\otimes V}} = 1 \text { and } \sum_{j=1}^n\frac{A_j \otimes V \left(x'_j\mathbf{1}_{x^{(r)}_j < x'_j}\right)}{\sum_{k=1}^n \mathbf{1}_{x^{(r)}_k < x'_k}A_kx'_k\otimes V} = 1.
\end{equation}

To implement with the proper growth rate, the dual variable $Y$ initialized as a zero matrix is used as a proxy in each phase $r$. We increment $x^{(r)}$ until the implicit violating constraint $i$ is satisfied \emph{by a factor of 2}. This guarantees that at least one variable $x^{(r)}_j$ is doubled thus ensures that the algorithm only updates $x^{(r)}$ polynomially many times.

Similar to Algorithm~\ref{alg:covering}, in phase $r$, each iteration ends in one of the following three cases: (1) the implicit violating linear constraint is satisfied by a factor of 2, (2) there exists a variable $x^{(r)}_j$ that reaches the advice value $x'_j$, or (3) the objective $c^T x^{(r)}$ reaches $\alpha(r)$. The coefficients $D_j$ and $B_j$ for fitting boundary conditions are defined based on the value of $x^{(r)}_j$ in the end of the previous iteration, stored as $\bar{x}_j$. The indicator value used for the growth rate, $\mathbf{1}_{x^{(r)}_j < x'_j}$, has the same value as $\mathbf{1}_{\bar{x}_j < x'_j}$ during an iteration.

The main algorithm that uses the phase scheme is presented in Algorithm~\ref{alg:sdp:phase}. The continuous primal-dual approach in phase $r$ and round $i$, used as a subroutine, is presented in Algorithm~\ref{alg:sdp}.

\begin{algorithm}[!htb]
\caption{Phase Scheme for Algorithm~\ref{alg:sdp}} \label{alg:sdp:phase}
\begin{algorithmic}[1]
\State{$r \gets 1$, $\alpha(1)\gets\min_{j \in [n]}\{c_j\trace(B^{(1)}) / \trace(A_j)\}$, $Y \gets 0$, $i_1 \gets 1$.}
\For{each $j \in [n]$}
    \State $x^{(r)}_j \gets \min\{x'_j, \alpha(r)/(2n c_j)\}$.
\EndFor
\For{arriving covering constraint $\sum_{j=1}^n A_jx_j\succeq B^{(i)}$} \Comment{$i=1, 2, ..., m$ for an unknown $m$}
\State{Run Algorithm~\ref{alg:sdp}.} \label{line:sdp-alg}
\If {$c^T x^{(r)} \ge \alpha(r)$} \Comment{start a new phase}
    \State $r \gets r+1$, $\alpha(r) \gets 2\alpha(r-1)$, $Y \gets 0$, $i_r \gets i$.
    \For{each $j \in [n]$}
    \State $x^{(r)}_j \gets \min\{x'_j, \alpha(r)/(2n c_j)\}$.
    \EndFor
    \State Go to line \ref{line:sdp-alg}.
\EndIf
\For{each $j \in [n]$}
    \State $x_j \gets \max_{\ell \in [r]} \{x_j^{(\ell)}\}$. \Comment{this is the solution returned in each round $i$}
\EndFor
\EndFor
\end{algorithmic}
\end{algorithm}

\begin{algorithm}[!htb]
\caption{PDLA Online SDP Covering in Phase $r$ and Round $i$} \label{alg:sdp}

\renewcommand{\algorithmicrequire}{\textbf{Input:}}
\renewcommand{\algorithmicensure}{\textbf{Output:}}

\algorithmicrequire{ $x^{(r)}$: current solution, $\alpha(r)$: estimate for $\opt$, $B^{(i)}$: current lower bound matrix, $i_r$: starting round of phase $r$, $Y$: dual variable matrix, $\lambda$: the confidence parameter, and $x'$: the advice for $x$.}

\algorithmicensure{ Updated $x^{(r)}$ and $Y$.}

\begin{algorithmic}[1]

    \State $\bar{x} \gets x^{(r)}.$ \label{line:sdp-restore} \Comment{$\bar{x}$ is the value of $x^{(r)}$ at the end of the previous iteration}
    \While{$\sum_{j=1}^n A_jx_j\not\succeq B^{(i)}$} \label{line:sdp-while-outer}
        \State{Find an SPSD matrix $V$ with $\trace(V)=1$ such that $\sum_{j=1}^n (A_jx^{(r)}_j)\otimes V<B^{(i)}\otimes V$.}
        \For {each $j \in [n]$}
        \If{$\sum_{k=1}^n A_kx'_k \succeq B^{(i)}$}
            \State{$D_j\gets\frac{\lambda B^{(i)} \otimes V}{\sum_{k=1}^n A_k\otimes V}+\frac{(1-\lambda)x'_j\mathbf{1}_{\bar{x}_j < x'_j}B^{(i)} \otimes V}{\sum_{k=1}^n \mathbf{1}_{\bar{x}_k < x'_k}A_kx'_k\otimes V}$,}
        \Else
            \State{$D_j\gets\frac{B^{(i)} \otimes V}{\sum_{k=1}^n A_k \otimes V}$.}
        \EndIf

        \State{$B_j\gets\frac{\bar{x}_j+D_j}{\exp\left(\frac{A_j\otimes Y}{c_j}\right)}$.}
        \EndFor
        \While{$\sum_{j=1}^n A_j x^{(r)}_j \otimes V<2B^{(i)}\otimes V$} \label{line:sdp-while-inner}
       \State{Set $\delta=0$ and increase it continuously.}
       \State{Increase $Y$ by continuously adding $V\delta$ to $Y$.}
       \State{Increase $x^{(r)}$ continuously by simultaneously setting \[x_j^{(r)}\gets B_j\,\exp\left(\frac{A_j\otimes Y}{c_j}\right)-D_j.\]} \label{line:sdp-inc}
       \If{any $x^{(r)}_j = x'_j$}
                    \State Break and go to line \ref{line:sdp-restore}.
        \EndIf
        \If{$c^T x^{(r)} \ge \alpha(r)$}
                \State Break and return.
            \EndIf
       \EndWhile
       \EndWhile
\end{algorithmic}
\end{algorithm}

The augmentation is in a continuous fashion, and it is not hard to implement it in a discrete way for any desired precision by binary search. In each iteration, we use binary search to find the proper $\delta$, and $x^{(r)}$ is updated accordingly once. Therefore, to show that the algorithm is efficient, it suffices to bound the number of iterations. The performance of Algorithm \ref{alg:sdp:phase} is stated in Theorem \ref{thm:sdp}, the formal version of Theorem \ref{thm:sdp-inf}.

\begin{theorem}
\label{thm:sdp}
For the learning-augmented online covering SDP problem, there exists an online algorithm that generates $x$ such that
\[c^Tx \le \min\left\{O\left(\frac{1}{1-\lambda}\right)c^Tx' +  O\left(\log\left(\kappa n\right)\right)\opt,O\left(\log \frac{\kappa n}{\lambda}\right)\opt\right\}\]
and $x$ is updated $O(n (\log \frac{c^T x}{\alpha(1)})(\log n + \log c^T x + \log \beta))$ times. 
If $x'$ is feasible for SDP \eqref{opt:sdp-covering}, then $c^Tx \le O(\frac{1}{1-\lambda})c^Tx'$.
Here, $\beta=\max_{j \in [n]}\{a^{\max}_j/c_j\}$, $a^{\max}_j:=\max_{i \in [m]}\{a_{ij} \mid a_{ij} > 0\}$, and
\[\kappa:=\frac{\max_{j \in [n], i \in [m]}\{\lambda_{\max}(A_j), \lambda_{\max}(B^{(i)})\}}{\min_{j \in [n], i \in [m]}\{\lambda_{\min}(A_j), \lambda_{\min}(B^{(i)})\}}\]
where $\lambda_{\max}(M)$ and $\lambda_{\min}(M)$ are the positive maximum and minimum eigenvalues of a PSD matrix $M$, respectively.
\end{theorem}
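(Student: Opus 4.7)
The plan is to lift the primal--dual analysis of Theorem~\ref{thm:covering} from LPs to SDPs, treating each chosen direction $V$ (with $\trace(V)=1$) as an implicit violated linear constraint. Let $P(r)=c^T x^{(r)}$ and $D(r)=B^{(i)}\otimes Y$ be the primal and dual objectives in phase $r$. I will first prove robustness, then consistency, then the update bound.

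For robustness, I mirror the five claims (i)--(v) of Theorem~\ref{thm:covering}. Primal feasibility (i) follows because the exponential update in line~\ref{line:sdp-inc} is continuous: at the start of each iteration $x^{(r)}_j=\bar x_j$, so the outer while-loop on line~\ref{line:sdp-while-outer} exits with $\sum_j (A_j x^{(r)}_j)\otimes V\ge B^{(i)}\otimes V$ for every $V$, giving $\sum_j A_j x_j\succeq B^{(i)}$ after taking the coordinatewise maximum over phases. For the primal/dual growth ratio (ii), differentiating with respect to $\delta$ yields
\[\frac{\partial P(r)}{\partial\delta}=\sum_j (A_j\otimes V)(x^{(r)}_j+D_j)\le 2\,B^{(i)}\otimes V+(\lambda+(1-\lambda))B^{(i)}\otimes V=3\,\frac{\partial D(r)}{\partial\delta},\]
using \eqref{eq:sdp-sum-to-1} and the inner-loop guard $\sum_j A_j x^{(r)}_j\otimes V<2B^{(i)}\otimes V$. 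Claims (iv) and (v) on the geometric sum across phases and $c^T x\le 2\alpha(r')$ transfer verbatim from the LP proof.

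The main obstacle is claim (iii): showing that $Y/\Theta(\log(\kappa n/\lambda))$ is dual feasible, i.e., $A_j\otimes Y\le O(c_j\log(\kappa n/\lambda))$ for every $j$. Enumerating the $t$ sub-iterations within a phase and telescoping as in the LP case, one obtains
\[\frac{A_j\otimes Y}{c_j}=\ln\prod_{p=1}^t\frac{x^{(r,p)}_j+D^{(p)}_j}{x^{(r,p-1)}_j+D^{(p)}_j}\le\ln\frac{x^{(r,t)}_j+D^{\min}_j}{x^{(r,0)}_j+D^{\min}_j}\]
via Claim~\ref{cl:a-b-c1-c2}. The crux is translating Frobenius products into eigenvalue bounds: since $V\succeq 0$ has unit trace, $B^{(i)}\otimes V\ge\lambda_{\min}(B^{(i)})$ and $\sum_k A_k\otimes V\le n\max_k\lambda_{\max}(A_k)$, so $D^{\min}_j=\Omega(\lambda/(\kappa n))$. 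The SDP analog of Claim~\ref{cl:sat-by-2} comes from factor-$2$ satisfaction: $x^{(r)}_j\lambda_{\min}(A_j)\le (A_j x^{(r)}_j)\otimes V\le 2B^{(i)}\otimes V\le 2\lambda_{\max}(B^{(i)})$, so $x^{(r)}_j=O(\kappa)$. Plugging these into the telescoped bound gives $A_j\otimes Y/c_j=O(\log(\kappa n/\lambda))$. Some care is needed when $A_j\otimes V=0$ for the chosen $V$, in which case the corresponding factor of the product is $1$ and $x^{(r)}_j$ does not grow in that sub-iteration, so such iterations contribute trivially.

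For consistency, I decompose $\partial P(r)/\partial\delta=\partial P_c/\partial\delta+\partial P_u/\partial\delta$ according to whether $x^{(r)}_j<x'_j$. Using the eigenvalue translation of Frobenius products, the advice-credited rate is bounded below by $(\lambda/(\kappa n)+(1-\lambda))B^{(i)}\otimes V$ and the algorithm-credited rate above by $(2+\lambda)B^{(i)}\otimes V$, so their ratio is $O(1/(1-\lambda))$, matching the LP argument; feasibility of $x'$ ensures $\sum_k A_k x'_k\succeq B^{(i)}$ throughout, eliminating the additive $O(\log(\kappa n))\opt$ term. The update count follows the LP template: within each phase, factor-$2$ satisfaction of each implicit linear constraint either doubles a ``large'' primal coordinate (one bounded below by $1/(2n\lambda_{\max}(A_j))$) or promotes a ``small'' one to large, yielding $O(\log n+\log c^T x+\log\beta)$ updates per coordinate per phase, across $O(\log(c^T x/\alpha(1)))$ phases.
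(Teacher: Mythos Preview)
Your proposal is correct and follows essentially the same route as the paper: the same five-claim robustness structure lifted from the LP case, the same telescoping for dual feasibility together with the eigenvalue bounds $D^{\min}_j\ge\lambda/(\kappa n)$ and $x^{(r)}_j=O(\kappa)$, the same $P_c/P_u$ decomposition for consistency, and the same large/small doubling argument for the update count. Your explicit remark about sub-iterations with $A_j\otimes V=0$ contributing trivially is a useful clarification that the paper leaves implicit.
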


\begin{proof}
Let $P(r)=c^T x^{(r)}$ and $D(r)=B^{(i)} \otimes Y$ be the objective value of the primal and the dual in phase $r$, respectively. We use Algorithm~\ref{alg:sdp:phase} and first show robustness, i.e., $c^T x \le O\left(\log\frac{\kappa n}{\lambda}\right) \opt$, then show consistency, i.e., $c^T x \le O\left(\frac{1}{1-\lambda}\right) c^T x' + O(\log(\kappa n))\opt$. Finally, we show that only $O(n (\log \frac{c^T x}{\alpha(1)})(\log n + \log c^T x + \log \beta))$ iterations are needed.

\paragraph{Robustness.} To show that $c^T x \le O\left(\log\frac{\kappa n}{\lambda}\right) \opt$, we prove the following five claims:
\begin{enumerate} [label=(\roman*)]
\item \label{sdp1} $x$ is feasible for SDP \eqref{opt:sdp-covering}.
\item \label{sdp2} For each \emph{finished} phase $r$, $\alpha(r) \leq 6 D(r)$.
\item \label{sdp3} $Y/\Theta\left(\log\frac{\kappa n}{\lambda}\right)$ is feasible for SDP \eqref{opt:sdp-packing} in each phase.
\item \label{sdp4} The sum of the covering objective generated from phase 1 to $r$ is at most $2 \alpha(r)$.
\item \label{sdp5} Let $r'$ be the last phase, then the covering objective $c^T x \leq 2 \alpha(r')$.
\end{enumerate}

The proofs of \ref{sdp4} and \ref{sdp5} are the same as the ones in Theorem~\ref{thm:covering}, hence omitted. Equipped with these claims and weak duality \eqref{eq:sdp-pd}, we have that
\[
c^T x \le \Theta(1) \alpha(r') \le \Theta(1) D(r') = O\left(\log\frac{\kappa n}{\lambda}\right) \opt.
\]

\begin{proof}[Proof of \ref{sdp1}.]
We prove that $x^{(r)}$ is feasible in phase $r$ by showing that the growing functions in Algorithm \ref{alg:sdp} line \ref{line:sdp-inc} increments $x^{(r)}$ in a continuous manner.

In the beginning of an iteration in round $i$, we have that
\[x^{(r)}_j = B_j\exp\left(\frac{A_j\otimes Y}{c_j}\right)-D_j = \frac{\bar{x}_j + D_j}{\exp\left(\frac{A_j\otimes Y}{c_j}\right)} \exp\left(\frac{A_j\otimes Y}{c_j}\right) - D_j = \bar{x}_j.\]
By following Algorithm \ref{alg:sdp}, we have that $x^{(r)}$ is feasible since we terminate the while loop at line \ref{line:sdp-while-outer} when there does not exists an SPSD matrix $V$ such that $\sum_{j=1}^n A_j x_j \otimes V < B^{(i)} \otimes V$, indicating that $\sum_{j=1}^n A_jx_j \succeq B^{(i)}$. Whenever we find such a $V$, it induces a linear constraint and we increment $x^{(r)}$ until it is satisfied by a factor of 2 as in Algorithm \ref{alg:sdp} line \ref{line:sdp-while-inner}. As $x$ is the coordinate-wise maximum of $\{x^{(r)}\}$ and the sum of matrices is still PSD, $x$ must be feasible for SDP \eqref{opt:sdp-covering}.
\end{proof}

\begin{proof}[Proof of \ref{sdp2}.]
In the beginning of phase $r$, $x^{(r)}_j = \min \{x'_j, \alpha(r) / (2nc_j)\}$, so $P(r)$ is initially at most $\alpha(r) / 2$. The total increase of $P(r)$ is at least $\alpha(r) / 2$ as $P(r) \geq \alpha(r)$ when phase $r$ ends. Therefore, it suffices to show that
$$\frac{\partial P(r)}{\partial \delta} \leq 3 \frac{\partial D(r)}{\partial \delta}.$$
By considering the partial derivative of $P(r)$ with respect to $\delta$, when $\sum_{j=1}^n A_jx'_j \succeq B^{(i)}$, we have that
\begin{align*}
\frac{\partial P(r)}{\partial \delta} &= \sum_{j=1}^n c_j \frac{\partial x^{(r)}_j}{\partial \delta}\\
&= \sum_{j=1}^n c_j\frac{\partial}{\partial\delta}\left( B_j\,\exp\left(\frac{A_j\otimes (Y+V\delta)}{c_j}\right)-D_j\right)\\
&= \sum_{j=1}^n c_j\left(\frac{A_j\otimes V}{c_j}\right)B_j\,\exp\left(\frac{A_j\otimes (Y+V\delta)}{c_j} \right) \\
&= \sum_{j=1}^n (A_j\otimes V)\left(x_j^{(r)}+D_j\right)\\
&= \sum_{j=1}^n (A_j\otimes V)\left(x_j^{(r)}+\frac{\lambda B^{(i)} \otimes V}{\sum_{k=1}^n A_k\otimes V}+\frac{(1-\lambda)x'_j\mathbf{1}_{x^{(r)}_j < x'_j} B^{(i)} \otimes V}{\sum_{k=1}^n \mathbf{1}_{x^{(r)}_k < x'_k}A_kx'_k\otimes V}\right)\\
&\le \left(2 + \lambda + (1-\lambda)\right) B^{(i)} \otimes V = 3 \frac{\partial D(r)}{\partial \delta}
\end{align*}
where the last inequality is due to the fact that $\sum_{j=1}^n A_j x_j \otimes V<2B^{(i)}\otimes V$ and \eqref{eq:sdp-sum-to-1}. The same result can be obtained when $\sum_{j=1}^n A_jx'_j \nsucceq B^{(i)}$ by regarding $\lambda$ as 1.
\end{proof}

\begin{proof}[Proof of \ref{sdp3}.]
$Y$ is the sum of matrices $\delta V$, where $\delta>0$ and $V$ is SPSD since $V=vv^T$. 
Hence, $\delta V$ is SPSD. $Y$ is the sum of SPSD matrices and is therefore itself SPSD. 

It remains to show that $Y/\Theta(\log\frac{\kappa n}{\lambda})$ is feasible for SDP \eqref{opt:sdp-packing} after satisfying the last implicit linear constraint in phase $r$. We recall that within a phase, each iteration ends in one of the following two cases: (1) the implicit violating linear constraint is satisfied by a factor of 2, or (2) there exists a variable $x^{(r)}_j$ that reaches the advice value $x'_j$. Suppose there are $t$ iterations in phase $r$. In iteration $p \le t$, let $x^{(r,p)}_j$, $B^{(p)}_j$, and $D^{(p)}_j$ be the value of $x^{(r)}_j$, $B_j$, and $D_j$ in the end of iteration $p$, respectively. Additionally, let $Y^{(p)}$ be the value of $Y$ in the end of iteration $p$. We have $x_j^{(r,p)}=B^{(p)}_j\,\exp\left(\frac{A_j\otimes Y^{(p)}}{c_j}\right)-D^{(p)}_j$. 
By using a similar argument in Theorem \ref{thm:covering} \ref{gpd3}, we have that
\begin{align*}
\frac{A_j\otimes Y^{(t)}}{c_j}&=\ln\frac{x_j^{(r,t)}+D^{(t)}_j}{B^{(t)}_j} =\ln\left(\frac{x_j^{(r,t)}+D^{(t)}_j}{x_j^{(r,t-1)}+D^{(t)}_j}\exp\left(\frac{A_j\otimes Y^{(t-1)}}{c_j}\right)\right) \\
&= \frac{A_j\otimes Y^{(t-1)}}{c_j} +  \ln\frac{x_j^{(r,t)}+D^{(t)}_j}{x_j^{(r,t-1)}+D^{(t)}_j} = ... \\
&= \ln\prod_{p=1}^t\frac{x_j^{(r,p)}+D^{(p)}_j}{x_j^{(r,p-1)}+D^{(p)}_j} \le \ln\frac{x_j^{(r,p)}+D^{min}_j}{x_j^{(r,0)}+D^{min}_j}
\end{align*}
where $D^{min}_j := \min_{p \in [t]} \{D^{(p)}_j\}$. 

Now we upper-bound $x^{(r,p)}_j$ and lower-bound $D^{min}_j$ to obtain the robustness ratio. Notice that $x^{(r,p)}_j \le 2\kappa$ since whenever $\sum_{i=1}^n A_i x^{(r)}_i \otimes V < 2 B^{(p)} \otimes V$, if we only consider $x^{(r,p)}_j$, the worst case is that the unit vector $v$ such that $V=v v^T$, is in the same direction of the vector corresponding to the largest eigenvalue of $B^{(p)}$ and that of the smallest eigenvalue of $A_i$, and $x^{(r,p)}$ is incremented until the implicit linear constraint is satisfied by a factor of 2. We also have that
\[D^{min}_j \ge \frac{\lambda \min_{i \in [m]}\{\lambda_{\min}(B^{(i)})\}}{\sum_{k=1}^n \lambda_{\max}(A_i)} \ge  \frac{\lambda}{\kappa n}.\]
With $x^{(r,0)}_j = \alpha(r)/(2nc_j) \geq 0$, we thus have that
$$\ln \frac{x^{(r,p)}_j + D^{min}_j}{x^{(r,0)}_j + D^{min}_j} \leq \ln \left(1 + \frac{2\kappa}{D^{min}_j}\right) = O\left(\log \frac{\kappa^2n}{ \lambda}\right) = O\left(\log \frac{\kappa n}{ \lambda} + \log \kappa \right) = O\left(\log \frac{\kappa n}{ \lambda}\right).$$
\end{proof}

\paragraph{Consistency.} We then show that $c^T x \leq O\left(\frac{1}{1-\lambda}\right) c^T x' + O\left(\log \left(\kappa n\right)\right)\opt$. Suppose the algorithm is in phase $r$ with constraint $i$ arriving. If $\sum_{j=1}^n A_jx'_j \nsucceq B^{(i)}$, then the change of $c^T x$ simply follows \ref{sdp2} and \ref{sdp3} by regarding $\lambda$ as 1, so $c^T x \le O\left(\log \left(\kappa n\right)\right)\opt$. For the more interesting case when $\sum_{j=1}^n A_jx'_j \succeq B^{(i)}$, we decompose $\frac{\partial P(r)}{\partial y_i}=\frac{\partial P_c}{\partial y_i}+\frac{\partial P_u}{\partial y_i}$, where $P_c$ is the component of the primal objective due to the advice and $P_u$ is the component of the primal objective due to the online algorithm. 
We have that the rate of change is credited to $\partial P_c$ if $x^{(r)}_j < x'_j$ and the rate of change is credited to $\partial P_u$ otherwise, if $x^{(r)}_j \ge x'_j$. 
In particular,
\begin{align*}
\frac{\partial P_c}{\partial\delta} &= \sum_{j\in[n]: x_j < x'_j} (A_j\otimes V)\left(x_j^{(r)}+\frac{\lambda B^{(i)} \otimes V}{\sum_{k=1}^n A_k\otimes V}+\frac{(1-\lambda)x'_j\mathbf{1}_{x^{(r)}_j < x'_j} B^{(i)} \otimes V}{\sum_{k=1}^n \mathbf{1}_{x^{(r)}_k < x'_k}A_kx'_k\otimes V}\right) \\
&\geq 0 + \left(\frac{\lambda}{\kappa n} + 1 - \lambda\right) B^{(i)} \otimes V,\\
\frac{\partial P_u}{\partial\delta} &= \sum_{j\in[n]: x_j \ge x'_j} (A_j\otimes V)\left(x_j^{(r)}+\frac{\lambda B^{(i)} \otimes V}{\sum_{k=1}^n A_k\otimes V}+\frac{(1-\lambda)x'_j\mathbf{1}_{x^{(r)}_j < x'_j} B^{(i)} \otimes V}{\sum_{k=1}^n \mathbf{1}_{x^{(r)}_k < x'_k}A_kx'_k\otimes V}\right) \leq (2 + \lambda) B^{(i)} \otimes V + 0.
\end{align*}
Thus we have $\frac{\partial P_u}{\partial \delta} \leq \frac{(2 + \lambda) B^{(i)} \otimes V}{(\lambda/(\kappa n) + 1 - \lambda) B^{(i)} \otimes V} \cdot \frac{\partial P_c}{\partial \delta}$, so that
\[\frac{\partial P(r)}{\partial \delta} \leq \left(1+\frac{2 + \lambda}{1-\lambda}\right) \frac{\partial P_c}{\partial \delta} = O\left(\frac{1}{1-\lambda}\right) \frac{\partial P_c}{\partial \delta}.\]
We note that if $x'$ is feasible, then $\sum_{j=1}^n A_jx'_j \succeq B^{(i)}$, so $c^T x \leq O\left(\frac{1}{1-\lambda}\right) c^T x'$.

\paragraph{Bounding the number of iterations.} The analysis simply follows the one for Theorem \ref{thm:covering}. The main difference is that (1) instead of violating linear covering constraints, we use implicit violating linear constraints, and (2) we bound the number of iterations instead of implicit violating constraints, where the number of iterations is $O(n)$ more than the number of implicit violating constraints since each variable can reach the advice value once.
\end{proof}

\subsection{Adding Box Constraints} \label{subsec:sdp-box}

We recall that the covering SDP problem \eqref{opt:sdp-covering-box} with box constraints is the following.
\begin{align*}
  \begin{aligned}
    \text{minimize } & c^T x
    \text{ over } x \in [0,1]^n 
    \text{ subject to } \sum_{j=1}^n A_jx_j\succeq B^{(i)}.
  \end{aligned}
\end{align*}
Our PDLA algorithm simultaneously considers the dual packing SDP problem:
\begin{align}
  \begin{aligned}
    \text{maximize } & B^{(i)} \otimes Y - \mathbf{1}^T z
    \text{ over } Y\succeq 0 \text{ and } z \in \nnreals^n
    \text{ subject to } A_j \otimes Y - z_j\le c_j & \forall j \in[n].
  \end{aligned}
                        \label{opt:sdp-packing-box}
\end{align}
We recall that we assume that the advice $x' \in [0,1]^n$.

We use the guess-and-double approach similar to Algorithms~\ref{alg:covering-box:phase} and \ref{alg:sdp:phase}. We maintain the set $T$ which denotes the subscript indices of the $x$ variables that are tight. In phase $r$, upon the arrival of an implicit violating linear constraint, we increment $x^{(r)}_j$ in terms of an exponential function of $\delta$ and $z_j$ subject to $x^{(r)}_j \le 1$. Once $x^{(r)}_j = 1$, we add $j$ to the tight set $T$ and stop incrementing $x^{(r)}_j$, but we still increment $Y$ continuously and $z_j$ with rate $A_j \otimes V \delta$ in order to maintain dual feasibility. In the beginning of each phase, $z$ is a zero vector and $T$ is an empty set. $z_j = 0$ whenever $j \in [n] \setminus T$. Subject to $x^{(r)}_j \le 1$, $x^{(r)}_j$ is increasing until the cost \emph{outside of the tight set} exceeds the \emph{remaining capacity} by a factor of 2. More specifically, when we have an implicit violating linear constraint induced by $V$, we have that $\sum_{j \in [n] \setminus T} A_j x^{(r)}_j \otimes V < B^{(i)} \otimes V - \sum_{j \in T} A_j \otimes V$, and we increment until $\sum_{j \in [n] \setminus T} A_j x^{(r)}_j \otimes V \ge 2(B^{(i)} \otimes V - \sum_{j \in T} A_j \otimes V)$.

If the advice is feasible, i.e., $\sum_{j=1}^n A_jx'_j \succeq B^{(i)}$, we increase each variable $x^{(r)}_j$ with growth rate
\[\frac{A_j \otimes V}{c_j} \left(x^{(r)}_j + \left(\frac{\lambda}{\sum_{k \in [n] \setminus T} A_k\otimes V}+\frac{(1-\lambda)x'_j\mathbf{1}_{x^{(r)}_j < x'_j}}{\sum_{k \in [n] \setminus T} \mathbf{1}_{x^{(r)}_k < x'_k}A_kx'_k\otimes V}\right)\left(B^{(i)} \otimes V -  \sum_{k\in T} A_k\otimes V \right)\right).\]
Alternatively if the advice is not feasible, we increase $x^{(r)}_j$ with growth rate
\[\frac{A_j \otimes V}{c_j} \left(x^{(r)}_j + \frac{B^{(i)} \otimes V -  \sum_{k\in T} A_k\otimes V}{\sum_{k \in [n] \setminus T} A_k\otimes V}\right).\]
Similar to \eqref{eq:sdp-sum-to-1}, we have that 
\begin{equation} \label{eq:sdp-box-sum-to-1}
\sum_{j \in [n] \setminus T}{\frac{A_j \otimes V}{\sum_{k \in [n] \setminus T} A_k \otimes V}} = 1 \text { and } \sum_{j \in [n] \setminus T}\frac{x'_j\mathbf{1}_{x^{(r)}_j < x'_j} A_j \otimes V}{\sum_{k \in [n] \setminus T} \mathbf{1}_{x^{(r)}_k < x'_k}A_kx'_k\otimes V} = 1.
\end{equation}

Each iteration now ends in one of the following four cases: (1) the implicit linear constraint induces a cost outside of the tight set that exceeds the remaining capacity by a factor of 2, (2) there exists a variable $x^{(r)}_j$ that reaches the advice value $x'_j$, (3) there exists a variable $x^{(r)}_j$ that reaches 1, or (4) the objective $c^T x^{(r)}$ reaches $\alpha(r)$. The coefficients $D_j$ and $B_j$ for fitting boundary conditions are defined based on the value of $x^{(r)}_j$ in the end of the previous iteration, stored as $\bar{x}_j$. Again, the indicator value used for the growth rate, $\mathbf{1}_{x^{(r)}_j < x'_j}$, has the same value as $\mathbf{1}_{\bar{x}_j < x'_j}$ during an iteration.

The main algorithm that uses the phase scheme is presented in Algorithm~\ref{alg:sdp-box:phase}. The continuous primal-dual approach in phase $r$ and round $i$, used as a subroutine, is presented in Algorithm~\ref{alg:sdp-box}. The performance of Algorithm \ref{alg:sdp-box:phase} is stated in Theorem \ref{thm:sdp-box}, the formal version of Theorem \ref{thm:sdp-box-inf}. We note that in line \ref{line:sdp-box:find-V} of Algorithm \ref{alg:sdp-box}, given the round $i$ and the tight set $T$, we find $V$ such that $\frac{\sum_{j\in[n] \setminus T} A_j\otimes V}{B^{(i)} \otimes V - \sum_{j\in T} A_j\otimes V}$ is minimized.\footnote{Following \cite{EladKN16}, for fixed $i$ and $T$, $V$ can be calculated by estimating $\eta$ to any desired precision via solving the SDP:
\[\min_V\left\{\left(\sum_{j\in[n] \setminus T} A_j - \eta \left(B^{(i)} - \sum_{j\in T} A_j\right)\right)\otimes V \mid V \succeq 0, \left(B^{(i)} - \sum_{j\in T} A_j\right) \otimes V \ge 0 \right\}.\]}

\begin{algorithm}[!htb]
\caption{Phase Scheme for Algorithm~\ref{alg:sdp-box}} \label{alg:sdp-box:phase}
\begin{algorithmic}[1]
\State{$r \gets 1$, $\alpha(1)\gets\min_{j \in [n]}\{c_j\trace(B^{(1)}) / \trace(A_j)\}$, $Y \gets 0$, $T \gets \emptyset$, $i_1 \gets 1$.}
\For{each $j \in [n]$}
    \State $x^{(r)}_j \gets \min\{x'_j,\alpha(r)/(2nc_j)\}$, $z_j \gets 0$.
\EndFor
\For{arriving covering constraint $\sum_{j=1}^n A_jx_j\succeq B^{(i)}$} \Comment{$i=1, 2, ..., m$ for an unknown $m$}
\State{Run Algorithm~\ref{alg:sdp-box}.} \label{line:sdp-box-alg}
\If {$c^T x^{(r)} \ge \alpha(r)$} \Comment{start a new phase}
    \State $r \gets r+1$, $\alpha(r) \gets 2\alpha(r)$, $Y \gets 0$, $T \gets \emptyset$, $i_r \gets i$.
    \For{each $j \in [n]$}
    \State $x^{(r)}_j \gets \min\{x'_j,\alpha(r)/(2nc_j)\}$, $z_j \gets 0$.
    \EndFor
    \State Go to line \ref{line:sdp-box-alg}.
\EndIf
\For{each $j \in [n]$}
    \State $x_j \gets \max_{\ell \in [r]} \{x_j^{(\ell)}\}$. \Comment{this is the solution returned in each round $i$}
\EndFor
\EndFor
\end{algorithmic}
\end{algorithm}

\begin{algorithm}[!htb]
\caption{PDLA Online SDP Covering with Box Constraints in Phase $r$ and Round $i$} \label{alg:sdp-box}

\renewcommand{\algorithmicrequire}{\textbf{Input:}}
\renewcommand{\algorithmicensure}{\textbf{Output:}}

\algorithmicrequire{ $x^{(r)}$: current solution, $\alpha(r)$: estimate for $\opt$, $B^{(i)}$: current lower bound matrix, $i_r$: starting round of phase $r$, $Y$: dual variable matrix, $z$: dual variable vector, $T$: tight variable set, $\lambda$: the confidence parameter, and $x'$: the advice for $x$.}

\algorithmicensure{ Updated $x^{(r)}$, $Y$, and $z$.}

\begin{algorithmic}[1]

    \State $\bar{x} \gets x^{(r)}.$ \label{line:sdp-box-restore} \Comment{$\bar{x}$ is the value of $x^{(r)}$ at the end of the previous iteration}
    \While{$\sum_{j=1}^n A_jx_j\not\succeq B^{(i)}$} \label{line:sdp-box-while-outer}
        \State{Find an SPSD matrix $V$ with $\trace(V)=1$ such that $\sum_{j=1}^n (A_jx^{(r)}_j)\otimes V<B^{(i)}\otimes V$.} \label{line:sdp-box:find-V}
        \For{each $j \in [n]$}
        \If{$\sum_{k=1}^n A_kx'_k \succeq B^{(i)}$}
            \State{$D_j\gets\left(\frac{\lambda}{\sum_{k \in [n] \setminus T} A_k\otimes V}+\frac{(1-\lambda)x'_j\mathbf{1}_{\bar{x}_j < x'_j}}{\sum_{k \in [n] \setminus T} \mathbf{1}_{\bar{x}_k < x'_k}A_kx'_k\otimes V}\right)\left(B^{(i)} \otimes V -  \sum_{k\in T} A_k\otimes V \right)$,}
        \Else
            \State{$D_j\gets\frac{B^{(i)} \otimes V - \sum_{k\in T} A_k\otimes V}{\sum_{k\in[n] \setminus T} A_k\otimes V}$.}
        \EndIf

        \State{$B_j\gets\frac{\bar{x}_j+D_j}{\exp\left(\left(A_j\otimes Y - z_j\right) / c_j\right)}$.}
        \EndFor

        \While {$T \neq [n]$ and $\sum_{j \in [n] \setminus T} A_j x^{(r)}_j \otimes V < 2(B^{(i)} \otimes V - \sum_{j \in T} A_j \otimes V)$} \label{line:sdp-box-while-inner}
       \State{Set $\delta=0$ and increase it continuously.}
       \State{Increase $Y$ by continuously adding $V\delta$ to $Y$.}
       \For{each $j \in T$}
            \State Increase $z_j$ with rate $A_j \otimes V \delta$.
        \EndFor
       \State{Update $x^{(r)}$ continuously by simultaneously setting \[x_j^{(r)}\gets \min\left\{1, B_j\,\exp\left(\frac{A_j\otimes Y - z_j}{c_j}\right)-D_j\right\}.\]} \label{line:sdp-box-inc}
       \If{any $x^{(r)}_j=x'_j < 1$}
                    \State Break and go to line \ref{line:sdp-box-restore}.
        \EndIf
                \If{any $x^{(r)}_j=1$ for $j \notin T$}
                    \State Add $j$ to $T$ and go to line \ref{line:sdp-box-restore}.
                    \EndIf
                \If{$c^T x^{(r)} \ge \alpha(r)$}
                \State Break and return.
            \EndIf
       \EndWhile
       \If {$T = [n]$ and $\sum_{j=1}^n A_jx_j\not\succeq B^{(i)}$}
        \State \Return no feasible solution. \label{line:sdp-box-infeasible}
    \EndIf
       \EndWhile
\end{algorithmic}
\end{algorithm}

\begin{theorem}
\label{thm:sdp-box}
For the learning-augmented online covering SDP problem with box constraints, there exists an online algorithm that generates $x$ such that
\[c^Tx \le \min\{O(\frac{1}{1-\lambda})c^Tx' +  O(\log s)\opt,O(\log \frac{s}{\lambda} )\opt\}\]
and $x$ is updated $O(n \log s \log \frac{c^T x}{\alpha(1)})$ times. If $x'$ is feasible for SDP \eqref{opt:sdp-covering-box}, then $c^Tx \le O(\frac{1}{1-\lambda})c^Tx'$.
Here,
\[s:=\max_{
i \in [m], T \subseteq [n]}\min_{
V: B^{(i)} \otimes V - \sum_{j\in T} A_j\otimes V > 0
}\left\{\frac{\sum_{j\in[n] \setminus T} A_j\otimes V}{B^{(i)} \otimes V - \sum_{j\in T} A_j\otimes V}\right\}.\]
\end{theorem}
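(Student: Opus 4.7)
The plan is to combine the primal–dual guess-and-double framework of Theorem~\ref{thm:sdp} with the box-constraint bookkeeping (tight set $T$, dual vector $z$) of Theorem~\ref{thm:covering-box}. Set $P(r)=c^Tx^{(r)}$ and $D(r)=B^{(i)}\otimes Y-\mathbf{1}^Tz$ in phase $r$. I will establish five claims parallel to those in Theorem~\ref{thm:covering-box}: (i) the boundary conditions $B_j,D_j$ are chosen so $x^{(r)}$ varies continuously across iterations, hence when the outer while loop of Algorithm~\ref{alg:sdp-box} terminates we have $\sum_j A_jx_j^{(r)}\succeq B^{(i)}$; (ii) in a completed phase $\alpha(r)\le 6D(r)$; (iii) the scaled pair $(Y,z)/\Theta(\log(s/\lambda))$ is feasible for SDP~\eqref{opt:sdp-packing-box}; (iv) the telescoping estimate over phases gives total primal cost at most $2\alpha(r)$; and (v) the coordinate-wise maximum is still feasible and bounded by $2\alpha(r')$. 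Claims (iv) and (v) are identical to the earlier proofs.

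For (ii), I will compute $\partial P(r)/\partial\delta=\sum_{j\in[n]\setminus T}(A_j\otimes V)(x_j^{(r)}+D_j)$ exactly as in Theorem~\ref{thm:sdp}, but with $D_j$ now scaled by the residual capacity $B^{(i)}\otimes V-\sum_{k\in T}A_k\otimes V$. Using the inner-loop invariant $\sum_{j\in[n]\setminus T}A_jx_j^{(r)}\otimes V<2(B^{(i)}\otimes V-\sum_{j\in T}A_j\otimes V)$ and the normalization \eqref{eq:sdp-box-sum-to-1}, this sum is at most $3(B^{(i)}\otimes V-\sum_{k\in T}A_k\otimes V)=3\,\partial D(r)/\partial\delta$, where the latter identity uses that for $j\in T$ the increment of $z_j$ at rate $A_j\otimes V\,\delta$ exactly cancels the contribution of the tight indices to $B^{(i)}\otimes Y$. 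The initial gap $\alpha(r)/2$ between the primal at the start of the phase and the threshold $\alpha(r)$ then yields $\alpha(r)\le 6D(r)$.

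Claim (iii) is the main technical obstacle. I will imitate the telescoping log-product argument of Theorem~\ref{thm:sdp}~\ref{sdp3}: over the $t$ iterations of a phase, I write $x_j^{(r,p)}=B_j^{(p)}\exp((A_j\otimes Y^{(p)}-z_j^{(p)})/c_j)-D_j^{(p)}$ and iterate the boundary-fitting definition of $B_j^{(p)}$ to obtain
\[
\frac{A_j\otimes Y^{(t)}-z_j^{(t)}}{c_j}=\ln\prod_{p=1}^{t}\frac{x_j^{(r,p)}+D_j^{(p)}}{x_j^{(r,p-1)}+D_j^{(p)}}\le\ln\frac{x_j^{(r,t)}+D_j^{\min}}{x_j^{(r,0)}+D_j^{\min}},
\]
where I use Claim~\ref{cl:a-b-c1-c2} and set $D_j^{\min}=\min_p D_j^{(p)}$. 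For indices that become tight, the simultaneous increases of $A_j\otimes Y$ and $z_j$ cancel, so the telescoping is unaffected once $j\in T$. Now $x_j^{(r,t)}\le 1$ by the box constraint (this is where boxes make the SDP argument clean and avoid the $\kappa$ factor from Theorem~\ref{thm:sdp}), while $D_j^{\min}\ge\lambda/s$ by the very definition of $s$ in the statement, taking the minimum over all $(i,T)$ encountered. Hence the exponent is $O(\log(s/\lambda))$, establishing (iii), and weak duality \eqref{eq:sdp-pd} together with (iv)–(v) delivers the robustness bound $c^Tx\le O(\log(s/\lambda))\opt$.

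For consistency, I split $\partial P(r)/\partial\delta=\partial P_c/\partial\delta+\partial P_u/\partial\delta$ exactly as in Theorem~\ref{thm:covering-box}, according to whether the coordinate has reached its advice value. The advice-credited part satisfies $\partial P_c/\partial\delta\ge(\lambda/s+1-\lambda)(B^{(i)}\otimes V-\sum_{k\in T}A_k\otimes V)$ and the online-credited part satisfies $\partial P_u/\partial\delta\le(2+\lambda)(B^{(i)}\otimes V-\sum_{k\in T}A_k\otimes V)$, giving $\partial P/\partial\delta\le O(1/(1-\lambda))\,\partial P_c/\partial\delta$ and hence $c^Tx\le O(1/(1-\lambda))c^Tx'+O(\log s)\opt$; in the feasible-advice case we never enter the else branch so the additive $\opt$ term vanishes. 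Finally, to count iterations, I repeat the large/small coordinate dichotomy of Theorem~\ref{thm:covering-box}: after processing a violating implicit linear constraint, either some $x_j^{(r)}\ge 1/(2s)$ is multiplied by $3/2$ or some small coordinate crosses the $1/(2s)$ threshold; otherwise the residual capacity cannot be doubled. Each coordinate therefore survives at most $O(\log s)$ multiplicative updates per phase, there are $n$ coordinates and $O(\log(c^Tx/\alpha(1)))$ phases, giving the claimed $O(n\log s\log(c^Tx/\alpha(1)))$ bound on the number of updates of $x$.
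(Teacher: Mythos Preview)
Your proposal is correct and follows essentially the same approach as the paper: the same five-claim robustness skeleton, the same telescoping log-product for (iii) with $x_j^{(r,t)}\le 1$ and $D_j^{\min}\ge\lambda/s$, the same $P_c/P_u$ split for consistency, and the same large/small dichotomy for the iteration count. One point to make explicit: the lower bound $D_j^{\min}\ge\lambda/s$ relies not only on the definition of $s$ but on the fact that Algorithm~\ref{alg:sdp-box} chooses $V$ to \emph{minimize} the ratio $\frac{\sum_{j\in[n]\setminus T}A_j\otimes V}{B^{(i)}\otimes V-\sum_{j\in T}A_j\otimes V}$ for the current $(i,T)$; without that choice the bound could fail.
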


\begin{proof}

Let $P(r)=c^T x^{(r)}$ and $D(r)=B^{(i)} \otimes Y - \mathbf{1}^T z$ be the objective value of the primal and the dual in phase $r$, respectively. We use Algorithm~\ref{alg:sdp-box:phase} and first show robustness, i.e., $c^T x \le O\left(\log\frac{s}{\lambda}\right) \opt$, then show consistency, i.e., $c^T x \le O\left(\frac{1}{1-\lambda}\right) c^T x' + O(\log s)\opt$. Finally, we show that only $O(n \log s \log \frac{c^T x}{\alpha(1)})$ iterations are needed.

\paragraph{Robustness.}
To show that $c^T x \le O\left(\log\frac{s}{\lambda}\right) \opt$, we prove the following five claims:
\begin{enumerate} [label=(\roman*)]
\item \label{sdp1-box} If line \ref{line:sdp-box-infeasible} in Algorithm \ref{alg:sdp-box} is not encountered, then $x$ is feasible for SDP \eqref{opt:sdp-covering-box}.
\item \label{sdp2-box} For each \emph{finished} phase $r$, $\alpha(r) \leq 6 D(r)$.
\item \label{sdp3-box} $(y,z)/\Theta\left(\log\frac{s}{\lambda}\right)$  
is feasible for SDP \eqref{opt:sdp-packing-box} in each phase.
\item \label{sdp4-box} The sum of the covering objective generated from phase 1 to $r$ is at most $2 \alpha(r)$.
\item \label{sdp5-box} Let $r'$ be the last phase, then the covering objective $c^T x \leq 2 \alpha(r')$.
\end{enumerate}
The proofs of \ref{sdp4-box} and \ref{sdp5-box} are the same as the ones in Theorem~\ref{thm:sdp}, hence omitted. Equipped with these claims and weak duality \eqref{eq:sdp-pd}, we have that
\[
c^T x \le \Theta(1) \alpha(r') \le \Theta(1) D(r') = O\left(\log\frac{s}{\lambda}\right) \opt.
\]

\begin{proof}[Proof of \ref{sdp1-box}.]
We prove that $x^{(r)}$ is feasible in phase $r$ by showing that the growing functions in line \ref{line:covering-box-inc} increments $x^{(r)}$ in a continuous manner. In the beginning of an iteration in round $i$, we have that
\[x^{(r)}_j = B_j\exp\left(\frac{A_j\otimes Y}{c_j}\right)-D_j = \frac{\bar{x}_j + D_j}{\exp\left(\frac{A_j\otimes Y}{c_j}\right)} \exp\left(\frac{A_j\otimes Y}{c_j}\right) - D_j = \bar{x}_j.\]
By following Algorithm \ref{alg:sdp-box}, we have that $x^{(r)}$ is feasible since we terminate the while loop at line \ref{line:sdp-box-while-outer} when there does not exists an SPSD matrix $V$ such that $\sum_{j=1}^n A_j x_j \otimes V < B^{(i)} \otimes V$, indicating that $\sum_{j=1}^n A_jx_j \succeq B^{(i)}$. Whenever we find such a $V$, it induces a linear constraint and we increment $x^{(r)}$ until the cost outside of the tight set exceeds the remaining capacity by a factor of 2 in line \ref{line:sdp-box-while-inner}. As $x$ is the coordinate-wise maximum of $\{x^{(r)}\}$, capped at 1, and the sum of SPSD matrices is still SPSD, $x$ must be feasible for SDP \eqref{opt:sdp-covering-box}.
\end{proof}

\begin{proof}[Proof of \ref{sdp2-box}.]
In the beginning of phase $r$, $x^{(r)}_j = \min\{x'_j,\alpha(r) / (2nc_j)\}$, so $P(r)$ is initially at most $\alpha(r) / 2$. The total increase of $P(r)$ is at least $\alpha(r) / 2$ as $P(r) \geq \alpha(r)$ when phase $r$ ends. Therefore, it suffices to show that
$$\frac{\partial P(r)}{\partial \delta} \leq 3 \frac{\partial D(r)}{\partial \delta}.$$
Recall that $T$ is the set of the tight $x$ indices and the $z_j$ variables with in $j \in T$ are increasing with rate $A_j \otimes V$, we have that
\[\frac{\partial D(r)}{\partial \delta} = B^{(i)} \otimes V - \sum_{j \in T} A_j \otimes V.\]
When $Y$ is increasing, this partial derivative is always non-negative due to the condition in Algorithm \ref{alg:sdp-box} line \ref{line:sdp-box-while-inner}. Namely, at the point when $x^{(r)}_j$ is tight, we add $j$ to $T$, and this may make the partial derivative change from non-negative to negative. In this case, we have that $2(B^{(i)} \otimes V - \sum_{j \in T} A_j \otimes V) < 0$ so the condition automatically fails. From here, when $\sum_{j=1}^n A_jx'_j \succeq B^{(i)}$, we have that
\begin{align*}
\frac{\partial P(r)}{\partial \delta} &= \sum_{j \in [n] \setminus T} c_j \frac{\partial x^{(r)}_j}{\partial \delta} = \sum_{j\in[n] \setminus T} c_j\frac{\partial}{\partial\delta}\left( B_j\,\exp\left(\frac{A_j\otimes (Y+V\delta)- z_j}{c_j}\right)-D_j\right)\\
&= \sum_{j\in[n] \setminus T} c_j\left(\frac{A_j\otimes V}{c_j}\right)B_j\,\exp\left(\frac{A_j\otimes (Y+V\delta) - z_j}{c_j} \right) = \sum_{j\in[n] \setminus T} (A_j\otimes V)(x_j^{(r)}+D_j)\\
&= \sum_{j\in[n] \setminus T} (A_j\otimes V)(x_j^{(r)}+\frac{\lambda (B^{(i)} \otimes V - \sum_{k\in T} A_k\otimes V)}{\sum_{k\in[n] \setminus T} A_k\otimes V}+\frac{(1-\lambda)x'_j\mathbf{1}_{x^{(\ell)}_j < x'_j}(B^{(i)} \otimes V - \sum_{k\in T} A_k\otimes V)}{\sum_{k\in[n] \setminus T} \mathbf{1}_{x^{(\ell)}_k < x'_k}A_kx'_k\otimes V})\\
&\le \left(2 + \lambda + (1-\lambda)\right) (B^{(i)} \otimes V - \sum_{j\in T} A_j\otimes V) = 3 \frac{\partial D(r)}{\partial \delta}
\end{align*}
where the last inequality is due to the fact that $\sum_{j \in [n] \setminus T} A_j x_j \otimes V < 2(B^{(i)} \otimes V - \sum_{j \in T} A_j \otimes V)$ and \eqref{eq:sdp-box-sum-to-1}. The same result can be obtained when $\sum_{j=1}^n A_jx'_j \nsucceq B^{(i)}$ by regarding $\lambda$ as 1.
\end{proof}

\begin{proof}[Proof of \ref{sdp3-box}.]
The constraint $Y\succeq 0$ is satisfied by the same statement in Theorem \ref{thm:sdp}. We show that $y/\Theta(\log\frac{\kappa n}{\lambda})$ is feasible for SDP \eqref{opt:sdp-packing-box} after satisfying the last implicit linear constraint in phase $r$. We recall that within a phase $r$, each iteration ends in one of the following three cases: (1) the implicit linear constraint induces a cost outside of the tight set that exceeds the remaining capacity by a factor of 2, (2) there exists a variable $x^{(r)}_j$ that reaches the advice value $x'_j$, or (3) there exists a variable $x^{(r)}_j$ that reaches 1. Suppose there are $t$ iterations in phase $r$. In iteration $p \le t$, let $x^{(r,p)}_j$, $B^{(p)}_j$, and $D^{(p)}_j$ be the value of $x^{(r)}_j$, $B_j$, and $D_j$ in the end of iteration $p$, respectively. Additionally, let $Y^{(p)}$ and $z^{(p)}_j$ be the value of $Y$ and $z_j$ in the end of iteration $p$, respectively. We have $x_j^{(r,p)}=B^{(p)}_j\,\exp\left(\frac{A_j\otimes Y^{(p)} - z^{(p)}_j}{c_j}\right)-D^{(p)}_j$. 
By using a similar argument in Theorem \ref{thm:sdp} \ref{sdp3}, we have that
\begin{align*}
\frac{A_j\otimes Y^{(t)} - z^{(t)}_j}{c_j}&=\ln\frac{x_j^{(r,t)}+D^{(t)}_j}{B^{(t)}_j} =\ln\left(\frac{x_j^{(r,t)}+D^{(t)}_j}{x_j^{(r,t-1)}+D^{(t)}_j}\exp\left(\frac{A_j\otimes Y^{(t-1)} - z^{(t-1)}_j}{c_j}\right)\right) \\
&= \frac{A_j\otimes Y^{(t-1)} - z^{(t-1)}_j}{c_j} +  \ln\frac{x_j^{(r,t)}+D^{(t)}_j}{x_j^{(r,t-1)}+D^{(t)}_j} = ... \\
&= \ln\prod_{p=1}^p\frac{x_j^{(r,p)}+D^{(p)}_j}{x_j^{(r,p-1)}+D^{(p)}_j} \le \ln\frac{x_j^{(r,p)}+D^{min}_j}{x_j^{(r,0)}+D^{min}_j}
\end{align*}
where $D^{min}_j := \min_{p \in [t]} \{D^{(p)}_j\}$. 

Now we upper-bound $x^{(r,p)}_j$ and lower-bound $D^{min}_j$ to obtain the robustness ratio. Clearly, $x^{(r,p)}_j \le 1$. We also have that
\begin{align*}
D^{min}_j &\ge \frac{\lambda\left(B^{(i)} \otimes V -  \sum_{k\in T} A_k\otimes V \right)}{\sum_{k \in [n] \setminus T} A_k\otimes V}\\
&\ge \min_{
i \in [m], T \subseteq [n]}\max_{
V: B^{(i)} \otimes V - \sum_{k\in T} A_k\otimes V > 0
}\left\{\frac{\lambda \left(B^{(i)} \otimes V - \sum_{k\in T} A_k\otimes V\right)}{\sum_{k\in[n] \setminus T} A_k\otimes V}\right\} \ge \frac{\lambda}{s}
\end{align*}
where the second inequality holds by considering all possible $i$ and $T$ and using our choice of $V$.

With $x^{(r,0)}_j \in [0,1]$, we thus have that
$$\ln \frac{x^{(r,p)}_j + D^{min}_j}{x^{(r,0)}_j + D^{min}_j} \leq \ln \left(1 + \frac{1}{D^{min}_j}\right) = O\left(\log \frac{s}{ \lambda}\right).$$
\end{proof}

\paragraph{Consistency.} We then show that $c^T x \leq O\left(\frac{1}{1-\lambda}\right) c^T x' + O(\log s)\opt$. Suppose the algorithm is in phase $r$ with constraint $i$ arriving. If $\sum_{j=1}^n A_jx'_j \nsucceq B^{(i)}$, then the change of $c^T x$ simply follows \ref{sdp2-box} and \ref{sdp3-box} by regarding $\lambda$ as 1, so $c^T x \le O(\log s)\opt$. For the more interesting case when $\sum_{j=1}^n A_jx'_j \succeq B^{(i)}$, we decompose $\frac{\partial P(r)}{\partial y_i}=\frac{\partial P_c}{\partial y_i}+\frac{\partial P_u}{\partial y_i}$, where $P_c$ is the component of the primal objective due to the advice and $P_u$ is the component of the primal objective due to the online algorithm. 
We have that the rate of change is credited to $\partial P_c$ if $x^{(r)}_j < x'_j$ and the rate of change is credited to $\partial P_u$ otherwise, if $x^{(r)}_j \ge x'_j$. 
In particular,
\begin{align*}
\frac{\partial P_c}{\partial\delta} &= \sum_{j\in[n]: x_j < x'_j} (A_j\otimes V)\left(x_j^{(r)}+\frac{\lambda\left(B^{(i)} \otimes V -  \sum_{k\in T} A_k\otimes V \right)}{\sum_{k \in [n] \setminus T} A_k\otimes V}+\frac{(1-\lambda)x'_j\mathbf{1}_{x^{(r)}_j < x'_j}\left(B^{(i)} \otimes V -  \sum_{k\in T} A_k\otimes V \right)}{\sum_{k \in [n] \setminus T} \mathbf{1}_{x^{(r)}_k < x'_k}A_kx'_k\otimes V}\right) \\
&\geq 0 + \left(\frac{\lambda}{s} + 1 - \lambda\right) \left(B^{(i)} \otimes V -  \sum_{k\in T} A_k\otimes V \right),\\
\frac{\partial P_u}{\partial\delta} &= \sum_{j\in[n]: x_j \ge x'_j} (A_j\otimes V)\left(x_j^{(r)}+\frac{\lambda\left(B^{(i)} \otimes V -  \sum_{k\in T} A_k\otimes V \right)}{\sum_{k \in [n] \setminus T} A_k\otimes V}+\frac{(1-\lambda)x'_j\mathbf{1}_{x^{(r)}_j < x'_j}\left(B^{(i)} \otimes V -  \sum_{k\in T} A_k\otimes V \right)}{\sum_{k \in [n] \setminus T} \mathbf{1}_{x^{(r)}_k < x'_k}A_kx'_k\otimes V}\right)\\
&\leq (2 + \lambda) \left(B^{(i)} \otimes V -  \sum_{k\in T} A_k\otimes V \right) + 0.
\end{align*}
Thus we have $\frac{\partial P_u}{\partial \delta} \leq \frac{(2 + \lambda) \left(B^{(i)} \otimes V -  \sum_{k\in T} A_k\otimes V \right)}{(\lambda/s + 1 - \lambda) \left(B^{(i)} \otimes V -  \sum_{k\in T} A_k\otimes V \right)} \cdot \frac{\partial P_c}{\partial \delta}$, so that
\[\frac{\partial P(r)}{\partial \delta} \leq \left(1+\frac{2 + \lambda}{1-\lambda}\right) \frac{\partial P_c}{\partial \delta} = O\left(\frac{1}{1-\lambda}\right) \frac{\partial P_c}{\partial \delta}.\]
We note that if $x'$ is feasible, then $\sum_{j=1}^n A_jx'_j \succeq B^{(i)}$, so $c^T x \leq O\left(\frac{1}{1-\lambda}\right) c^T x'$.

\paragraph{Bounding the number of iterations.} The analysis follows similarly to the one for Theorem \ref{thm:covering-box}. We first have that there are $O(\log (c^T x/\alpha(1)))$ phases.

We recall that within phase $r$, each iteration ends in one of the following three cases: (1) the implicit linear constraint induces a cost outside of the tight set that exceeds the remaining capacity by a factor of 2, (2) there exists a variable $x^{(r)}_j$ that reaches the advice value $x'_j$, (3) there exists a variable $x^{(r)}_j$ that reaches 1. If an iteration ends in the first case, before $x^{(r)}$ is updated, we have an implicit violating linear constraint induced by $V$ such that $\sum_{j \in [n] \setminus T} A_j x^{(r)}_j \otimes V < B^{(i)} \otimes V - \sum_{j \in T} A_j \otimes V$. We increment $x^{(r)}_j$'s where $j \in [n] \setminus T$ until $\sum_{j \in [n] \setminus T} A_j x^{(r)}_j \otimes V \ge 2\left(B^{(i)} \otimes V - \sum_{j \in T} A_j \otimes V\right)$. One of the following two cases must hold after updating $x^{(r)}$: (1) there exist a \emph{large} variable $x^{(r)}_j \geq 1/(2s)$ that is updated to at least $3x^{(r)}_j/2$, or (2) there exists a \emph{small} variable $x^{(r)}_j < 1/(2s)$ that becomes large, i.e., $x^{(r)}_j$ is updated to at least $1/(2s)$. Let $L$ and $S$ be the set of large and small variable subscript labels in $[n] \setminus T$ before the violating constraint $i$ arrives, respectively, and $\hat{x}^{(r)}_j$ be the value of $x^{(r)}_j$ after the update. If none of these two cases holds, then
\begin{align*}
\sum_{j \in [n] \setminus T} A_j \hat{x}^{(r)}_j \otimes V &< \frac{3}{2} \sum_{j \in L} A_j x^{(r)}_j \otimes V + \frac{1}{2s} \sum_{j \in S} A_j \otimes V \\
&< \frac{3}{2}\left(B^{(i)} \otimes V - \sum_{j \in T} A_j \otimes V\right) + \frac{\left(B^{(i)} \otimes V - \sum_{j \in T} A_j \otimes V\right)}{2\sum_{j \in [n] \setminus T} A_j \otimes V} \sum_{j \in S} A_j \otimes V \\
&\le 2 \left(B^{(i)} \otimes V - \sum_{j \in T} A_j \otimes V\right)
\end{align*}
where the second inequality is by the fact that implicit linear constraint induced by $V$ is violated before the update and the definition of the sparsity $s$, while the last inequality is by $S \subseteq [n] \setminus T$ and $A_j \otimes V \ge 0$ since $A_j$ is SPSD.
This implies that the cost outside of the tight set does not exceed the remaining capacity by a factor of 2 after the update, a contradiction.

Suppose $x^{(r)}_j$ has been updated $t$ times by a multiplicative factor of 3/2 since it was large in phase $r$, then $t=O(\log s)$ since $\frac{1}{2s}(\frac{3}{2})^{t} \le 1$.

There are $n$ variables. In each phase, each variable can be updated from small to large once, updated at most $t$ times by a factor of $3/2$ since it was large, capped at 1 once, and reach the advice value once. Hence, Algorithm~\ref{alg:sdp-box:phase} has
$O(n \log s \log \frac{c^T x}{\alpha(1)})$
iterations.
\end{proof}

\section{Empirical Evaluations} \label{sec:ex}

We demonstrate the applicability of our algorithmic framework on a synthetic dataset as well as a real world case study on internet router graphs. Our focus will be on evaluating our online covering algorithms, Algorithms \ref{alg:covering:phase} and \ref{alg:covering}, with possibly fractional hints and fractional entries. We focus on this setting since it’s the simplest of our algorithms and already captures our overall algorithm framework. Note that prior work in \cite{BamasMS20} has already demonstrated the empirical advantage of learning-based methods for online covering with integral hints and constraints, albeit on synthetic datasets. 

\paragraph{Datasets.}
Our synthetic dataset is constructed as follows. The constraint matrix $A$ represents a $n \times n$ matrix where each entry is uniformly in $\{0, 1\}$. We set $n = 500$. The objective function $c$ is a scaled vector with entries uniform in $[0,1]$. The vector we are covering is the all ones vector, i.e., we are solving $Ax \ge \mathbf{1}$, and the rows of $A$ arrive online. Our graph dataset is constructed as follows. We have a sequence of nine (unweighted) graphs which represents an internet router network sampled across time~\cite{snapnets, oregon1}\footnote{Graphs can be accessed in \url{https://snap.stanford.edu/data/Oregon-1.html}}. The graphs have approximately $n \sim 10^4$ nodes and $m \sim 2.2 \cdot 10^4$ edges. We note that the nodes of the graphs are labeled and the labeling is consistent throughout the different time stamps. Each graph defines an instance of the set cover problem derived from vertex cover as follows. The edges of the graphs are labeled and the $n$ vertex neighborhoods define $n$ sets. The edges of the graph (the universe elements) appear online and we must cover these edges by choosing vertices (the sets) which are incident on them. The objective function $c$ will be the same as the synthetic case so our problem represents an instance of weighted vertex cover.

\paragraph{Predictions.}  We instantiate predictions in a variety of ways, drawing inspiration from many prior works\cite{BamasMS20, DinitzILMV21, chen2022}. First we describe our predictions for the synthetic datasets. We consider two types of predictions. In one case, we first find the optimal offline solution $x$ by solving the full linear program. We then noisy corrupt the entries of $x$ by setting the entries to be $0$ independently with some probability $p$. This is the same type of predictions in \cite{BamasMS20} which they refer to as the ‘replacement rate’ strategy.

The second type of predictions are informed by the following motivating setting of learning-augmented algorithms: we are solving many different, but related, problem instances. To mimic this situation, we have matrices $A_0, A_1, \ldots$ where each index represents a new problem instance. $A_0$ is our synthetic matrix and $A_{i+1}$ updates $A_i$ by flipping $n$ entries at random. We fix $c$ to be the same throughout. The predictions for all instances $i \ge 1$ are given by the optimal offline solution generated from the first instance $A_0$. This ``batch" experimental design naturally models the scenario where the current problem instance is similar to past instances and so one can hope to utilize past learned information to improve current algorithmic performance. A similar style of predictions, although not in an online context, has been employed in \cite{ChenEILNRSWWZ22, ErgunFSWZ22,  DinitzILMV21, chen2022}. Furthermore, this mimics the setting for proving PAC learning bounds.

For our graph dataset, we first solve the set cover instance on the first graph in the family using an offline linear program solver. We then use the solution from the first graph as the hint for all subsequent graphs. We also noisy alter this hint for one of our experiments using the replacement rate strategy. Note that the set of vertices might vary across graphs since new vertices can appear in the network while older vertices may be removed. In this case, we set the corresponding entry in the hint vector to be $0$. 

The type of dataset and predictions used will always be stated explicitly in our experimental results below.

\paragraph{Results.} 
Our results are shown in Figures \ref{fig:synthetic} and \ref{fig:graphs}. Figure \ref{fig:synthetic} shows the results on the synthetic dataset while Figure \ref{fig:graphs} refers to our graph dataset. Description of each figure follows.

In Figure \ref{fig:synthetic_a}, we consider a single online instance of the synthetic dataset. Our prediction is the offline optimal solution. The figure shows a smooth trade-off in the competitive ratio as the parameter $\lambda$ ranges from $0$ (full trust in the predictions) to $1$ (standard online algorithm with no hints), as predicted by our theoretical bounds. Since the instance is random, we plot the average of $20$ trials for each setting of $\lambda$ and also shade in one standard deviation. The plot validates the consistency of our algorithms as the competitive ratio is a factor of two lower with accurate predictions.

In Figure \ref{fig:synthetic_b}, we again consider a single synthetic instance. This time, we consider the ``replacement rate" strategy and randomly zero out the entries of the prediction (which is again the offline optimum) independently. The expected fraction of entries in the hint vector being set to zero is denoted as the corruption rate and is shown in the $x$-axis. We see that for a fixed setting of $\lambda$, such as $\lambda = 0.1$, our algorithm performs much better than not utilizing any hints when the corruption factor is low. This intuitively makes sense and is exactly what Figure \ref{fig:synthetic_a} shows. However, as we increase the corruption factor, the performance of the algorithm degrades. Crucially, the performance \emph{does not} degrade arbitrarily worse compared to $\lambda = 1$ (no hints) performance, which empirically validates the robustness of our algorithm. Indeed, our algorithm with hints is able to outperform the baseline even up to a high corruption factor. Lastly, if we utilize hints in a naive way where we only scale the hint variables to satisfy the constraints, then the competitive ratio is at least four orders of magnitude larger than the values in Figure \ref{fig:synthetic_b}. Thus, post-processing the hints, such as what our algorithm does, is crucial.

In Figure \ref{fig:synthetic_c}, we consider a ``batch" experimental design for our synthetic dataset with $20$ time steps. The green curve shows the competitive ratio without using any hints, i.e., $\lambda$ is set to $1$ (the baseline). The orange curve shows the competitive ratio across the varying instances when we use a batch prediction. Precisely, this means we use the optimal offline solution of the \emph{first} instance and this hint is fixed for all future instances which noisily drift away from the first instance. The blue curve showcases more powerful predictions where the offline optimal of time step $t-1$ is used as the hint for time step $t$. We display the average values across $20$ instances. We see that as the time step increases, the orange curve drifts upwards, which is intuitive as the problem instances are increasingly different. Nevertheless, the hint stays valid for many time steps as the orange curve is still below the green baseline even after many time steps. As expected, the blue curve consistently has the lowest competitive ratio as the hints are also updated. We do not shade in the standard deviation to increase the clarity of the figure but the variance of the curves similar to Figure \ref{fig:synthetic_a}.

\begin{figure}
     \centering
     \begin{subfigure}[b]{0.3\textwidth}
         \centering
         \includegraphics[width=\textwidth]{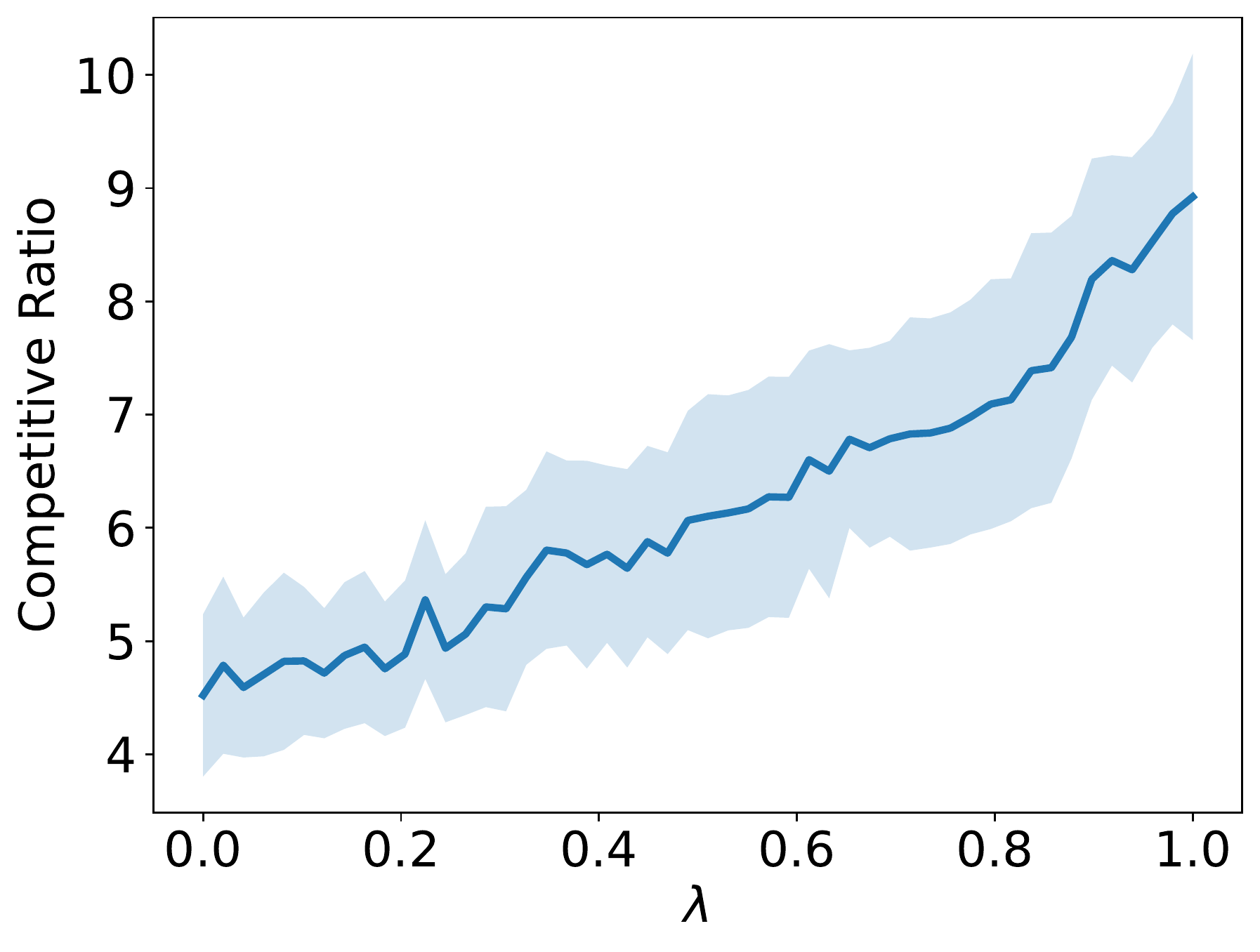}
         \caption{}
         \label{fig:synthetic_a}
     \end{subfigure}
     \hfill
     \begin{subfigure}[b]{0.3\textwidth}
         \centering
         \includegraphics[width=\textwidth]{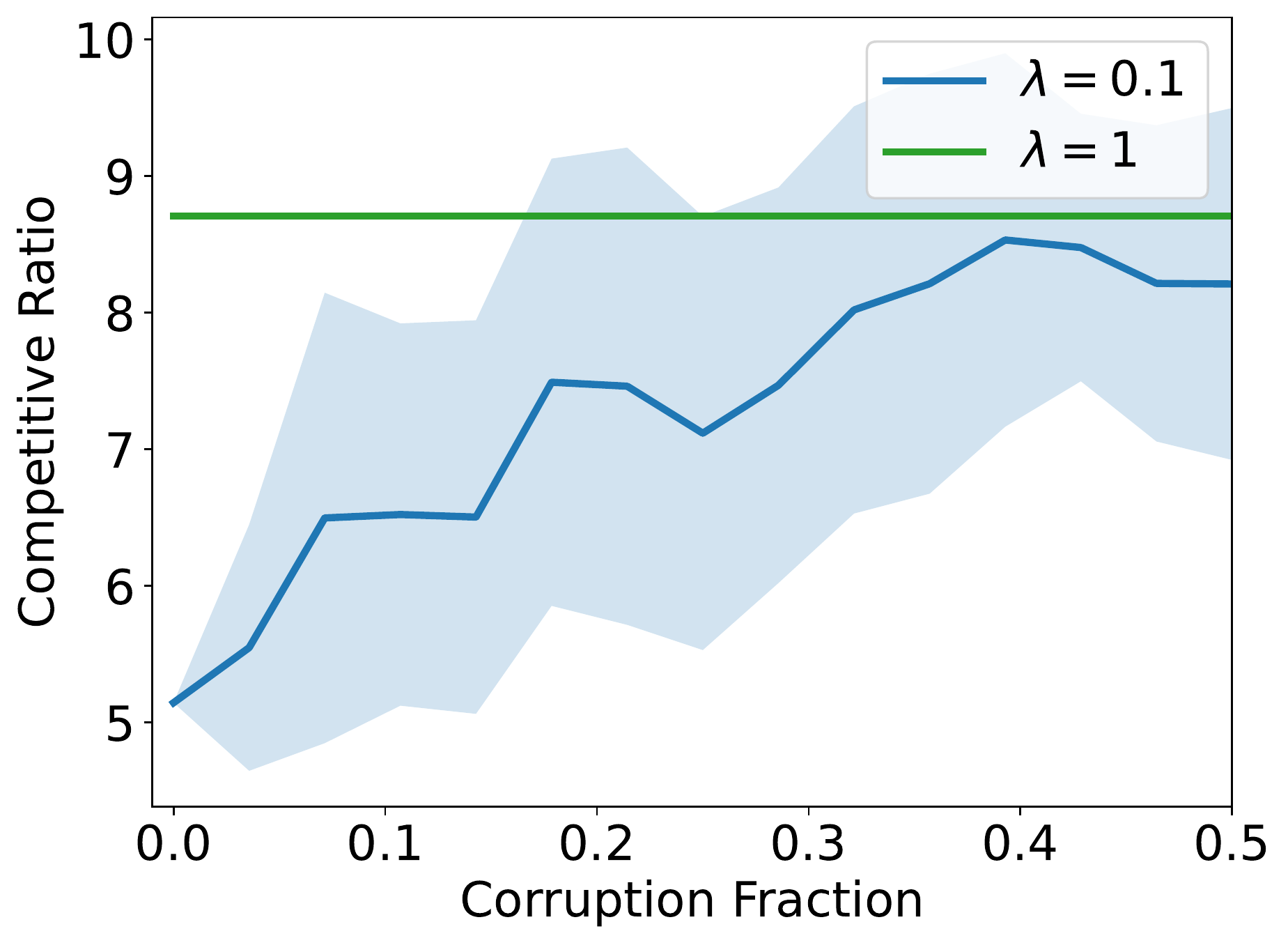}
         \caption{}
         \label{fig:synthetic_b}
     \end{subfigure}
     \hfill
     \begin{subfigure}[b]{0.3\textwidth}
         \centering
         \includegraphics[width=\textwidth]{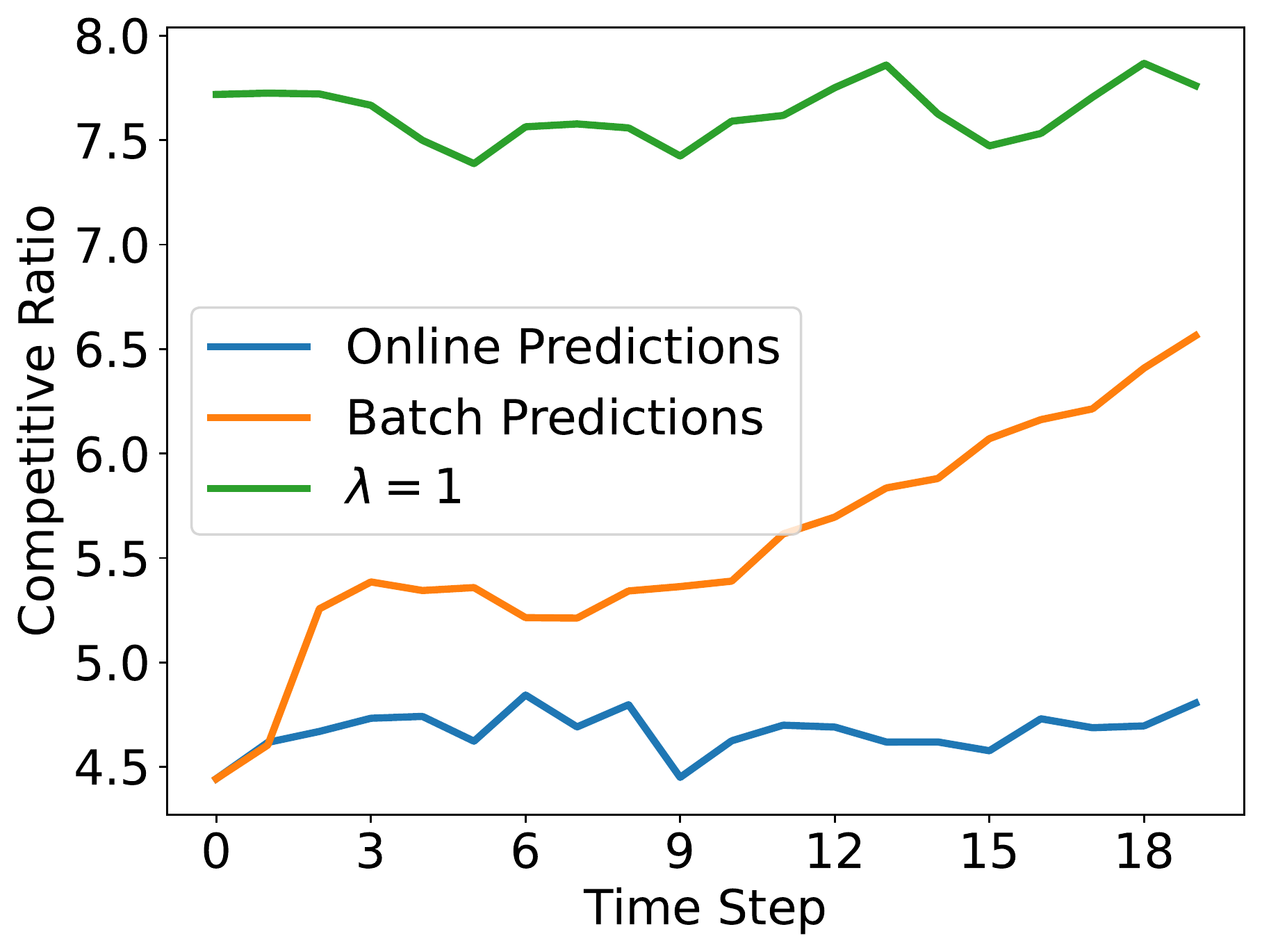}
         \caption{}
         \label{fig:synthetic_c}
     \end{subfigure}
        \caption{Figures for our synthetic dataset.}
        \label{fig:synthetic}
\end{figure}

\begin{figure}
     \centering
     \begin{subfigure}[b]{.45\textwidth}
         \centering
         \includegraphics[width=.7\textwidth]{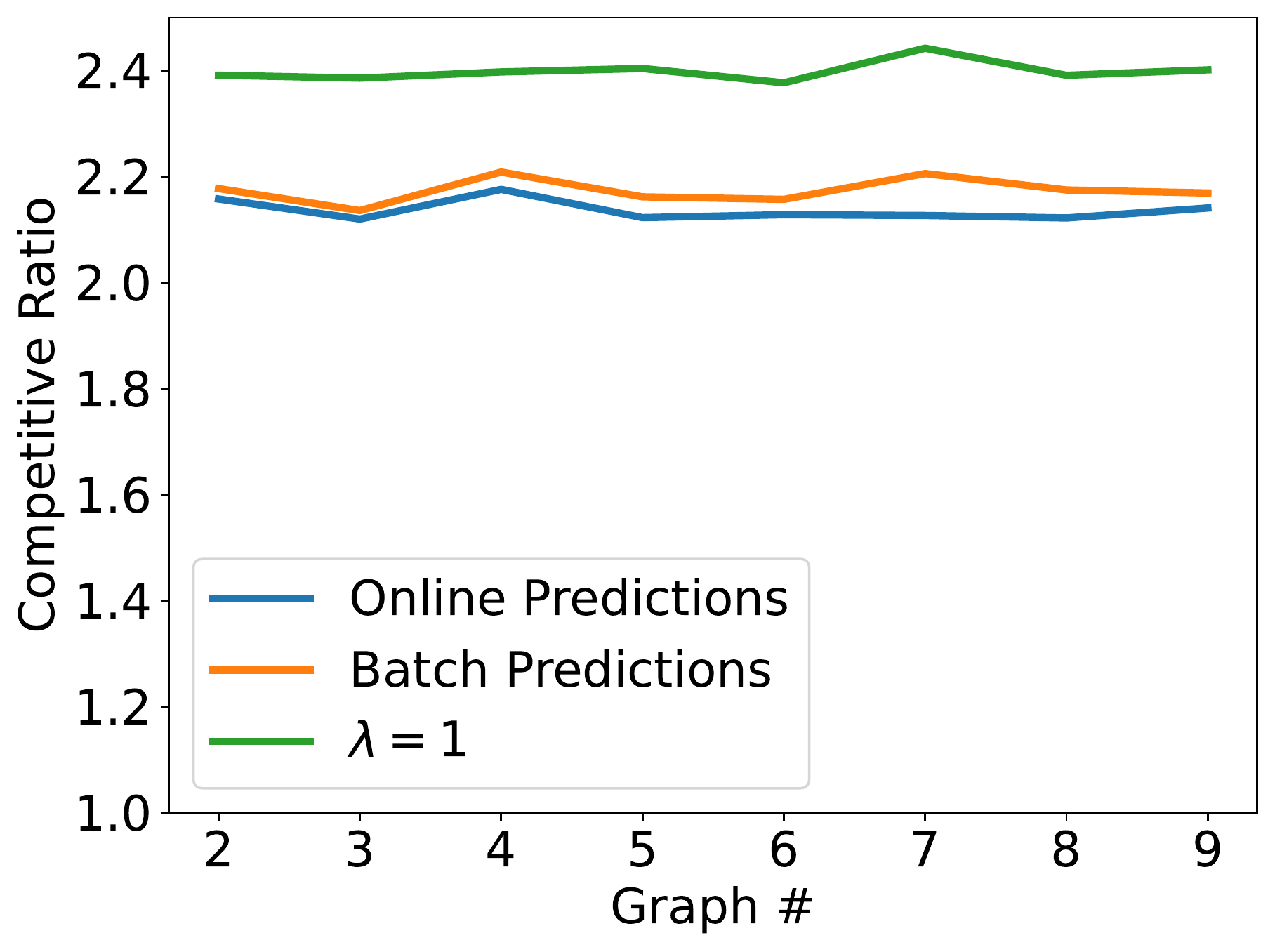}
         \caption{}
         \label{fig:graph_a}
     \end{subfigure}
     \hfill
     \begin{subfigure}[b]{0.45\textwidth}
         \centering
         \includegraphics[width=.7\textwidth]{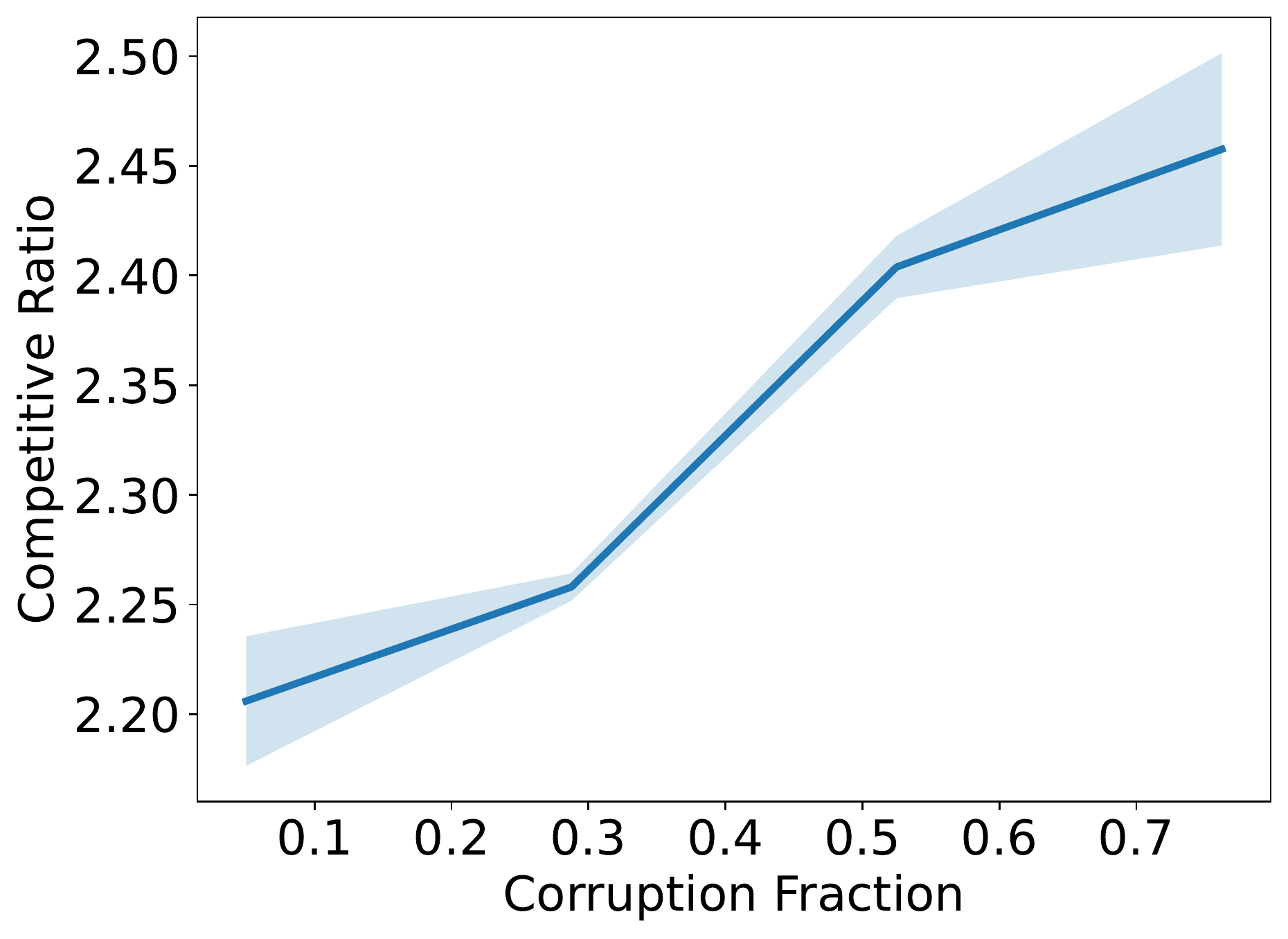}
         \caption{}
         \label{fig:graph_b}
     \end{subfigure}
        \caption{Figures for our time-varying graph dataset.}
        \label{fig:graphs}
\end{figure}

We now describe the experimental results on our graph dataset. In Figure \ref{fig:graph_a}, the green curve represents not using any hints (baseline) while the orange curve shows the competitive ratio as we vary the graph instance while using the hint derived from graph $\#1$ ($\lambda$ is set to $0.1$). It is shown that the hints help outperform the baseline and in addition, the hints stay accurate even if the structure of graph $\#9$ has drifted away from that of graph $\#1$. The online predictions, shown in blue, does marginally better than the batch version. 

Figure \ref{fig:graph_b} shows a similar plot as Figure \ref{fig:synthetic_b}. We consider a replacement rate strategy where we zero out coordinates of the hint vector independently with varying probabilities. The curve shows the average of $5$ trials and $\lambda = 0.1$ again. The same qualitative message as Figure \ref{fig:synthetic_b} holds: while the corruption rate is small, we achieve a similar competitive ratio as in Figure \ref{fig:graph_a} and as the corruption rate increases, there is a smooth increase in the competitive ratio.

In addition to extending and complementing the experimental results of \cite{BamasMS20}, we summarize our experimental results in the following points: 
(a) Our theory is predictive of experimental performance and qualitatively validates our robustness and consistency trade offs.
(b) Our algorithm framework which underlies all of our algorithm contributions is efficient to carry out and execute in practice. 
(c) Learning-augmented online algorithms can be applied to real world datasets varying over time such as in the analysis of graphs derived from a dynamic network.

\section{Conclusion}

We present the first framework for the learning-augmented online covering LP and SDP problem. As shown in table \ref{table:pdla}, for the problems without box constraints, our algorithms are $O(1/(1-\lambda)))$-consistent and $O(\log (\kappa n/\lambda))$-robust; for the problems with box constraints, our algorithms are $O(1/(1-\lambda)))$-consistent and $O(\log (s/\lambda))$-robust. Our algorithms not only support fractional advice but also update the variables efficiently. Our work raises several open questions.

First, for the problem without box constraints, is it possible to remove the dependency on the condition number $\kappa$ in the robustness ratio? The competitive ratio of the optimal online algorithms for covering LPs \cite{BuchbinderN09} and SDPs \cite{EladKN16} is $O(\log n)$. For the learning-augmented set cover problem where the entries in the constraint matrix are zeros and ones, the robustness ratio is $O(\log(d/\lambda))$, where the worst case for $d$ is $d=n$. It seems that the condition number arises due to additive term $D_j$ used for the growth rate. Another direction is to discover the trader-off lower bounds between the robustness and consistency ratio for the online covering and SDP covering problem as in \cite{WeiZ20} for the ski rental and the  non-clairvoyant scheduling problem.

Next, our technique can potentially be used on other online covering problems \cite{azar2016online,shen2020online,gupta2014approximating} for the learning-augmented extension. It would be exciting to see if the primal-dual continuous augmentation approach with well-selected growth rates can be employed for these problems.

As stated in \cite{BamasMS20}, another natural future direction is to design PDLA algorithms for packing LPs. For general online packing LPs, an $O(1/ \log \kappa)$-competitive online solution can be obtained only if a condition number $\kappa$ is known offline \cite{buchbinder2009online}, implying impossibility results without assumptions. The hope here is to study structured packing problems, e.g., load balancing \cite{BuchbinderN09} and ad-auction revenue maximization \cite{buchbinder2007online}.

\section*{Acknowledgment}
We thank Roie Levin for bringing to our attention the significantly simpler algorithm described in Appendix~\ref{app:simple}.
\bibliographystyle{alpha}
\bibliography{reference}

\newcommand{\etalchar}[1]{$^{#1}$}
\begin{thebibliography}{FMMM09}

\bibitem[AAA{\etalchar{+}}06]{alon2006general}
Noga Alon, Baruch Awerbuch, Yossi Azar, Niv Buchbinder, and Joseph Naor.
\newblock A general approach to online network optimization problems.
\newblock {\em ACM Transactions on Algorithms (TALG)}, 2(4):640--660, 2006.

\bibitem[AAA{\etalchar{+}}09]{alon2009online}
Noga Alon, Baruch Awerbuch, Yossi Azar, Niv Buchbinder, and Joseph Naor.
\newblock The online set cover problem.
\newblock {\em SIAM Journal on Computing}, 39(2):361--370, 2009.

\bibitem[AB99]{AnthonyB99}
Martin Anthony and Peter~L. Bartlett.
\newblock {\em Neural Network Learning: Theoretical Foundations}.
\newblock Cambridge University Press, 1999.

\bibitem[ABC{\etalchar{+}}16]{azar2016online}
Yossi Azar, Niv Buchbinder, TH~Hubert Chan, Shahar Chen, Ilan~Reuven Cohen,
  Anupam Gupta, Zhiyi Huang, Ning Kang, Viswanath Nagarajan, Joseph Naor,
  et~al.
\newblock Online algorithms for covering and packing problems with convex
  objectives.
\newblock In {\em 57th Annual Symposium on Foundations of Computer Science
  (FOCS)}, pages 148--157, 2016.

\bibitem[ABFP13]{azar2013online}
Yossi Azar, Umang Bhaskar, Lisa Fleischer, and Debmalya Panigrahi.
\newblock Online mixed packing and covering.
\newblock In {\em Proceedings of the twenty-fourth annual ACM-SIAM symposium on
  Discrete algorithms}, pages 85--100, 2013.

\bibitem[ACE{\etalchar{+}}20]{antoniadis2020online}
Antonios Antoniadis, Christian Coester, Marek Elias, Adam Polak, and Bertrand
  Simon.
\newblock Online metric algorithms with untrusted predictions.
\newblock In {\em International Conference on Machine Learning}, pages
  345--355. PMLR, 2020.

\bibitem[AGKK20]{antoniadis2020secretary}
Antonios Antoniadis, Themis Gouleakis, Pieter Kleer, and Pavel Kolev.
\newblock Secretary and online matching problems with machine learned advice.
\newblock {\em Advances in Neural Information Processing Systems},
  33:7933--7944, 2020.

\bibitem[AGKP22]{AGKP2022multiple}
Keerti Anand, Rong Ge, Amit Kumar, and Debmalya Panigrahi.
\newblock Online algorithms with multiple predictions, 2022.

\bibitem[APT22]{azar2022online}
Yossi Azar, Debmalya Panigrahi, and Noam Touitou.
\newblock Online graph algorithms with predictions.
\newblock In {\em Proceedings of the 2022 Annual ACM-SIAM Symposium on Discrete
  Algorithms (SODA)}, pages 35--66. SIAM, 2022.

\bibitem[AW02]{ahlswede2002strong}
Rudolf Ahlswede and Andreas Winter.
\newblock Strong converse for identification via quantum channels.
\newblock {\em IEEE Transactions on Information Theory}, 48(3):569--579, 2002.

\bibitem[BBN12]{bansal2012primal}
Nikhil Bansal, Niv Buchbinder, and Joseph Naor.
\newblock A primal-dual randomized algorithm for weighted paging.
\newblock {\em Journal of the ACM (JACM)}, 59(4):1--24, 2012.

\bibitem[BJN07]{buchbinder2007online}
Niv Buchbinder, Kamal Jain, and Joseph~Seffi Naor.
\newblock Online primal-dual algorithms for maximizing ad-auctions revenue.
\newblock In {\em European Symposium on Algorithms}, pages 253--264. Springer,
  2007.

\bibitem[BMRS20]{BamasMRS20}
{\'{E}}tienne Bamas, Andreas Maggiori, Lars Rohwedder, and Ola Svensson.
\newblock Learning augmented energy minimization via speed scaling.
\newblock In {\em Advances in Neural Information Processing Systems 33: Annual
  Conference on Neural Information Processing Systems, NeurIPS}, 2020.

\bibitem[BMS20]{BamasMS20}
{\'{E}}tienne Bamas, Andreas Maggiori, and Ola Svensson.
\newblock The primal-dual method for learning augmented algorithms.
\newblock In {\em Advances in Neural Information Processing Systems 33: Annual
  Conference on Neural Information Processing Systems, NeurIPS}, 2020.

\bibitem[BN09a]{BuchbinderN09}
Niv Buchbinder and Joseph Naor.
\newblock The design of competitive online algorithms via a primal-dual
  approach.
\newblock {\em Found. Trends Theor. Comput. Sci.}, 3(2-3):93--263, 2009.

\bibitem[BN09b]{buchbinder2009online}
Niv Buchbinder and Joseph Naor.
\newblock Online primal-dual algorithms for covering and packing.
\newblock {\em Mathematics of Operations Research}, 34(2):270--286, 2009.

\bibitem[CEI{\etalchar{+}}22]{ChenEILNRSWWZ22}
Justin~Y. Chen, Talya Eden, Piotr Indyk, Honghao Lin, Shyam Narayanan, Ronitt
  Rubinfeld, Sandeep Silwal, Tal Wagner, David~P. Woodruff, and Michael Zhang.
\newblock Triangle and four cycle counting with predictions in graph streams.
\newblock In {\em The Tenth International Conference on Learning
  Representations, {ICLR}}, 2022.

\bibitem[CSVZ22]{chen2022}
Justin~Y. Chen, Sandeep Silwal, Ali Vakilian, and Fred Zhang.
\newblock Faster fundamental graph algorithms via learned predictions.
\newblock In {\em International Conference on Machine Learning, {ICML}}, pages
  3583--3602, 2022.

\bibitem[DIL{\etalchar{+}}21]{DinitzILMV21}
Michael Dinitz, Sungjin Im, Thomas Lavastida, Benjamin Moseley, and Sergei
  Vassilvitskii.
\newblock Faster matchings via learned duals.
\newblock In {\em Advances in Neural Information Processing Systems 34: Annual
  Conference on Neural Information Processing Systems, NeurIPS}, pages
  10393--10406, 2021.

\bibitem[DIRW20]{DongIRW20}
Yihe Dong, Piotr Indyk, Ilya~P. Razenshteyn, and Tal Wagner.
\newblock Learning space partitions for nearest neighbor search.
\newblock In {\em 8th International Conference on Learning Representations,
  {ICLR}, 2020}, 2020.

\bibitem[DKT{\etalchar{+}}21]{DiakonikolasKTV21}
Ilias Diakonikolas, Vasilis Kontonis, Christos Tzamos, Ali Vakilian, and Nikos
  Zarifis.
\newblock Learning online algorithms with distributional advice.
\newblock In {\em Proceedings of the 38th International Conference on Machine
  Learning, {ICML}}, pages 2687--2696, 2021.

\bibitem[EFS{\etalchar{+}}22]{ErgunFSWZ22}
Jon~C. Ergun, Zhili Feng, Sandeep Silwal, David~P. Woodruff, and Samson Zhou.
\newblock Learning-augmented $k$-means clustering.
\newblock In {\em The Tenth International Conference on Learning
  Representations, {ICLR}}, 2022.

\bibitem[EIN{\etalchar{+}}21]{EdenINRSW21}
Talya Eden, Piotr Indyk, Shyam Narayanan, Ronitt Rubinfeld, Sandeep Silwal, and
  Tal Wagner.
\newblock Learning-based support estimation in sublinear time.
\newblock In {\em 9th International Conference on Learning Representations,
  {ICLR}}, 2021.

\bibitem[EKN16]{EladKN16}
Noa Elad, Satyen Kale, and Joseph~(Seffi) Naor.
\newblock Online semidefinite programming.
\newblock In {\em 43rd International Colloquium on Automata, Languages, and
  Programming, {ICALP}}, pages 40:1--40:13, 2016.

\bibitem[FMMM09]{feldman2009online}
Jon Feldman, Aranyak Mehta, Vahab Mirrokni, and Shan Muthukrishnan.
\newblock Online stochastic matching: Beating 1-1/e.
\newblock In {\em 2009 50th Annual IEEE Symposium on Foundations of Computer
  Science}, pages 117--126. IEEE, 2009.

\bibitem[GKR00]{garg2000polylogarithmic}
Naveen Garg, Goran Konjevod, and Ramamoorthi Ravi.
\newblock A polylogarithmic approximation algorithm for the group steiner tree
  problem.
\newblock {\em Journal of Algorithms}, 37(1):66--84, 2000.

\bibitem[GL20]{gupta2020online}
Anupam Gupta and Roie Levin.
\newblock The online submodular cover problem.
\newblock In {\em Proceedings of the Fourteenth Annual ACM-SIAM Symposium on
  Discrete Algorithms}, pages 1525--1537. SIAM, 2020.

\bibitem[GLQ21]{grigorescu2021online}
Elena Grigorescu, Young-San Lin, and Kent Quanrud.
\newblock Online directed spanners and steiner forests.
\newblock In {\em Approximation, Randomization, and Combinatorial Optimization.
  Algorithms and Techniques (APPROX/RANDOM 2021)}, 2021.

\bibitem[GN14]{gupta2014approximating}
Anupam Gupta and Viswanath Nagarajan.
\newblock Approximating sparse covering integer programs online.
\newblock {\em Mathematics of Operations Research}, 39(4):998--1011, 2014.

\bibitem[GP19]{GollapudiP19}
Sreenivas Gollapudi and Debmalya Panigrahi.
\newblock Online algorithms for rent-or-buy with expert advice.
\newblock In {\em Proceedings of the 36th International Conference on Machine
  Learning, {ICML}}, pages 2319--2327, 2019.

\bibitem[GW95]{GoemansW95}
Michel~X. Goemans and David~P. Williamson.
\newblock Improved approximation algorithms for maximum cut and satisfiability
  problems using semidefinite programming.
\newblock {\em J. {ACM}}, 42(6):1115--1145, 1995.

\bibitem[HIKV19]{HsuIKV19}
Chen{-}Yu Hsu, Piotr Indyk, Dina Katabi, and Ali Vakilian.
\newblock Learning-based frequency estimation algorithms.
\newblock In {\em 7th International Conference on Learning Representations,
  {ICLR}}, 2019.

\bibitem[ISZ21]{IzzoSZ21}
Zachary Izzo, Sandeep Silwal, and Samson Zhou.
\newblock Dimensionality reduction for wasserstein barycenter.
\newblock In {\em Advances in Neural Information Processing Systems 34: Annual
  Conference on Neural Information Processing Systems 2021, NeurIPS}, pages
  15582--15594, 2021.

\bibitem[JLL{\etalchar{+}}20]{JiangLLRW20}
Tanqiu Jiang, Yi~Li, Honghao Lin, Yisong Ruan, and David~P. Woodruff.
\newblock Learning-augmented data stream algorithms.
\newblock In {\em 8th International Conference on Learning Representations,
  {ICLR}}, 2020.

\bibitem[KDZ{\etalchar{+}}17]{khalil2017learning}
Elias Khalil, Hanjun Dai, Yuyu Zhang, Bistra Dilkina, and Le~Song.
\newblock Learning combinatorial optimization algorithms over graphs.
\newblock {\em Advances in neural information processing systems}, 30, 2017.

\bibitem[KMMO94]{kmmo}
Anna~R. Karlin, Mark~S. Manasse, Lyle~A. McGeoch, and Susan~S. Owicki.
\newblock Competitive randomized algorithms for non-uniform problems.
\newblock {\em Algorithmica}, 11(6):542--571, 1994.

\bibitem[LFKF18]{Feldman18}
Mario Lucic, Matthew Faulkner, Andreas Krause, and Dan Feldman.
\newblock Training gaussian mixture models at scale via coresets.
\newblock {\em Journal of Machine Learning Research}, 18(160):1--25, 2018.

\bibitem[LK14]{snapnets}
Jure Leskovec and Andrej Krevl.
\newblock {SNAP Datasets}: {Stanford} large network dataset collection.
\newblock \url{http://snap.stanford.edu/data}, June 2014.

\bibitem[LKF05]{oregon1}
Jure Leskovec, Jon~M. Kleinberg, and Christos Faloutsos.
\newblock Graphs over time: densification laws, shrinking diameters and
  possible explanations.
\newblock In {\em Proceedings of the Eleventh {ACM} {SIGKDD} International
  Conference on Knowledge Discovery and Data Mining}, pages 177--187, 2005.

\bibitem[LLMV20]{LattanziLMV20}
Silvio Lattanzi, Thomas Lavastida, Benjamin Moseley, and Sergei Vassilvitskii.
\newblock Online scheduling via learned weights.
\newblock In {\em Proceedings of the 2020 {ACM-SIAM} Symposium on Discrete
  Algorithms, {SODA}}, pages 1859--1877, 2020.

\bibitem[LV18]{LykourisV18}
Thodoris Lykouris and Sergei Vassilvitskii.
\newblock Competitive caching with machine learned advice.
\newblock In {\em Proceedings of the 35th International Conference on Machine
  Learning, {ICML}}, pages 3302--3311, 2018.

\bibitem[Mit18]{Mitzenmacher18}
Michael Mitzenmacher.
\newblock A model for learned bloom filters and optimizing by sandwiching.
\newblock In {\em Advances in Neural Information Processing Systems 31: Annual
  Conference on Neural Information Processing Systems, NeurIPS}, pages
  462--471, 2018.

\bibitem[MNS12]{mahdian2012online}
Mohammad Mahdian, Hamid Nazerzadeh, and Amin Saberi.
\newblock Online optimization with uncertain information.
\newblock {\em ACM Transactions on Algorithms (TALG)}, 8(1):1--29, 2012.

\bibitem[ND21]{thang2021packing}
Kim~Thang Nguyen and Christoph D{\"{u}}rr.
\newblock Online primal-dual algorithms with predictions for packing problems.
\newblock {\em CoRR}, abs/2110.00391, 2021.

\bibitem[PSK18]{PurohitSK18}
Manish Purohit, Zoya Svitkina, and Ravi Kumar.
\newblock Improving online algorithms via {ML} predictions.
\newblock In {\em Advances in Neural Information Processing Systems 31: Annual
  Conference on Neural Information Processing Systems, NeurIPS}, pages
  9684--9693, 2018.

\bibitem[Roh20]{Rohatgi20}
Dhruv Rohatgi.
\newblock Near-optimal bounds for online caching with machine learned advice.
\newblock In {\em Proceedings of the 2020 {ACM-SIAM} Symposium on Discrete
  Algorithms, {SODA}}, pages 1834--1845, 2020.

\bibitem[SN20]{shen2020online}
Xiangkun Shen and Viswanath Nagarajan.
\newblock Online covering with $l_q$-norm objectives and applications to
  network design.
\newblock {\em Mathematical Programming}, 184, 2020.

\bibitem[SZS{\etalchar{+}}14]{SzegedyZSBEGF13}
Christian Szegedy, Wojciech Zaremba, Ilya Sutskever, Joan Bruna, Dumitru Erhan,
  Ian~J. Goodfellow, and Rob Fergus.
\newblock Intriguing properties of neural networks.
\newblock In {\em 2nd International Conference on Learning Representations,
  {ICLR}, Conference Track Proceedings}, 2014.

\bibitem[VB96]{VandenbergheB96}
Lieven Vandenberghe and Stephen~P. Boyd.
\newblock Semidefinite programming.
\newblock {\em {SIAM} Rev.}, 38(1):49--95, 1996.

\bibitem[WX06]{wigderson2006derandomizing}
Avi Wigderson and David Xiao.
\newblock Derandomizing the aw matrix-valued chernoff bound using pessimistic
  estimators and applications.
\newblock {\em ECCC TR06-105}, 2006.

\bibitem[WZ20]{WeiZ20}
Alexander Wei and Fred Zhang.
\newblock Optimal robustness-consistency trade-offs for learning-augmented
  online algorithms.
\newblock In {\em Advances in Neural Information Processing Systems 33: Annual
  Conference on Neural Information Processing Systems 2020, NeurIPS}, 2020.

\end{thebibliography}

\appendix
\section{Learnability of Predictor} \label{sec:appendix}
In this section, we present formal learning bounds for learning a good predictor for the covering formulation of online semidefinite programming, which includes learning a good predictor for online fractional linear programming. 
Specifically, we prove that a good predictor can be efficiently learned if the problem instances are drawn from a particular distribution, in the style of the probably approximately correct (PAC) learning framework for data-driven algorithm design. 
Similar ideas were used by \cite{ErgunFSWZ22,IzzoSZ21,ChenEILNRSWWZ22, chen2022} to show efficient learners for predictors for $k$-means clustering and triangle counting. Thus while the results of this section use ideas of prior papers, they are still meaningful as it provides an ``end-to-end" recipe for designing learning-augmented online algorithms. The prior part of this paper designs online algorithms which leverage predictions and thus, the natural follow up questions are how efficiently one can learn these hint from data. Therefore this learnability section precisely addresses these questions and makes our results complete.

Suppose there exists an underlying distribution $\mathcal{D}$ that generates independent covering semidefinite programming or covering linear program instances, capturing the case where similar problem instances are being solved. 
Note that this setting also mirrors some of our experiments, specifically in the case of our 
online covering datasets which are derived from an underlying dataset (an internet router graph), sampled across time. We would like to efficiently learn a good predictor $f$ from a family of functions $\mathcal{F}$, 
where the input to $f$ is an covering SDP $\mathcal{S}$ and the output is a weight vector. 
Each covering SDP instance can be written in terms of $d:=O(nm^2)$ variables, so we assume that each input instance $\mathcal{S}$ is encoded as a vector in $\mathbb{R}^{d}$. 
The output of $f$ is then an $n$-dimensional vector, so that each coordinate $i\in[n]$ represents the associated weight of the matrix $A_i$ selected in the covering SDP. 

To quantify the quality of each function $f$, we consider a loss function $L: f \times\mathcal{S}\rightarrow \mathbb{R}$ that outputs the cost of corresponding SDP cover. 
For example, $f$ can output a set of matrix weights given an instance $\mathcal{S}$. 
If the matrix weights are not feasible, then the loss function may output $\infty$. 
Otherwise if the matrix weights are feasible, then the loss function may output the corresponding cost, defined by the cost vector that is also given in $\mathcal{S}$. 

We would like to learn the best function $f \in \mathcal{F}$ that minimizes the objective:
\begin{equation}\label{eq:best:f}
\mathbb{E}_{\mathcal{S} \sim\mathcal{D}}[L(f,\mathcal{S})].
\end{equation}
Let $f^*$ denote the minimizer of the above objective, i.e., the optimal function in $\mathcal{F}$, so that 
\[f^*=\argmin \mathbb{E}_{\mathcal{S}\sim\mathcal{D}}[L(f,\mathcal{S})].\] 
We assume that for each SDP instance $\mathcal{S}$ and each $f\in\mathcal{F}$ that $f(\mathcal{S})$ and $L(f,\mathcal{S})$ can be computed in time that is a (small) polynomial in $n$ and $d$, which we denote by $T(n,d)$. 

We show that there exists an efficient algorithm that outputs a function $\hat{f}\in\mathcal{F}$ whose loss is within an additive $\eps$ of the loss of the optimal function $f^*$. 
\begin{theorem}\label{thm:learning}
There exists an algorithm that uses $\poly\left(T(n,d),\frac{1}{\eps}\right)$ samples and outputs a function $\hat{f}$ that satisfies with probability at least $\frac{9}{10}$,
\[ \mathbb{E}_{\mathcal{S}\sim\mathcal{D}}[L(\hat{f},\mathcal{S})] \le  \mathbb{E}_{\mathcal{S}\sim\mathcal{D}}[L(f^*,\mathcal{S})] + \eps,\]
where $f^*=\argmin \mathbb{E}_{\mathcal{S}\sim\mathcal{D}}[L(f,\mathcal{S})]$. 
\end{theorem}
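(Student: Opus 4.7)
The plan is to invoke the standard empirical risk minimization (ERM) framework from learning theory. First, draw $N$ i.i.d.\ samples $\mathcal{S}_1,\ldots,\mathcal{S}_N$ from $\mathcal{D}$ for a suitable $N=\poly(T(n,d),1/\eps)$ to be chosen later, compute the empirical loss
\[
\hat{L}(f) \;=\; \frac{1}{N}\sum_{i=1}^N L(f,\mathcal{S}_i),
\]
and return $\hat{f} = \argmin_{f \in \mathcal{F}} \hat{L}(f)$. The whole argument reduces to proving a uniform convergence statement of the form
\[
\Pr\!\left[\;\sup_{f \in \mathcal{F}} \bigl|\hat{L}(f) - \mathbb{E}_{\mathcal{S}\sim\mathcal{D}}[L(f,\mathcal{S})]\bigr| \;\le\; \tfrac{\eps}{2}\;\right] \;\ge\; \tfrac{9}{10},
\]
since on this event a standard three-line argument gives $\mathbb{E}[L(\hat{f},\mathcal{S})] \le \hat{L}(\hat{f}) + \eps/2 \le \hat{L}(f^*) + \eps/2 \le \mathbb{E}[L(f^*,\mathcal{S})] + \eps$.

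To get uniform convergence, I would bound the pseudo-dimension of the real-valued function class $\mathcal{G} = \{\mathcal{S} \mapsto L(f,\mathcal{S}) : f \in \mathcal{F}\}$. The hypothesis is that each $L(f,\mathcal{S})$ is computable in time $T(n,d)$, so by the Goldberg--Jerrum theorem (and its extensions, as used previously by \cite{ErgunFSWZ22,IzzoSZ21,ChenEILNRSWWZ22,chen2022} in essentially the same context), the pseudo-dimension of $\mathcal{G}$ is at most $\poly(T(n,d))$. Combined with the standard pseudo-dimension uniform-convergence bound for bounded real-valued functions, this yields that
\[
N \;=\; O\!\left(\frac{\pdim(\mathcal{G}) + \log(1/\delta)}{\eps^2}\right) \;=\; \poly\!\left(T(n,d),\tfrac{1}{\eps}\right)
\]
samples suffice for the deviation $\eps/2$ with failure probability $\delta = 1/10$.

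The main obstacle is the fact that $L(f,\mathcal{S})$ can be $\infty$ on infeasible outputs, which would break the bounded-range hypothesis required for Hoeffding-type concentration. I would handle this by restricting attention (without loss of generality) to predictors $f \in \mathcal{F}$ whose output can be post-processed through the online algorithm of Section~\ref{sec:sdp}, so that its cost is always bounded by some value $M$ that is polynomial in problem parameters (e.g., the cost of a trivial feasible cover given offline). One then rescales $L$ into $[0,M]$, which introduces only a multiplicative $M$ factor in the sample complexity, still $\poly(T(n,d),1/\eps)$.

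Finally, the algorithm is efficient: computing $\hat{L}(f)$ for any fixed $f$ takes time $N \cdot T(n,d)$, and minimizing over $\mathcal{F}$ is done by the assumed polynomial-time access to $\mathcal{F}$ (either by enumeration over a discretization, or by assuming the corresponding ERM subroutine is given; this matches the setup of the cited prior works). Putting these pieces together proves the theorem.
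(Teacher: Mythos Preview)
Your proposal is correct and follows essentially the same route as the paper: bound the pseudo-dimension of $\mathcal{G}=L\circ\mathcal{F}$ via the computation-to-VC-dimension reduction (the paper cites Theorem~8.14 of Anthony--Bartlett, which is the Goldberg--Jerrum-type result you invoke), then apply the standard uniform-convergence bound to conclude ERM works with $\poly(T(n,d),1/\eps)$ samples. The paper handles the boundedness of $L$ by simply declaring a normalization to $[0,1]$ rather than your more explicit post-processing argument, and it does not discuss computational efficiency of the ERM step, but otherwise the arguments coincide.
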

Theorem~\ref{thm:learning} shows that only a small number of samples are needed to have a good probability of learning an approximately-optimal function $\hat{f}$, in a typical PAC-style bound. 
Specifically, the algorithm to compute $\hat{f}$ is simply the empirical risk minimizer, i.e., the algorithm minimizes the empirical loss after an appropriate number of samples are drawn. 
To prove correctness of Theorem~\ref{thm:learning}, we first  define the notion of pseudo-dimension for a function class, which generalizes the more familiar VC dimension to real functions, e.g., Definition $9$ in \cite{Feldman18}. 
\begin{definition}[Pseudo-Dimension]
Let $\mathcal{X}$ be a ground  set  and $\mathcal{F}$ be a set of functions from $\mathcal{X}$ to the interval $[0,1]$. 
Let $U=\{x_1,\cdots ,x_n\} \subset \mathcal{X}$ be a fixed set, $R=\{r_1, \cdots, r_n\}$ be a fixed set of real numbers with $r_i \in [0, 1]$ for all $i\in[n]$, and $f \in \mathcal{F}$ be a fixed function. 
The set $U_f= \{x_i \in U \mid f(x_i) \ge r_i\}$ is called the induced subset of $U$ formed by $f$ and $R$. 
The set $U$ with associated values $R$ is shattered by $\mathcal{F}$ if $|\{U_f \mid f \in \mathcal{F} \}|= 2^n$. 
The \emph{pseudo-dimension} of $\mathcal{F}$ is the cardinality of the largest shattered subset of $\mathcal{X}$ (or $\infty$).
\end{definition}
Let $\mathcal{G}$ be the class of functions in $\mathcal{F}$ composed with $L$, so that 
\[\mathcal{G} := \{L \circ f : f \in \mathcal{F} \}.  \]
Without loss of generality, we normalize the range of the loss function $L$ to $[0,1]$. 
Then the performance of the empirical risk minimization and the number of necessary samples can be related to the pseudo-dimension using standard bounds as follows:
\begin{theorem}
\label{thm:uni:dim}
\cite{AnthonyB99}
Let $\mathcal{D}$ be a distribution over problem instances $\mathcal{S}$ and $\mathcal{G}$ be a class of functions $g: \mathcal{S} \rightarrow[0, 1]$ with pseudo-dimension $d_{\mathcal{G}}$. 
Consider $t$ i.i.d.\ samples $\mathcal{S}_{1}, \mathcal{S}_{2}, \ldots, \mathcal{S}_{t}$ from $\mathcal{D}$.  
There exists a universal constant $C$, such that for any $\eps>0$, if $t \geq\frac{C \cdot d_{\mathcal{G}}}{\eps^2}$, then with probability at least $\frac{9}{10}$, 
\[\left|\frac{1}{t} \sum_{i=1}^{t} g\left(\mathcal{S}_{i}\right)-\mathbb{E}_{\mathcal{S}\sim \mathcal{D}} \, g(\mathcal{S})\right| \leq \eps\]
for all $g \in \mathcal{G}$. 
\end{theorem}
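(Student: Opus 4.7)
The plan is to establish uniform convergence of empirical means to population means over $\mathcal{G}$ by a classical symmetrization and covering number argument. First I would reduce the claim to controlling the one-sided supremum $\Pr\!\left[\sup_{g \in \mathcal{G}} \bigl|\tfrac{1}{t}\sum_{i=1}^t g(\mathcal{S}_i) - \mathbb{E}_{\mathcal{S}\sim\mathcal{D}}\,g(\mathcal{S})\bigr| > \eps\right]$, and then apply a ghost-sample symmetrization: drawing an independent copy $\mathcal{S}'_1, \ldots, \mathcal{S}'_t \sim \mathcal{D}$, one shows via Chebyshev (or a direct comparison once $t \gtrsim 1/\eps^2$) that this probability is at most twice the probability that $\sup_{g} \bigl|\tfrac{1}{t}\sum_i (g(\mathcal{S}_i) - g(\mathcal{S}'_i))\bigr| > \eps/2$. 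Introducing i.i.d.\ Rademacher signs $\sigma_i \in \{\pm 1\}$ then lets us condition on the joint sample $Z = (\mathcal{S}_1, \ldots, \mathcal{S}_t, \mathcal{S}'_1, \ldots, \mathcal{S}'_t)$ and work with a finite-dimensional projection of $\mathcal{G}$.

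The combinatorial heart of the argument is to bound how many ``essentially distinct'' functions there are in $\mathcal{G}$ when restricted to the $2t$ sample points. Since the functions in $\mathcal{G}$ take values in $[0,1]$, I would discretize the range at scale $\alpha = \eps/8$ and count thresholded behaviors, i.e., the number of distinct vectors $(\mathbf{1}[g(\mathcal{S}_i) \ge r_i])_{i=1}^{2t}$ induced by $g \in \mathcal{G}$ over all grids $(r_i)$. By the definition of pseudo-dimension, the class $\{(x,r) \mapsto \mathbf{1}[g(x) \ge r] : g \in \mathcal{G}\}$ has VC dimension exactly $d_{\mathcal{G}}$, so by the Sauer–Shelah lemma this count is at most $\bigl(\tfrac{2et}{d_{\mathcal{G}}}\bigr)^{d_{\mathcal{G}}}$ for each grid, and the covering number of $\mathcal{G}|_Z$ in $\ell_\infty$ at scale $\alpha$ is polynomial in $t$ with exponent $d_{\mathcal{G}}$.

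With a finite $\alpha$-cover $\mathcal{G}_\alpha$ of size $N = \poly(t)^{d_{\mathcal{G}}}$ in hand, I would apply Hoeffding's inequality to each $g \in \mathcal{G}_\alpha$ on the bounded differences $\sigma_i(g(\mathcal{S}_i) - g(\mathcal{S}'_i)) \in [-1,1]$, obtaining $\Pr[\,\cdot\, > \eps/4] \le 2\exp(-t\eps^2/32)$ per function, and then take a union bound over $\mathcal{G}_\alpha$. The cover error contributes at most $2\alpha \le \eps/4$, so everything stays within the target $\eps/2$. Solving $N \cdot \exp(-\Omega(t\eps^2)) \le 1/10$ yields $t \ge C d_{\mathcal{G}}/\eps^2$ for a universal constant $C$, after absorbing the logarithmic factor $\log(1/\eps)$ into $C$ using that $d_{\mathcal{G}} \log(t/d_{\mathcal{G}}) = O(t\eps^2)$ in this regime.

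The main obstacle is the passage from the real-valued loss class to a combinatorial object of manageable size: a naive growth-function bound (as in the Boolean VC case) does not directly apply, and one must invoke the pseudo-dimension precisely to control the $\ell_\infty$ covering number at a scale small enough to not corrupt the $\eps$ bound, yet large enough that the covering number is only polynomial in $t$. Balancing $\alpha$ against the Hoeffding tail and the resulting union bound is the delicate step; once this is done, the rest follows from standard concentration.
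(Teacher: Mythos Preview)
The paper does not supply its own proof of this statement: it is quoted from \cite{AnthonyB99} as a classical uniform-convergence bound and used as a black box. So there is no ``paper's proof'' to compare against; your outline is simply a sketch of the standard textbook argument.

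Your approach is essentially the classical one (symmetrization, reduction to the subgraph class via pseudo-dimension, Sauer--Shelah, Hoeffding plus union bound over a cover), and it is correct in spirit. One point deserves more care: a single-scale $\ell_\infty$ cover at level $\alpha \asymp \eps$ combined with a union bound yields a sample requirement of order $d_{\mathcal{G}}\eps^{-2}\log(1/\eps)$, not $d_{\mathcal{G}}\eps^{-2}$. Your final sentence tries to absorb the $\log(1/\eps)$ into the constant via ``$d_{\mathcal{G}}\log(t/d_{\mathcal{G}}) = O(t\eps^2)$ in this regime,'' but that is circular: you are assuming the conclusion about $t$ to simplify the bound that determines $t$. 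To obtain the clean $C d_{\mathcal{G}}/\eps^2$ rate as stated, one typically either invokes a chaining argument (Dudley's entropy integral together with Haussler's bound on $\ell_2$ covering numbers in terms of pseudo-dimension) or appeals directly to the fat-shattering/pseudo-dimension Rademacher bound, both of which are in \cite{AnthonyB99}. If you are content with the extra logarithmic factor, your argument goes through as written.
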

From the triangle inequality, we then have the following corollary. 
\begin{corollary}\label{cor:uni:dim}
Let $t \geq\frac{C \cdot d_{\mathcal{G}}}{\eps^2}$, where $C$ is the universal constant from Theorem \ref{thm:uni:dim}. 
Consider a set of $t$ independent samples $\mathcal{S}_1, \ldots,\mathcal{S}_t$ from $\mathcal{D}$ and let $\hat{g}$ be a function in $\mathcal{G}$ that minimizes $\frac{1}t \sum_{i=1}^t g(\mathcal{S}_i)$. 
Then with probability at least $\frac{9}{10}$,
\[\E_{\mathcal{S}\sim\mathcal{D}} [\hat{g}(\mathcal{S})] \le \E_{\mathcal{S}\sim\mathcal{D}}[g^*(\mathcal{S})] + 2 \eps.\]
\end{corollary}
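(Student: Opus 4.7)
The plan is to prove Corollary~\ref{cor:uni:dim} as an immediate consequence of the uniform convergence guarantee in Theorem~\ref{thm:uni:dim}, through the standard empirical risk minimization triangle-inequality argument that the preceding sentence of the excerpt alludes to. The entire proof will be conditioned on the ``good event'' produced by that theorem.

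First, I would apply Theorem~\ref{thm:uni:dim} with the given sample size $t \geq C \cdot d_{\mathcal{G}}/\eps^2$. This produces an event $E$, occurring with probability at least $9/10$ over the draws $\mathcal{S}_1,\ldots,\mathcal{S}_t \sim \mathcal{D}$, on which
\[\left|\frac{1}{t}\sum_{i=1}^{t} g(\mathcal{S}_i) - \E_{\mathcal{S}\sim\mathcal{D}}[g(\mathcal{S})]\right| \le \eps \quad \text{for every } g\in\mathcal{G}.\]
All subsequent steps will be carried out on the event $E$, which is where the probability $9/10$ in the conclusion comes from.

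Second, I would instantiate this bound at the two specific functions of interest, $\hat{g}$ and $g^*$, and chain the resulting inequalities together with the defining property of the empirical minimizer $\hat{g}$, namely $\tfrac{1}{t}\sum_i \hat{g}(\mathcal{S}_i) \le \tfrac{1}{t}\sum_i g^*(\mathcal{S}_i)$:
\[\E[\hat{g}(\mathcal{S})] \;\le\; \tfrac{1}{t}\sum_{i=1}^t \hat{g}(\mathcal{S}_i) + \eps \;\le\; \tfrac{1}{t}\sum_{i=1}^t g^*(\mathcal{S}_i) + \eps \;\le\; \E[g^*(\mathcal{S})] + 2\eps,\]
which is exactly the claimed bound. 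The first inequality uses the lower-tail side of uniform convergence applied to $\hat{g}$, the middle step is the definition of the empirical risk minimizer, and the last inequality uses the upper-tail side of uniform convergence applied to $g^*$.

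There is essentially no technical obstacle here; the statement is a textbook PAC-style corollary. The only subtlety worth flagging is conceptual rather than technical: because $\hat{g}$ is itself a random function of the sample, one cannot justify the first inequality by a per-function concentration bound, and the \emph{uniformity} of Theorem~\ref{thm:uni:dim} over all $g \in \mathcal{G}$ is crucial — this is precisely why the pseudo-dimension appears in the sample complexity rather than a simpler union bound over $|\mathcal{F}|$. No normalization beyond the assumed $[0,1]$ range of $L$ is needed, and the proof is identical in the LP and SDP settings since the only abstract objects it uses are $\mathcal{G}$, $\mathcal{D}$, and $d_{\mathcal{G}}$.
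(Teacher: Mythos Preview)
Your proof is correct and is exactly the ``triangle inequality'' argument the paper invokes in the sentence preceding the corollary; the paper does not spell out any further details beyond that phrase. Your two-step instantiation of the uniform-convergence bound at $\hat{g}$ and $g^*$, bridged by the empirical-minimizer property, is precisely what is intended.
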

Hence, the main question is whether we can bound the pseudo-dimension of the given function class $\mathcal{G}$. 
To that end, we first observe that the pseudo-dimension can be related to the VC dimension of a related class of threshold functions. 
In particular, this relationship has been instrumental in showing a number of existing learning bounds, e.g., \cite{Feldman18,IzzoSZ21,ErgunFSWZ22,ChenEILNRSWWZ22, chen2022}.
\begin{lemma}[Pseudo-dimension to VC dimension, Lemma $10$ in \cite{Feldman18}]\label{lem:PD:to:VC}
For any $g \in \mathcal{G}$, let $B_g(x,y)  =  \text{sgn}(g(x)-y)$, i.e., the indicator function of the region on or below the graph of $g$. 
Then the pseudo-dimension of $\mathcal{G}$ is equivalent to the VC dimension of the subgraph class $B_{\mathcal{G}}=\{B_g \mid g \in \mathcal{G}\}$.
\end{lemma}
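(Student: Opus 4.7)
The plan is to establish the equality by directly translating the two combinatorial definitions. Both the pseudo-dimension of $\mathcal{G}$ and the VC dimension of the subgraph class $B_{\mathcal{G}}$ ask the same underlying question: how many (input, threshold) pairs can the functions in $\mathcal{G}$ simultaneously realize every above/below pattern on? Once this is made precise, the lemma reduces to two symmetric, nearly tautological inclusions.

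For the direction $\operatorname{Pdim}(\mathcal{G}) \le \operatorname{VCdim}(B_{\mathcal{G}})$, I would take any $U = \{x_1,\ldots,x_n\}$ with associated values $R = \{r_1,\ldots,r_n\}$ that is shattered by $\mathcal{G}$. By the definition of shattering, for each subset $S \subseteq [n]$ there exists $g_S \in \mathcal{G}$ with $g_S(x_i) \ge r_i$ iff $i \in S$. I would then consider the lifted set $T = \{(x_i, r_i)\}_{i=1}^n \subset \mathcal{X} \times [0,1]$. For the same $S$, the subgraph indicator satisfies $B_{g_S}(x_i, r_i) = \operatorname{sgn}(g_S(x_i) - r_i) = 1$ iff $i \in S$, which exhibits $T$ as a shattered set of size $n$ for $B_{\mathcal{G}}$.

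For the reverse inclusion, I would start from a set $T = \{(x_i, y_i)\}_{i=1}^n$ shattered by $B_{\mathcal{G}}$. By definition of VC shattering, for each $S \subseteq [n]$ there is $g_S \in \mathcal{G}$ with $g_S(x_i) \ge y_i$ iff $i \in S$. Setting $U := \{x_1,\ldots,x_n\}$ with values $R := \{y_1,\ldots,y_n\}$ then produces a set witnessing $\operatorname{Pdim}(\mathcal{G}) \ge n$ in the sense of the paper's definition. Combining the two inequalities yields the claimed equality.

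The main subtlety, and the one place I would take care, is that the paper's definition of pseudo-dimension demands $U$ to be an honest set of distinct points, whereas VC-shattered witnesses are usually indexed families that could \emph{a priori} repeat in the first coordinate. I would resolve this by observing that any shattered $T$ automatically has distinct $x$-coordinates: if $x_i = x_j$ with, say, $y_i < y_j$, then for every $g \in \mathcal{G}$ the pattern $B_g(x_i,y_i) = 0$, $B_g(x_j,y_j) = 1$ is unachievable (since $g(x_i) < y_i$ implies $g(x_i) < y_j$), preventing the realization of all $2^n$ subsets. Hence the witness $T$ can be assumed to project to $n$ distinct $x$-coordinates, and the corresponding $U$ is a valid set in the pseudo-dimension definition. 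Beyond this bookkeeping, the lemma is essentially a restatement of definitions.
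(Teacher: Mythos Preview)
The paper does not prove this lemma; it is quoted verbatim as Lemma~10 of \cite{Feldman18} and used as a black box. Your argument is the standard correct proof: the two directions are direct translations between the definitions, and you handle the one genuine subtlety (that a VC-shattered set of pairs must have distinct first coordinates) properly. One small point you could add for completeness is that any shattered set $\{(x_i,y_i)\}$ automatically has $y_i\in(0,1]$, since $g$ takes values in $[0,1]$ and both labels must be realizable at each coordinate; this ensures the thresholds $R$ you extract satisfy the $r_i\in[0,1]$ requirement in the paper's pseudo-dimension definition.
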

We can also relate the VC dimension of a given function class to the computational complexity of the function class, i.e., the time complexity of computing a function in the class.
\begin{lemma}[Theorem 8.14 in \cite{AnthonyB99}]
\label{lem:VC:bound}
Let $h: \R^a \times \R^b \rightarrow \{ 0,1\}$ define the class
\[ \mathcal{H} = \{x \rightarrow h(\theta, x) : \theta \in \R^a\}. \]
Suppose that $h$ can be computed by an algorithm that takes as input the pair $(\theta, x) \in \R^a \times \R^b$ and returns $h(\theta, x)$ after no more than $t$ of the following operations:
\begin{itemize}
\item arithmetic operations $+, -,\times,$ and $/$ on real numbers,
\item jumps conditioned on $>, \ge ,<, \le ,=,$ and $=$ comparisons of real numbers, and
\item output $0,1$.
\end{itemize}
Then the VC dimension of $\mathcal{H}$ is at most $O(a^2 t^2 + t^2 a \log a)$.
\end{lemma}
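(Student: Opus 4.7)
The plan is to prove this classical bound by analyzing the way the algorithm computing $h(\theta, x)$ partitions the parameter space $\R^a$, and then applying a sign-counting theorem such as Warren's (or equivalently the Milnor--Thom bound) to the polynomial conditions that define this partition.

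Suppose $m$ points $x_1, \ldots, x_m \in \R^b$ are shattered by $\mathcal{H}$. For each fixed $x_i$, the algorithm computing $h(\theta, x_i)$ is a decision tree of depth at most $t$ in which each internal node tests the sign of a rational function in $\theta$, and each leaf outputs a value in $\{0,1\}$. Because arithmetic operations (including division, after clearing denominators) can each at most double the degree of intermediate polynomials, the polynomials appearing in comparisons along any branch have degree at most $2^{O(t)}$, and the total number of distinct such polynomials across the entire tree is likewise at most $2^{O(t)}$. Collecting these over all $m$ inputs gives a family of at most $N = m \cdot 2^{O(t)}$ polynomials in $a$ variables, each of degree at most $d = 2^{O(t)}$.

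By Warren's theorem, the number of realizable sign patterns of such a family is at most $(CNd/a)^a$ for an absolute constant $C$, whenever $N \ge a$. Since the full output vector $(h(\theta, x_1), \ldots, h(\theta, x_m))$ is determined by the collective sign pattern, shattering forces at least $2^m$ distinct sign patterns, yielding
\[2^m \le \left(\frac{C\, m \cdot 2^{O(t)}}{a}\right)^a.\]
Taking logarithms and solving the resulting transcendental inequality for $m$ (absorbing a $\log m$ term in the usual way) gives $m = O(a^2 t^2 + t^2 a \log a)$, which is the claimed VC-dimension bound.

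The main technical obstacle is the division operation, which can in principle cause polynomial degrees to blow up uncontrollably if handled naively. To manage this cleanly, one can use the Goldberg--Jerrum auxiliary-variable trick: introduce $t$ new variables $z_1, \ldots, z_t$ representing intermediate computational values, and encode each arithmetic operation as a polynomial equation of degree at most $2$ in the augmented space $\R^{a+t}$. This converts the analysis into one about low-degree polynomials at the cost of a modest dimension increase, after which Warren's theorem again yields the same form of bound. A secondary subtlety is to account for polynomials appearing on all branches of each decision tree rather than just those on a single execution path, since different values of $\theta$ may trigger different execution paths and hence different comparisons.
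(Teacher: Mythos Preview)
The paper does not prove this lemma at all; it simply cites it as Theorem~8.14 of Anthony and Bartlett and uses it as a black box in the proof of Theorem~\ref{thm:learning}. So there is nothing to compare against on the paper's side.

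That said, your sketch is the standard Goldberg--Jerrum argument underlying the cited textbook theorem: represent the algorithm as a branching program whose tests are sign conditions on polynomials in $\theta$, bound the number of connected sign-pattern cells via Warren's theorem, and conclude that the number of shatterable points is polynomially bounded. Your handling of division via auxiliary variables is exactly the trick used in the original proof. One minor point worth tightening: in the naive (no-auxiliary-variable) version, you should be explicit that the count ``$2^{O(t)}$ distinct polynomials'' arises because different execution paths may compute different polynomials at the same program line, not merely because there are many comparison lines; and when you solve $2^m \le (Cm\,2^{O(t)}/a)^a$ you actually obtain $m = O(at)$, which is stronger than the stated $O(a^2t^2 + t^2 a\log a)$, so either the naive route already suffices (and the statement is just quoting a looser form) or you should carry the auxiliary-variable dimension $a+t$ through Warren's bound to match the form given.
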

We can now prove Theorem \ref{thm:learning} by roughly instantiating Lemma \ref{lem:VC:bound} using the computational complexity of any function in the function class $\mathcal{G}$.  
\begin{proof}[Proof of Theorem~\ref{thm:learning}]
The proof essentially follows from the proofs of similar theorems in the sample complexity bound sections of the papers \cite{Feldman18,IzzoSZ21,ErgunFSWZ22,ChenEILNRSWWZ22, chen2022} but we nonetheless repeat it here for completeness. By Theorem \ref{thm:uni:dim} and Corollary \ref{cor:uni:dim}, we observe that it suffices to bound the pseudo-dimension of the class $\mathcal{G} = L \circ \mathcal{F}$. 
By Lemma \ref{lem:PD:to:VC}, the pseudo-dimension of $\mathcal{G}$ is equivalent to the VC dimension of threshold functions defined by $\mathcal{G}$. 
Moreover by Lemma \ref{lem:VC:bound}, the VC dimension of the threshold functions defined by $\mathcal{G}$ is polynomial in the complexity of computing a member of the function class. 

Hence, Lemma \ref{lem:VC:bound} implies that the VC dimension of $B_{\mathcal{G}}$ defined in Lemma \ref{lem:PD:to:VC} is polynomial in the number of arithmetic operations needed to compute the threshold function associated to some $g \in \mathcal{G}$, which is polynomial in $T(n,d)$ by our definition. 
Therefore, the pseudo-dimension of $\mathcal{G}$ is also polynomial in $T(n,d)$ and the desired claim follows. 
\end{proof}
\section{A Simple Learning-Augmented Online Algorithm}
\label{app:simple}
In this section, we revisit a simple and possibly folklore learning-augmented online algorithm for online covering LPs. 
Its robustness ratio is asymptotically as good as the state-of-the-art online algorithm and its consistency ratio is also asymptotically as good as the advice if the advice is feasible. 
We recall that a similar approach also applies to a wide range of learning-augmented online minimization problems, including online SDPs, online LPs and SDPs with box constraints, online submodular cover \cite{gupta2020online}, and online covering with $\ell_q$-norm objectives \cite{shen2020online}.

The idea of the algorithm is to maintain two candidate solutions, the advice $x'$ and the solution obtained by a state-of-the-art online algorithm \cite{gupta2014approximating}. Let $\mathcal{O}$ be a state-of-the-art online algorithm and let $x^{(i)}$ be its solution obtained at round $i$. In the beginning, $x^{(0)}$ is initialized according to the online algorithm $\mathcal{O}$. We follow the online algorithm $\mathcal{O}$ until its objective $c^T x^{(i)}$ is larger than the objective of the advice $c^T x'$. Once the advice is infeasible, we ignore the advice and use the online algorithm. Transitioning either from the online algorithm to the advice or from the advice to the online algorithm pays only a constant factor. Once we transition, to maintain monotonicity of the online solution, we pick the coordinate-wise maximum between the advice and the solution by $\mathcal{O}$. The algorithm works as follows.

\begin{algorithm}[!htb]
\caption{A Simple Algorithm for Learning-Augmented Online Covering LPs} \label{alg:simple}
\begin{algorithmic}[1]
\State $i_a \gets \infty$. \Comment{$i_a$ is the round that we transition to the advice}
\For{$i = 1, 2, ...$} \Comment{each arriving row or constraint}
    \State Update $A$ by adding a new row $i$.
    \State Run $\mathcal{O}$ for round $i$ and obtain $x^{(i)}$.
    \If{$A x' \ge \mathbf{1}$ and $c^T x' \ge c^T x^{(i)}$} \Comment{use the online solution} \label{line:alg-simple-case-1}
        \State{$x \gets x^{(i)}$.}
    \EndIf
    \If{$A x' \ge \mathbf{1}$ and $c^T x' < c^T x^{(i)}$ and $i < i_a$} \Comment{transition to the advice} \label{line:alg-simple-case-2}
        \State $i_a \gets i$. \Comment{(only once at round $i_a$)}
        \For{$j \in [n]$}
            \State $x_j \gets \max\{x'_j, x^{(i-1)}_j\}$.
        \EndFor
    \EndIf
    \If{$A x' \ge \mathbf{1}$ does not hold} \Comment{use only the online solution}
        \If{$i < i_a$} \Comment{no transition to advice} \label{line:alg-simple-case-3}
            \State $x \gets x^{(i)}$.
        \EndIf
        \If{$i \ge i_a$} \Comment{has transitioned to advice} \label{line:alg-simple-case-4}
            \For{$j \in [n]$} \Comment{transition back to the online solution}
                \State $x_j \gets \max\{x'_j, x^{(i)}_j\}$.
            \EndFor
        \EndIf
    \EndIf
\EndFor
\end{algorithmic}
\end{algorithm}

\begin{theorem}
For the learning-augmented online covering LP problem, there exists an online algorithm that generates $x$ such that
\[c^T x \le \min\{O(c^T x'), O(\log k) \opt\}\]
when $x'$ is feasible, and $c^T x \le O(\log k) \opt$ when $x'$ is infeasible. Here, $k \le n$ is the row sparsity of $A$, i.e., the maximum number of non-zero entries of each row.
\end{theorem}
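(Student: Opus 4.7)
The plan is to analyze Algorithm~\ref{alg:simple} while treating the online algorithm $\mathcal{O}$ of \cite{gupta2014approximating} as a black box that maintains a coordinate-wise nondecreasing sequence $x^{(i)}$ satisfying every arrived constraint with $c^T x^{(i)} \le O(\log k)\,\opt$. The returned solution $x$ at each round equals one of $x^{(i)}$, $\max(x', x^{(i-1)})$, or $\max(x', x^{(i)})$ coordinate-wise (and in one situation is simply left unchanged from the previous round). Feasibility will follow immediately in every branch: when we return $x^{(i)}$ it covers the newly arrived constraint by the guarantee on $\mathcal{O}$; whenever we take a max with $x'$, the surrounding branch condition ensures that either $Ax' \ge \mathbf{1}$ holds at round $i$ or the max includes $x^{(i)}$, so in either case the current row is covered. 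Monotonicity of $x$ holds because $x^{(i)}$ is monotone in $i$, and once we transition at the unique round $i_a$, every later output is lower bounded coordinate-wise by $x'$.

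For robustness I would split into the pre- and post-transition regimes. For $i < i_a$, either case 1 or case 3 fires and returns $x = x^{(i)}$, giving $c^T x = c^T x^{(i)} \le O(\log k)\,\opt$. For $i \ge i_a$, the returned $x$ is coordinate-wise at most $x' + x^{(i)}$, so $c^T x \le c^T x' + c^T x^{(i)}$. The key observation is that $i_a$ is defined as the first round where $c^T x' < c^T x^{(i)}$; combined with monotonicity of $c^T x^{(i)}$ in $i$, this yields $c^T x' \le c^T x^{(i_a)} \le c^T x^{(i)}$ for every $i \ge i_a$, and hence $c^T x \le 2 c^T x^{(i)} = O(\log k)\,\opt$.

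For consistency when $x'$ is feasible against the full constraint set, the pre-transition bound $c^T x = c^T x^{(i)} \le c^T x'$ is immediate from the case 1 condition $c^T x' \ge c^T x^{(i)}$. At round $i_a$ itself we have $c^T x \le c^T x' + c^T x^{(i_a - 1)} \le 2 c^T x'$, using that by minimality of $i_a$, at round $i_a - 1$ we still had $c^T x^{(i_a - 1)} \le c^T x'$. Crucially, for $i > i_a$ the advice remains feasible (feasibility of $x'$ is a global property against the full constraint set), so $Ax' \ge \mathbf{1}$ continues to hold and case 4 is never triggered; $x$ stays at $\max(x', x^{(i_a - 1)})$, preserving $c^T x \le 2 c^T x'$ throughout.

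The main subtlety is pinning down the four-branch case analysis and verifying that the algorithm never decreases $x$ as it jumps between branches. Two invariants suffice: that $x$ is coordinate-wise nondecreasing, and that once $i_a$ is set it remains fixed. The latter follows because monotonicity of $c^T x^{(i)}$ in $i$ prevents case 1 from re-firing after the transition, so transition occurs at most once. The former then reduces to checking each branch individually against the previous round's state. A minor additional point is that when $x'$ is infeasible at some $i > i_a$ (which cannot occur under the consistency hypothesis but can occur in general), case 4 restores $x$ to $\max(x', x^{(i)})$, and the robustness argument above still applies. Once these invariants are in place, both claimed bounds follow from the direct inequalities above.
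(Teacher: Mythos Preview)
Your proposal is correct and follows essentially the same approach as the paper: both invoke the $O(\log k)$-competitive algorithm of \cite{gupta2014approximating} as a black box, use monotonicity of $c^T x^{(i)}$ together with the defining inequality $c^T x' < c^T x^{(i_a)}$ at the (unique) transition round, and bound the returned solution by $c^T x' + c^T x^{(i)}$ after the transition. Your organization by pre-/post-transition and by robustness versus consistency is slightly cleaner than the paper's four-case walk-through, but the underlying inequalities are identical.
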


\begin{proof}
We use Algorithm \ref{alg:simple} with the $O(\log k)$-competitive online algorithm for $\mathcal{O}$.

\begin{theorem} [\cite{gupta2014approximating}]
There exists an $O(\log k)$-competitive algorithm for the online covering LP problem.
\end{theorem}

Algorithm \ref{alg:simple} considers four possible cases.
\begin{enumerate}
    \item The advice is feasible but $\mathcal{O}$ is better (line \ref{line:alg-simple-case-1}).
    \item The advice is feasible and it is better than $\mathcal{O}$ (line \ref{line:alg-simple-case-2} or $x$ is not updated).
    \item The advice is infeasible and never used (line \ref{line:alg-simple-case-3}).
    \item The advice is infeasible and was used when it was feasible (line \ref{line:alg-simple-case-4}).
\end{enumerate}

Clearly, Algorithm \ref{alg:simple} always returns a feasible solution $x$ and $x$ is updated in a non-decreasing manner.

We first consider the case when $x'$ is feasible. For the first case, we never use the advice, so
\[c^T x \le O(\log k)\opt \le \min\{O(c^T x'), O(\log k) \opt\}.\]
For the second case, either the advice becomes better than the online solution at round $i$ or $x$ is not updated because the advice becomes better than the online solution at a previous round $i_a < i$. For the previous case, by the design of Algorithm \ref{alg:simple}, we have that
\[c^T x \le c^T (x' + x^{(i-1)}) \le 2 c^T x' \le \min\{O(c^T x'), O(\log k) \opt\}.\]
For the later case, $x$ is not updated after round $i_a$ so the claim still holds.

The remaining cases are when $x'$ is infeasible. For the third case, we never use the advice, so $c^T x \le O(\log k)\opt$. For the last case, let $\opt_i$ denote the value of $\opt$ at round $i$. We know that the advice was feasible at some previous round $i' \ge i_a$ but became infeasible at round $i'+1 \le i$. Round $i'$ belongs to the second case, so at that round
\[c^T x \le 2 c^T x'.\]
In round $i'+ 1$, we transition back from the advice to the online algorithm, so
\[c^T x \le 2 c^T x' + O(\log k) \opt_{i'+1} \le \min\{O(c^T x'), O(\log k) \opt_{i'+1}\} \le O(\log k) \opt_{i'+1}.\]
In round $i$, $c^T x'$ does not change but $\opt$ changes. By the property of the online algorithm $\mathcal{O}$, we always have
\[c^T x \le  O(\log k) \opt_i.\]
\end{proof}


\end{document}